\documentclass{article}
\usepackage[a4paper]{geometry}
\usepackage{amssymb,amsmath,amsthm}
\usepackage{graphics,graphicx,color}
\usepackage{microtype,verbatim,dsfont}
\usepackage{xspace,enumerate}
\usepackage{subcaption}
\usepackage[pdfpagelabels,colorlinks,citecolor=blue,linkcolor=blue,urlcolor=blue]{hyperref}
\usepackage[capitalise]{cleveref}

\theoremstyle{plain}
\newtheorem{theorem}{Theorem}
\newtheorem{definition}{Definition}
\newtheorem{lemma}[theorem]{Lemma}
\newtheorem{corollary}[theorem]{Corollary}

\newtheorem{property}[theorem]{Property}

\Crefname{observation}{Observation}{Observations}
\Crefname{algorithm}{Algorithm}{Algorithms}
\Crefname{algocf}{Algorithm}{Algorithms}
\Crefname{section}{Section}{Sections}
\Crefname{lemma}{Lemma}{Lemmata}
\Crefname{claim}{Claim}{Claims}
\Crefname{property}{Property}{Properties}
\Crefname{enumi}{Condition}{Conditions}
\Crefname{figure}{Fig.}{Figs.}

\newcommand{\skeleton}{planar skeleton\xspace}
\newcommand{\map}[1]{${#1}$-map\xspace}
\newcommand{\hfmap}[1]{hole-free \map{#1}}

\newcommand{\framed}[1]{\mbox{${#1}$-framed}\xspace}

\newcommand{\pframed}[1]{partial \framed{#1}}

\title{Book Embeddings of Nonplanar Graphs\\with Small Faces in Few Pages}

\author{Michael A. Bekos$^1$, Giordano Da~Lozzo$^2$, Svenja Griesbach$^3$,\\Martin Gronemann$^3$, Fabrizio Montecchiani$^4$, Chrysanthi Raftopoulou$^5$
\\
\medskip
\\
\small$^1$Wilhelm-Schickhard-Institut f\"ur Informatik, Universit\"at T\"ubingen, Germany\\
\small\texttt{bekos@informatik.uni-tuebingen.de}
\\
\small$^2$Department of Engineering, Roma Tre University, Italy\\
\small\texttt{giordano.dalozzo@uniroma3.it}
\\
\small$^3$Department of Mathematics and Computer Science, University of Cologne, Germany\\
\small\texttt{sgriesba@smail.uni-koeln.de, gronemann@informatik.uni-koeln.de}
\\
\small$^4$Department of Engineering, Universit\'a degli Studi di Perugia, Italy\\
\small\texttt{fabrizio.montecchiani@unipg.it}
\\
\small$^5$School of Applied Mathematical \& Physical Sciences, NTUA, Greece\\
\small\texttt{crisraft@mail.ntua.gr}
}
\date{}

\begin{document}

\maketitle

\begin{abstract}
An embedding of a graph in a book, called \emph{book embedding}, consists of a linear ordering of its vertices along the spine of the book and an assignment of its edges to the pages of the book, so that no two edges on the same page cross.
The \emph{book thickness} of a graph is the minimum number of pages over all its book embeddings. For planar graphs, a fundamental result is due to Yannakakis, who proposed an algorithm to compute embeddings of planar graphs in books with four pages. Our main contribution is a technique that generalizes this result to a much wider family of nonplanar graphs, which is characterized by a biconnected skeleton of crossing-free edges whose faces have bounded degree. Notably, this family includes all $1$-planar, all optimal $2$-planar, and all \map{k} (with bounded $k$) graphs as subgraphs. We prove that this family of graphs has bounded book thickness, and as a corollary, we obtain the first constant upper bound for the book thickness of optimal $2$-planar and \map{k} graphs.
\end{abstract}

\section{Introduction}
\label{sec:introduction}

Book embeddings of graphs form a well-known topic in topological graph theory that has been a fruitful subject of intense research over the years, with seminal results dating back to the 70s~\cite{Oll73}. 
In a \emph{book embedding} of a graph $G$, the vertices of $G$ are restricted to a line, called the~\emph{spine} of the book, and the edges of $G$ are assigned to different half-planes delimited by the spine, called \emph{pages} of the book. From a combinatorial point of view, computing a book embedding of a graph corresponds to finding a \emph{linear ordering} of its vertices and a partition of its edges, such that no two edges in the same part cross; see \cref{fig:intro}. The \emph{book thickness} (also known as  \emph{stack~number} or \emph{page~number}) of a graph is the minimum number of pages required by any of its book embeddings, while the \emph{book thickness} of a family of graphs $\mathcal G$ is the maximum book thickness of any graph $G$ that belongs to $\mathcal G$.

\begin{figure}[t]
    \centering
    \begin{subfigure}{.22\textwidth}
	\centering
	\includegraphics[width=\textwidth,page=1]{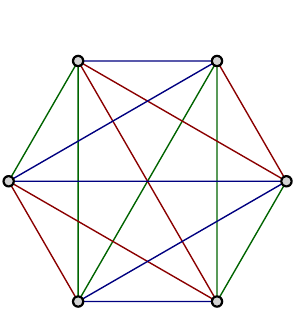}
	\end{subfigure}
	\hfil
	\begin{subfigure}{.22\textwidth}
	\centering
	\includegraphics[width=\textwidth,page=2]{figures/intro}
	\end{subfigure}
    \caption{Graph $K_6$ and a book embedding of it with the minimum of three pages.}
    \label{fig:intro}
\end{figure}

Book embeddings were originally motivated by the design of VLSI circuits~\cite{CLR87,DBLP:journals/tc/Rosenberg83}, but they also find applications, among others, in sorting permutations~\cite{DBLP:conf/stoc/Pratt73,DBLP:journals/jacm/Tarjan72}, compact graph encodings~\cite{DBLP:conf/focs/Jacobson89,DBLP:journals/siamcomp/MunroR01}, graph drawing~\cite{DBLP:journals/jgaa/BiedlSWW99,DBLP:conf/compgeom/BinucciLGDMP19,DBLP:journals/jal/Wood02}, and computational origami~\cite{DBLP:conf/gd/AkitayaDHL17}; for a more complete list, we point the reader to~\cite{DBLP:journals/jct/DujmovicMW17}. Unfortunately, determining the book thickness of a graph turns out to be an NP-complete problem even for maximal planar graphs~\cite{Wig82}. This negative result has motivated a large body of research devoted to the study of upper bounds on the book thickness of meaningful graph families. 

In this direction, there is a very rich literature concerning planar graphs. The most notable result is due to Yannakakis, who back in 1986 exploited a peeling-into-levels technique (a flavor of it is given in \cref{sec:planaralgorithm}) to prove that the book thickness of any planar graph is at most~$4$~\cite{DBLP:conf/stoc/Yannakakis86,DBLP:journals/jcss/Yannakakis89}, improving uppon a series of previous results~\cite{DBLP:conf/stoc/BussS84,DBLP:conf/focs/Heath84,Istrail1988a}. Even though it is not yet known whether the book thickness of planar graphs is $3$ or $4$, there exist several improved bounds for particular subfamilies of planar graphs. 

Bernhart and Kainen~\cite{DBLP:journals/jct/BernhartK79} showed that the book thickness of a graph $G$ is $1$ if and only if $G$ is outerplanar, while its book thickness is at most $2$ if and only if $G$ is \emph{subhamiltonian}, that is, $G$ is a subgraph of a Hamiltonian planar graph. In particular, several subfamilies of planar graphs are known to be subhamiltonian, e.g., 4-connected planar graphs~\cite{NC08}, planar graphs without separating triangles~\cite{DBLP:journals/appml/KainenO07}, Halin graphs~\cite{DBLP:journals/mp/CornuejolsNP83}, series-parallel graphs~\cite{DBLP:conf/cocoon/RengarajanM95}, bipartite planar graphs~\cite{DBLP:journals/dcg/FraysseixMP95}, planar graphs of maximum degree~4~\cite{DBLP:journals/algorithmica/BekosGR16}, triconnected planar graphs of maximum degree~5~\cite{DBLP:conf/esa/0001K19}, and maximal planar graphs of maximum degree~6~\cite{Ewald1973}. In this plethora of results, we should also mention that planar $3$-trees have book thickness $3$~\cite{DBLP:conf/focs/Heath84} and that general (i.e., not necessarily triconnected) planar graphs of maximum degree~$5$ have book thickness~at~most~$3$~\cite{DBLP:journals/corr/GuanY2018}.

In contrast to the planar case, there exist far fewer results for non-planar graphs. Bernhart and Kainen first observed that the book thickness of a graph can be linear in the number of its vertices; for instance, the book thickness of the complete graph $K_n$ is $\lceil n/2 \rceil$~\cite{DBLP:journals/jct/BernhartK79}. Improved bounds are usually obtained by meta-theorems exploiting standard parameters of the graph. In particular, Malitz proved that if a graph has $m$ edges, then its book thickness is $O(\sqrt{m})$~\cite{DBLP:journals/jal/Malitz94}, while if its genus is $g$, then its book thickness is $O(\sqrt{g})$~\cite{DBLP:journals/jal/Malitz94a}. Also, Dujmovic and Wood~\cite{DBLP:journals/dcg/DujmovicW07} showed that if a graph has treewidth $w$, then its book thickness is at most $w+1$, improving an earlier linear bound by Ganley and Heath~\cite{DBLP:journals/dam/GanleyH01}. It is also known that all graphs belonging to a minor-closed family have bounded book thickness~\cite{Bla03}, while the other direction is not necessarily true. As a matter of fact, the family of 1-planar graphs  is not closed under taking minors~\cite{DBLP:books/daglib/0030491}, but it has bounded book thickness~\cite{DBLP:journals/corr/AlamBK15,DBLP:journals/algorithmica/BekosBKR17}. We recall that a graph is \emph{$h$-planar} (with $h \geq 0$), if it can be drawn in the plane such that each edge is crossed at most $h$ times; the reader is referred, e.g., to~\cite{DBLP:journals/csur/DidimoLM19,DBLP:journals/csr/KobourovLM17} for recent surveys. 

Notably, the approaches presented in~\cite{DBLP:journals/corr/AlamBK15,DBLP:journals/algorithmica/BekosBKR17} form the first non-trivial extensions of the above mentioned peeling-into-levels technique by Yannakakis~\cite{DBLP:conf/stoc/Yannakakis86,DBLP:journals/jcss/Yannakakis89} to graphs that are not planar. Both approaches exploit an important property of 3-connected 1-planar graphs, namely, they can be augmented and drawn so that all pairs of crossing edges are ``caged'' in the interior of degree-4 faces of a \emph{\skeleton}, which is defined as the graph consisting of all vertices and of all crossing-free edges of the drawing~\cite{R65}. A similar property also holds for the optimal $2$-planar graphs. Namely, each graph in this family admits a drawing whose \skeleton is simple, biconnected, and has only degree~$5$ faces, each containing five crossing edges~\cite{DBLP:conf/compgeom/Bekos0R17}. The book thickness of these graphs, however, has not been studied yet; the best-known upper bound of $O(\log{n})$ is derived from the corresponding one for general $h$-planar graphs~\cite{DBLP:journals/jgaa/DujmovicF18}.

\smallskip\noindent\textbf{Our contribution.} We present a technique that further generalizes the result by Yannakakis to a much wider family of non-planar graphs, called \pframed{k} graphs, which is general enough to include all $1$-planar graphs and all optimal $2$-planar graphs. A graph is \emph{\framed{k}}, if it admits a drawing having a simple biconnected \skeleton, whose faces have degree at most $k \geq 3$, and whose crossing edges are in the interiors of these faces. A  \emph{\pframed{k}} graph is a subgraph of a \framed{k} graph. 
Clearly, the book thickness of \pframed{k} graphs is lower bounded by $\lceil k/2 \rceil$, as they may contain cliques of size $k$~\cite{DBLP:journals/jct/BernhartK79}. In this work, we present an upper bound on the book thickness of \pframed{k} graphs that depends linearly {\em only} on $k$ (but not on $n$). Our main result is~as~follows.

\begin{theorem}\label{th:main}
The book thickness of a \pframed{k} graph is at most $6\lceil \frac{k}{2} \rceil + 5$.
\end{theorem}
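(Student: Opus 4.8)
The plan is to first reduce to the \framed{k} case, then embed the \skeleton with Yannakakis' algorithm, and finally distribute the crossing edges that lie inside the faces over a constant number of extra pages. Since book thickness cannot increase when passing to a subgraph, it suffices to bound the book thickness of a \framed{k} graph $G$; so I would fix a drawing realizing the frame and let $S$ be its \skeleton. As $S$ is planar, Yannakakis' algorithm~\cite{DBLP:conf/stoc/Yannakakis86,DBLP:journals/jcss/Yannakakis89} yields a book embedding of $S$ in four pages, and I would adopt its vertex ordering $\sigma$ as the spine ordering for the whole of $G$. I expect one auxiliary page to be needed to accommodate the edges that become ``long'' when the level cycles are cut open and linearised, accounting for the additive constant (at most five pages for $S$). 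The ingredient I would extract from the peeling-into-levels construction is its layered structure: the vertices of $S$ split into levels $L_0, L_1, \dots$, every edge of $S$ joins two vertices of the same level or of two consecutive levels, and---crucially---every face of $S$ has all its (at most $k$) boundary vertices confined to two consecutive levels $L_i, L_{i+1}$.

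Next I would handle the crossing edges, which by definition all lie in the interiors of faces of $S$. Fix a face $f$ of degree at most $k$; its boundary carries at most $k$ vertices, occupying some positions along the spine. Regardless of their order, these vertices are in convex position on the spine, so the crossing edges inside $f$---a subgraph of $K_k$---decompose into $\lceil k/2 \rceil$ pairwise non-crossing pages, exactly as in the complete-graph bound of Bernhart and Kainen~\cite{DBLP:journals/jct/BernhartK79}. Thus a single face can be served with $\lceil k/2 \rceil$ pages.

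The hard part will be reusing these pages across all faces simultaneously: chords of two distinct faces may interleave along $\sigma$ and hence cross, even though each face is internally planar on $\lceil k/2 \rceil$ pages. To control this, I would group the faces so that any two faces in the same group contribute pairwise non-interleaving chords, and then give each group its own block of $\lceil k/2 \rceil$ pages. Faces whose level-pairs are disjoint and far apart occupy disjoint spine intervals and never conflict; the only conflicts arise between faces that share a level, through the radial (between-level) chords and the within-level chords incident to that shared level. Exploiting that each face spans only two consecutive levels, I would colour the faces by the residue of their lower level modulo a small constant---separating faces that could meet at a common level---and refine each colour class according to the side of the spine cut on which a chord's endpoints fall, so as to neutralise the wrap-around chords created when each level cycle is linearised.

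I expect six groups to suffice, giving $6\lceil k/2 \rceil$ pages for the crossing edges, which together with the at most five pages for $S$ yields the claimed bound $6\lceil k/2 \rceil + 5$. The technical heart---and the step I expect to be most delicate---is proving that six groups really do eliminate all interleavings: the relative spine order of the radial chords of faces sharing a level depends on where each level cycle is cut and on how consecutive levels are concatenated, so one must verify that $\sigma$ can be chosen consistently enough across all levels for the colouring to behave as required. Reconciling this global consistency with the ordering imposed by Yannakakis' algorithm is where I anticipate the real work, whereas the per-face argument and the reduction to subgraphs are routine.
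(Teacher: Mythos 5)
There is a genuine gap, and it sits exactly where you anticipate ``the real work'': nothing in your proposal actually controls conflicts between crossing edges of \emph{distinct faces lying in the same two-level band}. Your six-group scheme colours faces by the residue of their lower level and by the side of the spine cut, but two faces both incident to levels $L_i$ and $L_{i+1}$ receive the same level-based colour, and their chords can interleave arbitrarily under a generic spine ordering; no bounded refinement by cut-sides repairs this, since the number of mutually conflicting same-band faces is unbounded in general (think of many faces sharing a common $L_i$-vertex whose $L_{i+1}$-vertices get shuffled along the spine). A second, related problem is your starting point: Yannakakis' algorithm requires an internally-triangulated plane graph, and the \skeleton of a \framed{k} graph cannot be triangulated without destroying the framed structure (the faces contain the crossing edges). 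So you cannot simply ``adopt its vertex ordering $\sigma$'': the paper explicitly identifies this as the main obstruction, and its entire Section 3.1 is devoted to building a bespoke ordering $\rho$ --- via the cactus blocks of $C_1(G)$, the face discovery order $\lambda(\mathcal{F})$, dominators and prime vertices --- for which the conflict structure becomes tractable. Concretely, the paper first extracts the \emph{dominator} edges and shows they fit on dedicated backward/forward pages (\cref{lem:backward,lem:forward}); only then does it prove that the remaining same-band conflicts form a \emph{conflict graph} $\mathcal{C}(G)$ that is one-page embeddable, hence outerplanar and $3$-colourable (\cref{lem:one-page,lem:3-coloring}). That $3$-colouring is the missing factor $3$ in the paper's $6 = 3 \times 2$; your residue colouring only supplies the parity factor $2$ between bands, which the paper also uses (the page sets $R^j, B^j, G^j$ for $j \in \{0,1\}$).

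Your per-face observation ($\lceil k/2 \rceil$ pages suffice for the clique inside one face under \emph{any} spine order) is correct and is used verbatim by the paper, and the reduction to \framed{k} graphs and the two-consecutive-levels property of faces are also sound. But the inductive consistency you defer --- how the two-level bands are concatenated so that pages can be reused --- is not a routine verification either: the paper needs the machinery of \emph{good} book embeddings with invariants P.1--P.7 (block orders, $j$-delimitation, and a permutation-based reuse of the five pages $p_0,\ldots,p_4$ across levels, Property P.7e) to make the recursion close, and it also books the skeleton edges differently from you (non-dominator skeleton edges travel with the face cliques rather than on the Yannakakis pages). As written, your proposal reproduces the easy outer shell of the argument while leaving both load-bearing components --- the construction of an ordering whose conflict graph has bounded chromatic number, and the multi-level page-reuse invariants --- unproven.
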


Note that the \pframed{3} graphs are exactly the (simple) planar graphs. Also, it is known that \mbox{$3$-connected} $1$-planar graphs are \pframed{4}~\cite{DBLP:conf/gd/AlamBK13}, while general \mbox{$1$-planar} graphs can be augmented to \framed{8}. 
In fact, every two crossing edges can be caged inside a cycle of length (at most) $8$ passing through the endpoints of such crossing edges; the faces of the resulting \skeleton that do not contain any crossing edge can be triangulated. Hence, \cref{th:main} implies constant upper bounds for the book thickness of these families of graphs. Since  optimal $2$-planar graphs are \framed{5}, the next corollary guarantees the first constant upper bound on the book thickness of this~family.

\begin{corollary}\label{cor:optimal-2-planar}
The book thickness of an optimal $2$-planar graph is at most $23$. 
\end{corollary}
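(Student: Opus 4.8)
The plan is to derive \cref{cor:optimal-2-planar} directly from \cref{th:main} by identifying the relevant frame parameter for optimal $2$-planar graphs and plugging it into the book-thickness bound. As stated in the excerpt just before the corollary, optimal $2$-planar graphs are \framed{5}: each such graph admits a drawing whose \skeleton is simple, biconnected, and has faces of degree exactly $5$, with all crossing edges caged inside these pentagonal faces. This structural fact is precisely the hypothesis needed to invoke \cref{th:main} with $k = 5$, and since being \framed{5} trivially implies being \pframed{5} (a graph is a subgraph of itself), the main theorem applies.

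Concretely, the first and only substantive step is to instantiate the formula $6\lceil k/2\rceil + 5$ at $k = 5$. We compute $\lceil 5/2 \rceil = 3$, hence the bound becomes $6 \cdot 3 + 5 = 23$. This yields the claimed constant upper bound of $23$ on the book thickness of every optimal $2$-planar graph. I would state the computation inline to make the corollary self-contained.

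There is essentially no obstacle here, since all the heavy lifting is done by \cref{th:main}; the corollary is a one-line specialization. The only point requiring a sentence of justification is the appeal to the established structural characterization of optimal $2$-planar graphs, namely that they admit a \frames{5} in the sense required by our definition of \framed{k} graphs (simple, biconnected \skeleton with degree-$5$ faces containing the crossing edges), which is guaranteed by the result cited in the paragraph preceding the corollary. Once this characterization is invoked, the bound follows immediately.

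\begin{proof}
As recalled above, every optimal $2$-planar graph admits a drawing whose \skeleton is simple, biconnected, and has only faces of degree $5$, each caging the crossing edges in its interior. Hence every optimal $2$-planar graph is \framed{5}, and in particular \pframed{5}. Applying \cref{th:main} with $k = 5$, its book thickness is at most $6\lceil \tfrac{5}{2}\rceil + 5 = 6\cdot 3 + 5 = 23$.
\end{proof}
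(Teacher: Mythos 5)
Your proof is correct and follows exactly the paper's (implicit) argument: the paper likewise derives \cref{cor:optimal-2-planar} from the cited fact that optimal $2$-planar graphs are \framed{5}~\cite{DBLP:conf/compgeom/Bekos0R17} together with \cref{th:main} instantiated at $k=5$, giving $6\lceil 5/2\rceil + 5 = 23$. Nothing is missing; your remark that \framed{5} trivially implies \pframed{5} is a fine (if pedantic) touch.
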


\noindent More in general, each \pframed{k} graph  is $h$-planar for $h=(\frac{k-2}{2})^2$, and hence for this family of $h$-planar graphs we prove that the book thickness is $O(\sqrt{h})$, while the best-known upper bound for general $h$-planar graphs is $O(h\log{n})$~\cite{DBLP:journals/jgaa/DujmovicF18}.

\smallskip

Besides $h$-planar graphs, another well-known generalization of planarity are the \emph{map graphs}, introduced by Chen, Grigni, and Papadimitriou~\cite{DBLP:journals/jacm/ChenGP02}. Roughly speaking, a \emph{\map{k} graph} is one whose vertices are in correspondence with a set of regions in the sphere (possibly not covering its entire surface) and whose edges correspond to boundary intersections between pair of regions such that at most $k$ regions meet at the same point (see \cref{sec:map} for a formal definition). As map graphs find applications in graph drawing, circuit board design and topological inference problems~\cite{DBLP:conf/soda/ChenHK99}, they have been extensively studied in the literature, in particular in terms of characterization and recognition~\cite{DBLP:journals/dam/Brandenburg19,DBLP:journals/algorithmica/Brandenburg19,DBLP:journals/jacm/ChenGP02,DBLP:journals/algorithmica/ChenGP06,DBLP:journals/disopt/MnichRS18,DBLP:conf/focs/Thorup98}. For instance, it is known that planar graphs are the \map{2} graphs, and that the \map{4} graphs are exactly those $1$-planar graphs that have a $1$-planar drawing $\Gamma$ such that each pair of crossing edges is caged in a face of the \skeleton of $\Gamma$~\cite{DBLP:journals/algorithmica/Brandenburg19}.  We prove that every \map{k} graph is a \pframed{2k} graph (\cref{th:map}), which together with~\cref{th:main} yield the first nontrivial upper bound for the book thickness of $k$-map graphs. On the other hand, one can easily show that every \pframed{k} graph is a subgraph of a \map{k} graph.

\begin{corollary}\label{co:map}
The book thickness of a \map{k} graph is at most $6k + 5$.
\end{corollary}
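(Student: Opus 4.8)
The plan is to derive \cref{co:map} as an immediate composition of the structural result \cref{th:map} with the main book-thickness bound \cref{th:main}. The whole content of the statement is carried by these two prior results; what remains for the corollary is only to chain them together and to check that the framing parameter and the resulting arithmetic line up. Thus I would not attempt any new drawing or embedding argument here, but instead set up the composition carefully.

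First I would invoke \cref{th:map}, which guarantees that every \map{k} graph is a \pframed{2k} graph. This is the essential bridge: it places an arbitrary \map{k} graph inside the family governed by the main theorem, albeit with framing parameter $2k$ rather than $k$. Next I would apply \cref{th:main} to this \pframed{2k} graph. Since \cref{th:main} bounds the book thickness of a \pframed{k'} graph by $6\lceil k'/2 \rceil + 5$, substituting $k' = 2k$ gives the bound $6\lceil 2k/2 \rceil + 5$. Finally I would simplify the ceiling: because $k$ is a positive integer, $\lceil 2k/2 \rceil = \lceil k \rceil = k$, so the bound collapses cleanly to $6k + 5$, matching the claim with no slack introduced.

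The main obstacle is not located in the corollary itself, where the composition is routine, but in the prerequisite \cref{th:map}, which must produce from a \map{k} representation a drawing whose \skeleton is simple and biconnected, whose faces have degree at most $2k$, and whose crossing edges lie in the interiors of those faces. The only delicate point at the level of this corollary is to confirm that the parameter delivered by \cref{th:map} is exactly $2k$: an off-by-one there (e.g.\ $2k-1$ or $2k+1$) would not disturb the final constant thanks to the ceiling, but for the bound to read precisely $6k+5$ one wants the parameter $2k$ so that $\lceil 2k/2\rceil = k$ holds with equality. With that parameter confirmed, the corollary follows directly.
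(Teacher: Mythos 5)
Your proposal is correct and coincides with the paper's own derivation: \cref{co:map} is obtained exactly by composing \cref{th:map} (every \map{k} graph is \pframed{2k}) with \cref{th:main} applied with framing parameter $2k$, giving $6\lceil 2k/2\rceil + 5 = 6k+5$. Your parameter check is also consistent with the paper, since \cref{th:map} delivers precisely $2k$ and $2k \geq 3$ holds for every admissible $k$, so \cref{th:main} applies without further adjustment.
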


\smallskip\noindent\textbf{Paper organization.} In \cref{se:preliminaries}, we give basic definitions and notation. \cref{sec:planaralgorithm} is devoted to the proof of \cref{th:main}: We start by recalling the peeling-into-level decomposition, and we proceed with an inductive proof based on the resulting leveling of the graph. The base case is described in \cref{ssec:two-levels} and corresponds to graphs consisting of two levels only, while the inductive case is described in \cref{ssec:multiple-levels} and deals with general (i.e., multi-level) graphs. The proof of \cref{th:map} is given in \cref{sec:map}. Finally, \cref{sec:open} contains conclusions and open problems that stem from our research.

\section{Preliminaries}\label{se:preliminaries}


\smallskip\noindent{\textbf{Drawings and planar embeddings.}} A graph is \emph{simple}, if it contains neither self-loops nor parallel edges. A \emph{drawing} of a graph $G$ is a mapping of the vertices of $G$ to distinct points of the plane, and of the edges of $G$ to Jordan arcs connecting their corresponding endpoints. A drawing is \emph{planar}, if no two edges intersect, except possibly at a common endpoint. A graph is \emph{planar}, if it admits a planar drawing. A planar drawing partitions the plane into topologically connected regions, called \emph{faces}. The infinite region is called the \emph{unbounded face}; any other face is a \emph{bounded face}. The \emph{degree} of a face is the number of occurrences of its edges encountered in a clockwise traversal of its boundary (counted with multiplicity). Note that if $G$ is biconnected, then each of its faces is bounded by a simple cycle. A \emph{planar embedding} of a planar graph is an equivalence class of topologically-equivalent (i.e., isotopic) planar drawings. A planar graph with a given planar embedding is a \emph{plane graph}.
	
\begin{figure}[t]
    \centering
	\includegraphics[width=.3\textwidth,page=1]{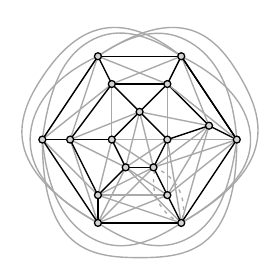}
    \caption{%
    A drawing of a \framed{6} graph, whose crossing-free (crossing) edges are black (gray).}
    \label{fig:kframe}
\end{figure}

\smallskip\noindent{\textbf{\framed{\mathbf{k}} graphs.}}
Let $\Gamma$ be a drawing of a graph $G$. The \emph{\skeleton} $\sigma(G)$ of $G$ in $\Gamma$ is the plane subgraph of $G$ induced by the crossing-free edges of $G$ in $\Gamma$ (where the embedding of $\sigma(G)$ is the one induced by $\Gamma$). The edges of $\sigma(G)$ are called \emph{crossing-free}, while the edges that belong to $G$ but not to $\sigma(G)$ are \emph{crossing} edges. A \emph{\framed{k} drawing} of a graph is one such that its crossing-free edges determine a \skeleton, which is simple, biconnected, spans all the vertices, and has faces of degree at most $k \geq 3$. A graph is \emph{\framed{k}}, if it admits a \framed{k} drawing; refer to \cref{fig:kframe}. A \emph{\pframed{k}} graph is a subgraph of a \framed{k} graph. 
Clearly, if a \framed{k} graph has book thickness at most $b$, then the book thickness of any of its subgraphs is at most $b$. Thus, in the remainder of the paper, we will only consider \framed{k} graphs. Further, w.l.o.g., we will also assume that each pair of vertices that belongs to a face $f$ of $\sigma(G)$ is connected either by a crossing-free edge (on the boundary of $f$) or by a crossing edge (drawn inside $f$). In other words, the vertices on the boundary of $f$ induce a clique of size at most $k$. Under this assumption, graph $G$ may  contain parallel crossing edges connecting the same pair of vertices, but drawn in the interior of different faces of $\sigma(G)$;  see, e.g., the dashed edges of~\cref{fig:kframe}.

\smallskip\noindent{\textbf{Book embeddings.}} A \emph{book embedding} of a graph $G$ consists of a linear ordering $\prec$ of the vertices of $G$ along a line, called the \emph{spine} of the book, and an assignment of the edges of $G$ to different half-planes delimited by the spine, called \emph{pages} of the book, such that no two edges of the same page \emph{cross}, that is, no two edges $(u,v)$ and $(w,z)$ of the same page with $u \prec v$ and $v \prec w$ are such that $u \prec w \prec v \prec z$. We further say that $(u,v)$ and $(w,z)$ of the same page with $u \prec v$ and $v \prec w$ \emph{nest}, if $u \prec w \prec z \prec v$. The \emph{book thickness} of $G$ is the minimum integer $k$, such that $G$ has a book embedding on $k$ pages.

\section{Proof of Theorem 1}
\label{sec:planaralgorithm}

Our approach adopts some ideas from the seminal work by Yannakakis on book embeddings of planar graphs.
In particular, we refer to the algorithm which embeds any (internally-triangulated) plane graph in a book with five pages~\cite{DBLP:journals/jcss/Yannakakis89}, not four.
The main challenges of our generalization are posed by the crossing edges and by the fact that we cannot augment the input graph so that its underlying \skeleton is internally-triangulated.
In the following, we explain the basic ideas of Yannakakis' algorithm and recall basic definitions and properties from~\cite{DBLP:journals/jcss/Yannakakis89}, which we generalize and exploit to introduce new ones.

Our technique is based on the so-called \emph{peeling-into-levels} decomposition. Let $G$ be an $n$-vertex \framed{k} graph with a \framed{k} drawing $\Gamma$. We classify the vertices of $G$ as follows: 
$(i)$~vertices on the unbounded face of $\sigma(G)$ are at level $0$, and
$(ii)$~vertices that are on the unbounded face of the subgraph of $\sigma(G)$ obtained by deleting all vertices of levels $\leq i-1$ are at level $i$ ($0 < i < n$); see, e.g., \cref{fig:level-partition}.
Denote by $\sigma_i(G)$ the subgraph of $\sigma(G)$ induced by the vertices of $L_i$. Observe that $\sigma_i(G)$ is outerplane, but not necessarily connected.
Next, we consider $\sigma_i(G)$ and delete any edge that is not incident to the unbounded face. The resulting spanning subgraph of $\sigma_i(G)$ is denoted by $C_i(G)$.
By definition, each connected component of $C_i(G)$ is a cactus. Also, the only edges that belong to $\sigma_i(G)$ but not to $C_i(G)$ are the chords of $\sigma_i(G)$.
Finally, we denote by $G_i$ the subgraph of $G$ induced by the vertices of $L_0 \cup \ldots \cup L_i$ containing neither chords of $\sigma_i(G)$ nor the crossing edges that are in the interior of the unbounded face of $\sigma(G)$.

\begin{figure}[t!]
    \centering
	\includegraphics[width=.3\textwidth,page=1]{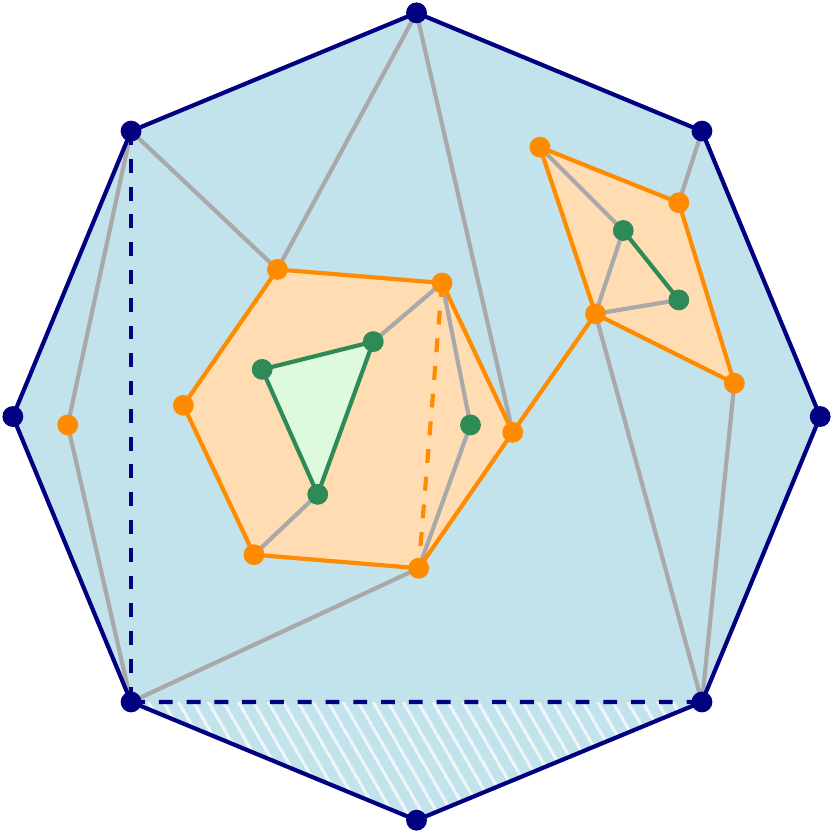}
    \caption{%
    The peeling-into-levels decomposition of an \framed{8} graph without its crossing edges.
    The vertices and level-edges of level $L_0$ ($L_1; L_2$, resp.) are blue (orange; green, resp.) and induce $\sigma_0(G)$ ($\sigma_1(G); \sigma_2(G)$, resp.).
    Chords are drawn dashed; binding edges are drawn gray.
    The blue (orange; green, resp.) faces are the intra-level faces of $\sigma_1(G)$ ($\sigma_2(G); \sigma_3(G)$, resp.).
    Graph $\sigma_0(G)$ ($\sigma_1(G); \sigma_2(G)$, resp.) without the dashed chords forms $C_0(G)$ ($C_1(G); C_2(G)$, resp.).
    The striped blue face is an intra-level face of $\sigma_1(G)$, whose boundary exists exclusively of $L_0$-level edges.    
    }
    \label{fig:level-partition}
\end{figure}

Consider an edge $e$ that belongs to $\sigma(G)$.
If the endpoints of $e$ are assigned to the same level, $e$ is a \emph{level} edge; otherwise, $e$ connects vertices of consecutive levels and is called a \emph{binding} edge; see \cref{fig:level-partition}.
By the definition of the level-partition, there is no edge $e \in E$, that connects two vertices of levels $i$ and $j$, such that $\vert i-j\vert > 1$. 
Another consequence of the level-partition is that any vertex of level $i+1$ lies in the interior of a cycle of level $i$. 
Next, we give a characterization for bounded faces of $\sigma(G)$. 
A bounded face of $\sigma(G)$ is an \emph{intra-level face} of $\sigma_i(G)$ if it is incident to at least one vertex of $L_{i-1}$ but to no vertex of $L_{i-2}$.
We denote by $\mathcal{F}_i$ the set of all the intra-level faces of $\sigma_i(G)$.
By definition, the unbounded face of $\sigma_i(G)$ is not an intra-level face.
Also, each intra-level face of $\sigma_i(G)$ has either at least one binding edge between $L_{i-1}$ and $L_i$ on its boundary, or it consists exclusively of edges of level $L_{i-1}$.

\smallskip\noindent\textbf{Overview.} We give an short overview of how our algorithm embeds a \framed{k} graph $G$ with a given \framed{k} drawing $\Gamma$ on $6\cdot \left\lceil \frac{k}{2} \right\rceil +5$ pages.
In a high level description, we will inductively compute a book embedding of $G_{i+1}$, assuming that we have already computed a book embedding of $G_i$. For this inductive strategy to work, the computed book embeddings satisfy particular invariants, which we  define subsequently. We first focus on the base case, in which $G$ consists of only two levels $L_0$ and $L_1$ under some additional assumptions (see \cref{ssec:two-levels}). Afterwards, we consider the inductive case, in which $G$ consists of more than two levels (see \cref{ssec:multiple-levels}). 

\subsection{Base case: Two-level instances}
\label{ssec:two-levels}

A \emph{two-level instance} is a \framed{k} graph $G$ consisting of two levels $L_0$ and $L_1$, such that there is no crossing edge in the unbounded face of $\sigma_0(G)$, and either $L_1 = \emptyset$ or $\sigma_1(G)=C_1(G)$, i.e., $\sigma_1(G)$ is chord-less; refer to \cref{fig:example} for an illustration of a two-level instance. Since $\sigma(G)$ is biconnected, $C_0(G)$ is a simple cycle. Let $u_0,u_1,\ldots,u_{s-1}$ with $s \ge 3$ be the vertices of $L_0$ in the order that they appear in a clockwise traversal of $C_0(G)$ starting from~$u_0$. An edge $(u_i,u_j)$ of $\sigma_0(G)$ is  \emph{short} if $i-j=\pm 1$; otherwise it is \emph{long}. By definition, $(u_0,u_{s-1})$ is long.
In the following, we will refer to the intra-level faces of $\sigma_1(G)$ simply as intra-level faces, and we will further denote $\mathcal{F}_1$ as $\mathcal{F}$. Consider now the graph $C_1(G)$. Each of its connected components is a cactus; thus, its biconnected components, called \emph{blocks}, are either single edges or simple cycles (that are chordless, as $\sigma_1(G)=C_1(G)$). A connected component of $C_1(G)$ may degenerate into a single vertex, and this vertex itself is a \emph{degenerate} block. A block that consists of more than one vertex is called \emph{non-degenerate}. 

We equip $\mathcal{F}$ with a linear ordering $\lambda(\mathcal{F})$ as follows. For $i=0,\dots,s-1$, the intra-level faces incident to vertex $u_i$ are appended to $\lambda(\mathcal{F})$ as they appear in counterclockwise order around $u_i$ starting from the one incident to $(u_{i-1},u_i)$ and ending at the one incident to $(u_i,u_{i+1})$ (indices taken modulo $s$), unless already present. For a pair of intra-level faces $f$ and $f'$, we write $f \prec_\lambda f'$ if $f$ precedes $f'$~in~$\lambda(\mathcal{F})$; similarly, we write $f \preceq_\lambda f'$ if $f = f'$ or $f \prec_\lambda f'$.

Let $C_1,\ldots,C_\gamma$ be the connected components of $C_1(G)$ and let $C \in \{C_1,\ldots,C_\gamma\}$. 
In general, several intra-level faces in $\mathcal{F}$ may contain vertices of $C$ on their boundary. Let $f_C$ be the first face in the ordering $\lambda(\mathcal{F})$ that contains a vertex of $C$. Consider now a counterclockwise traversal of the boundary of $f_C$ starting from the vertex of $L_0$ with the smallest subscript that belongs to $f_C$. We refer to the vertex, say $v_C$, of $C$ that is encountered first in this traversal as the \emph{first vertex} of $C$. Observe that, by definition, $v_C$ is incident to a binding edge that is on the boundary of $f_C$. We will further assume that $v_C$ forms a degenerate block $r_C$ of $C$. The \emph{leader} of a block $B$ of $C$, denoted by~$\ell(B)$, is the first vertex of $B$ that is encountered in any path of $C$ from $v_C$ to $B$; note that $\ell(B)$ is uniquely defined.

Consider a vertex $v$ of $C$. If $v$ belongs to only one block of $C$, then $v$ is \emph{assigned} to that block. Otherwise $v$ is assigned to the block $B$ of $C$ such that $v$ belongs to $B$ and the graph-theoretic distance in $C$ between $\ell(B)$ and $v_C$ is the smallest. It follows that $v_C$ is assigned to the degenerate block $r_C$, and that for any non-degenerate block $B$ the leader $\ell(B)$ is not assigned to $B$. We denote by $B(v)$ the block of $C$ that a vertex $v$ is assigned to. 
Let $B$ be a block of $C$. Assume first that $B$ is non-degenerate. We refer to the first face in the ordering $\lambda(\mathcal{F})$ containing an edge of $B$ as the face that \emph{discovers} $B$. Assume now that $B$ is degenerate, i.e., it consists of a single vertex $v$. We refer to the first face in the ordering $\lambda(\mathcal{F})$ that has $v$ on its boundary as the face that discovers $B$. In both cases, we denote by $d(B)$ the face in $\mathcal{F}$ that discovers block~$B$.

We extend the notion of discovery to the vertices of $G$.
To this end, let $v$ be a vertex of $G$ (which can be incident to several intra-level faces in $\mathcal{F}$).
We distinguish whether $v$ belongs to $L_0$ or $L_1$.
In the former case, face $f$ of $\mathcal{F}$ \emph{discovers} vertex~$v$ if $f$ is the first intra-level face in the ordering $\lambda(\mathcal{F})$ that contains $v$ on its boundary.
In the latter case, face $f$ in $\mathcal{F}$ \emph{discovers} vertex~$v$ if $f$ is the face that discovers the block vertex~$v$ is assigned to.
In both cases we denote by $d(v)$ the face in $\mathcal{F}$ that discovers vertex~$v$.
This yields $d(v)=d(B(v))$ for any $v\in L_1$.
The \emph{dominator} $\text{dom}(B)$ of block $B$ is the vertex of $L_0$ with the smallest subscript that is on the boundary of $d(B)$.
Several blocks of $C$ can be discovered by the same face, and by definition, these blocks have the same dominator.
Analogously, we define the dominator $\text{dom}(f)$ of an intra-level face $f$ as the vertex of $L_0$ with the smallest subscript that is on the boundary of $f$. This yields $\text{dom}(B)=\text{dom}(d(B))$.

\begin{figure}[t]
    \centering
	\includegraphics[width=.68\textwidth,page=1]{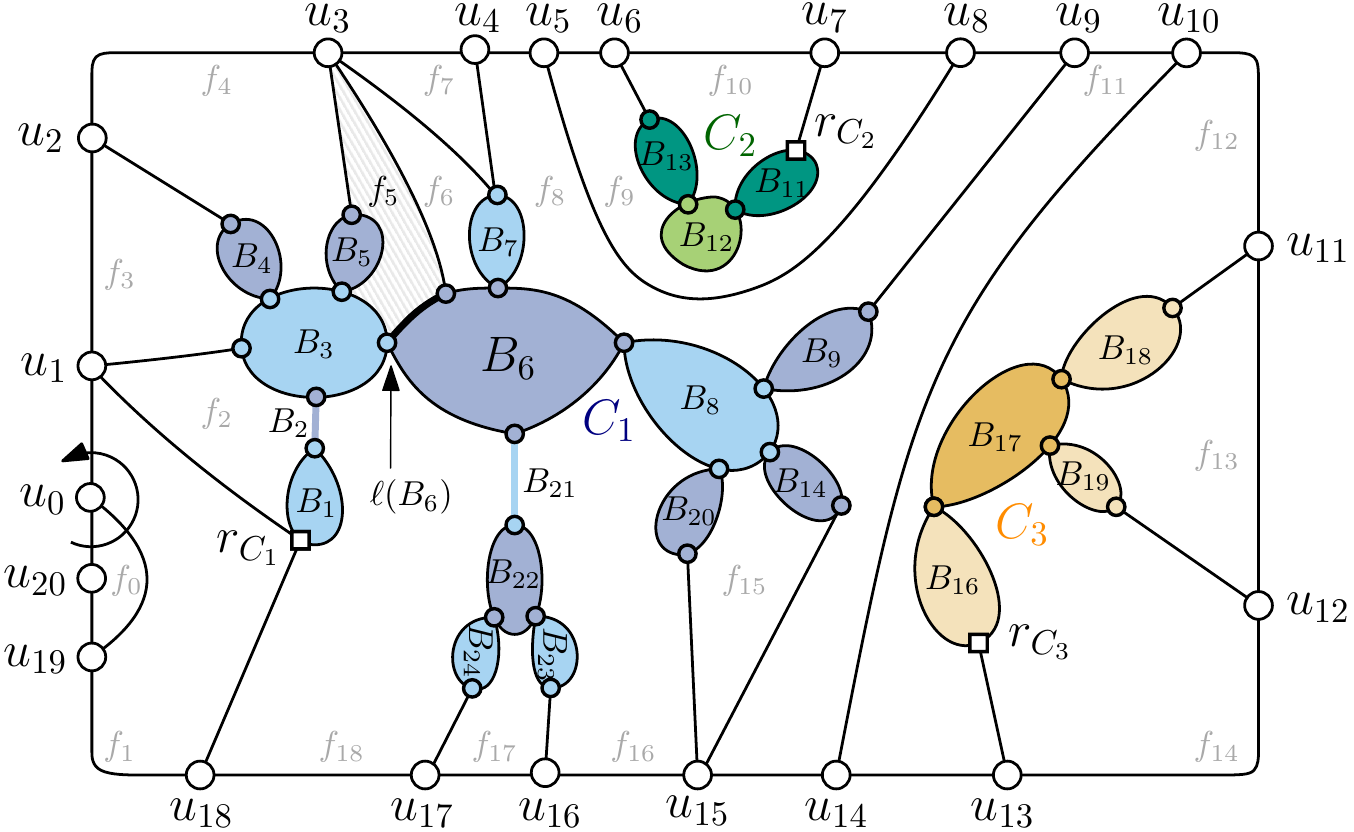}
    \caption{%
    Illustration of the graph $\sigma_1(G)$ of a two-level instance $G$:
	the vertices of $L_0$ are denoted by $u_0,\ldots,u_{20}$;
	the vertices of $L_1$ are the remaining ones;
	$C_1(G)$ consists of three connected components $C_1$, $C_2$ and $C_3$, whose first vertices are denoted by $v_{C_1}$, $v_{C_2}$ and $v_{C_3}$, resp.;
	the vertices assigned to each block have the same color as the block;
	$C_1$ contains two blocks $B_2$ and $B_{21}$ that are simple edges; 
	the two level edges $(u_5,u_6)$ and $(u_5,u_8)$ are short and long, resp.;  
	edge $(u_1,v_{C_1})$ is a binding edge;
	the intra-level faces of $\mathcal{F}$ are all numbered from $f_0$ to $f_{18}$ according to $\lambda(\mathcal{F})$; 
	the intra-level face that discovers $B_6$ is the face $f_5$ tilled gray;
	$f_1$, $f_9$ and $f_{12}$ discover the degenerate blocks.
    }
    \label{fig:example}
\end{figure}

\begin{property}\label{prp:vertex-on-boundary}
The face $d(B)$ that discovers block $B$ is the first face in $\lambda(\mathcal{F})$ that has a vertex assigned to block $B$ on its boundary. 
\end{property}
\begin{proof}
If $B$ is a degenerate block, the property follows by definition. Otherwise, $B$ contains at least one edge on its boundary. The face $d(B)$ is the first intra-level face in $\lambda(\mathcal{F})$ that contains an edge $(v,w)$ of~$B$ on its boundary. Since only the leader $\ell(B)$ of $B$ is not assigned to block $B$ and since $(v,w)$ is a boundary edge of $B$, at least one of $v$ and $w$ is assigned to $B$. The property follows from the fact that at most one of the endpoints of $(v,w)$ is not assigned to $B$.
\end{proof}

Consider now two blocks $B$ and $B'$ of $C_1(G)$. Note that $B$ and $B'$ do not necessarily belong to the same connected component of $C_1(G)$. We say that $B$ \emph{precedes} $B'$ if 
(i)~$d(B) \prec_\lambda d(B')$, or
(ii)~$d(B) = d(B')$ and in a counterclockwise traversal of $d(B)$ starting from $\text{dom}(d(B))$ block $B$ is encountered before block $B'$.
We denote this relationship between $B$ and $B'$ by $B\prec B'$.
Since $\lambda(\mathcal{F})$ is a well-defined ordering, it follows that the relationship ``precedes'' is also defining a total ordering of the blocks of $C_1(G)$. In the following, we introduce a useful property of $\lambda(\mathcal{F})$.

\begin{property}\label{prp:first-the-discoverer}
Let $v$ be a vertex of $G$ and let $f_v \in \mathcal{F}$ be an intra-level face that contains $v$ on its boundary. Then, $d(v) \preceq_\lambda f_v$ holds.
\end{property}
\begin{proof}
If $v$ belongs to $L_0$, then the property follows by definition. Otherwise,  $v$ belongs to $L_1$, and $d(v)$ is the intra-level face that discovers the block $B(v)$, that is, $d(v)=d(B(v))$. If $B(v)$ is degenerate, then $d(v)$ is the first intra-level face in $\lambda(\mathcal{F})$ that has $v$ on its boundary. Hence, $d(v) \preceq_\lambda f_v$. Otherwise, by \cref{prp:vertex-on-boundary}, $d(B(v))$ is the first intra-level face in $\lambda(\mathcal{F})$ that contains a vertex assigned to block $B$ on its boundary. Since $d(v)=d(B(v))$ and since $v$ is assigned to block $B$, it follows that $d(v)\preceq_\lambda f_v$. 
\end{proof}

Next, we introduce the notion of a prime vertex with respect to an intra-level face. We say that a vertex $v$ of $L_0$ belonging to the boundary of an intra-level face $f$ is \emph{prime} with respect to $f$ if no vertex of $L_1$ and no long level edge is encountered in the clockwise traversal of $f$ from $\text{dom}(f)$ to $v$.
By definition, $\text{dom}(f)$ is prime with respect to $f$.
We say that a vertex $v$ is \emph{$f$-prime} if either $v$ is prime with respect to face $f$ or $v$ belongs to $L_1$. By definition, any vertex of $L_1$ is $g$-prime with respect to any intra-level face $g$.
Let $u_j$ be a vertex on $L_0$ that is not $d(u_j)$-prime with $j\in \{1,\ldots,s-1\}$.
Let $f_0^{u_j},\ldots,f_t^{u_j}$ be the faces that have $u_j$ on their boundary in a counterclockwise traversal of $u_j$ starting from $(u_{j-1},u_j)$ and ending at $(u_j,u_{j+1})$ (indices taken modulo $s$). Let $d$ be smallest index such that $f_d^{u_j}=d(u_j)$.
The faces $f_0^{u_j},\ldots,f_{d-1}^{u_j}$ that have $u_j$ as their dominator are called \emph{small}.

\subsubsection{Linear ordering}
\label{sse:linearorder}

The \emph{linear ordering} of the vertices, denoted by $\rho$, is computed as follows. First, the vertices of $L_0$ are embedded in the order $u_0,u_1,\ldots,u_{s-1}$.
The remaining vertices of $G$ (i.e., the vertices of $L_1$) are embedded along the spine based on the blocks that they have been assigned to and according to the following rules: 

\begin{enumerate}
\renewcommand{\labelenumi}{\textbf{\theenumi}}
\renewcommand{\theenumi}{R.\arabic{enumi}}
\item \label{r:1} For $j=0,\ldots,s-1$, let $B^j_0,\ldots,B^j_{t-1}$ be the blocks with $u_j$ as dominator such that the faces that discover them are not small (are small, resp.), and $B^j_i \prec B^j_{i+1}$ for $i=0,1,\ldots,t-2$. The vertices assigned to these blocks are placed right after (before, resp.) $u_j$ in $\rho$. 
\item \label{r:2} The vertices assigned to $B^j_i$ are right before those assigned to $B^j_{i+1}$, for each $i=0,\ldots,t-2$.
\item \label{r:3} The vertices assigned to the same block $B^j_i$ are in the order they appear in a counterclockwise traversal of the boundary of $B^j_i$ starting from the leader of $B^j_i$, for $i=0,\ldots,t-1$.
\end{enumerate}
For a pair of distinct vertices $v$ and $w$, we write $v \prec_\rho w$ if $v$ precedes $w$ in~$\rho$.
By Rule~\ref{r:1}, the vertices of $L_1$ that are discovered by $f$ and the $f$-prime vertices of $L_0$ are right next to each other in $\rho$.  
The next property is consequence of Rules~\ref{r:1}--\ref{r:3}.

\begin{property}\label{prp:consecutively}
The vertices assigned to a block $B$ of $L_1$ appear consecutively in $\rho$.
\end{property}

\noindent The order of the blocks together with Rules~\ref{r:1} and \ref{r:2} yields the following property.

\begin{property}\label{prp:block-vertex-order}
Let $v$ and $w$ be two vertices of $L_1$ assigned to two distinct blocks $B(v)$ and $B(w)$, respectively.
Then, $v\prec_\rho w$ if and only if $B(v)$ precedes $B(w)$.
\end{property}

\noindent The next properties will be useful in Section~\ref{ssec:multiple-levels}.%

\begin{property}\label{prp:ordering-disconnected}
	Let $C_1$ and $C_2$ be two connected components of $C_1(G)$ rooted at their first vertices, and let $B_1$ and $B_2$ be two non-degenerate blocks of $C_1$ and $C_2$, respectively.
	If there exists a vertex $v$ assigned to $B_2$ between $\ell(B_1)$ and the vertices assigned to $B_1$ in $\rho$, then all vertices assigned to $B_2$ appear in $\rho$ between $\ell(B_1)$ and the vertices assigned~to~$B_1$.
\end{property}
\begin{proof}
Let $B'_1$ be the block that $\ell(B_1)$ is assigned to.
Then $B'_1$ is a block of $C_1$ and $B'_1\neq B_1$.
Let $w$ be a vertex assigned to block $B_1$. Then we have $\ell(B_1)\prec_\rho v\prec_\rho w$ with $\ell(B_1)$ assigned to $B'_1$, $v$ assigned to $B_2$, and $w$ assigned to $B_1$.
By \cref{prp:consecutively}, all vertices assigned to the same block are consecutive in $\rho$, and the claim follows.
\end{proof}

\begin{figure}[t]
    \centering
	\includegraphics[width=.5\textwidth,page=1]{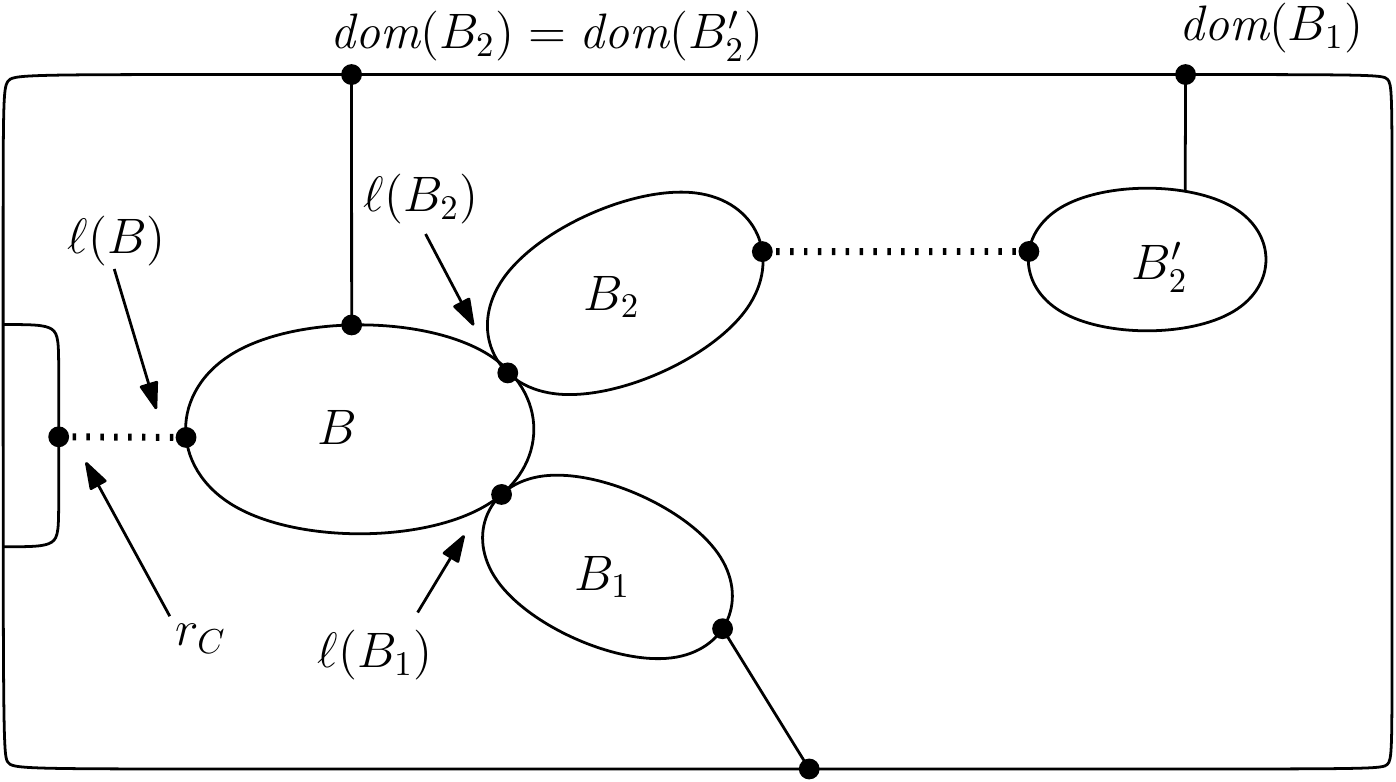}
    \caption{Illustration for the proof of \cref{prp:ordering-descendants}.}
    \label{fig:two-block-children}
\end{figure}

\begin{property}\label{prp:ordering-descendants}	
Let $C$ be a connected component of $C_1(G)$ rooted at its first vertex, and let~$B$ be a non-degenerate block of $C$ with two children $B_1$ and $B_2$.
If $\ell(B_1) \preceq_\rho \ell(B_2)$ and $B_2 \prec B_1$, then 
all vertices assigned to descendant blocks of~$B_2$ (including $B_2$) precede in $\rho$ all vertices assigned to descendant blocks of $B_1$ (including~$B_1$).
\end{property}
\begin{proof}
First, observe that for a block $B$ and any descendant block $B'$ of $B$, we have the order $B\prec B'$. Therefore, any vertex assigned to $B$ precedes any vertex assigned to $B'$ in $\rho$.
Hence, let $B'_2$ be a descendant of $B_2$.
It remains to show that if $B_1$ and $B_2$ are children of the same block, $\ell(B_1) \preceq_\rho \ell(B_2)$, and $B_2 \prec B_1$, then $v\prec_\rho w$ for any vertex $v$ assigned to $B'_2$ and any vertex $w$ assigned to $B_1$.
Since $\sigma(G)$ is planar and biconnected, we get $d(B'_2) \prec_\lambda d(B_1)$; see \cref{fig:two-block-children}.
Hence, $\text{dom}(d(B'_2))\preceq_\rho \text{dom}(d(B_1))$ holds.
Now the claim follows by Rules \ref{r:1} and \ref{r:2}.
\end{proof}

\begin{property}\label{prp:ordering-connected}
Let $C$ be a connected component of $C_1(G)$, and let $B_1$ and $B_2$ be two distinct non-degenerate blocks of $C$.
If there is a vertex $v$ assigned to a block $B_1$ between $\ell(B_2)$ and the remaining vertices of $B_2$ such that $\ell(B_1) \prec_\rho  \ell(B_2)$, then $\ell(B_2)$ is assigned to $B_1$.
\end{property}

\begin{proof}
Assume for a contradiction that $\ell(B_2)$ is assigned to a different block, say $B'_2$.
Let also $B'_1$ be the block that $\ell(B_1)$ is assigned to.
By \cref{prp:block-vertex-order}, we obtain the order of the blocks: $B'_1\preceq B'_2\prec B_1\prec B_2$.
We distinguish two cases based on whether (a) $B'_1\prec B'_2$ or (b) $B'_1=B'_2$ holds.
First, consider Case (a), that is $B'_1\prec B'_2$.
Since $B_1$ is a child of $B'_1$ and $B_1\prec B_2' \prec B_1$, it follows that either $B'_2$ is also a child of $B'_1$ which precedes $B_1$ in the ordering of the blocks, or it is a descendant of another child of $B'_1$ which precedes $B_1$ in the ordering of the blocks.
In both cases, it follows by \cref{prp:ordering-descendants} that $B_2\prec B_1$; a contradiction.
Consider now Case (b). Since $\ell(B_1) \prec_\rho  \ell(B_2)$, and both vertices are assigned to the same block, it follows that $B_2\prec B_1$; a contradiction.
\end{proof}

\begin{property}\label{prp:prime}
Let $v$ be a $d(v)$-prime vertex of $L_0$. Then $v$ is $f$-prime for 
any intra-level face $f$ that has $v$ on its boundary. 
Also, $v=\text{dom}(f)$, except possibly for $f=d(v)$. 
\end{property}
\begin{proof}
Let $f$ be an intra-level face that is different from $d(v)$ such that $f$ has $v$ on its boundary. By planarity, vertex $v$ is the dominator of face $f$. Thus, $v$ is $f$-prime.
\end{proof}

\begin{property}\label{prp:discover-order}
Let $w$ be a $d(w)$-prime vertex.
For any vertex $v$ with $v \prec_\rho w$, $d(v) \preceq_\lambda d(w)$.
\end{property}
\begin{proof}
Since $w$ is $d(w)$-prime, $w$ precedes any vertex discovered by a face $f$ with $d(w)\prec_\lambda f$. 
Assuming to the contrary that $d(w) \prec_\lambda d(v)$, we get $w \prec_\rho v$; a contradiction.
\end{proof}

\noindent By contraposition the following corollary is a direct consequence of \cref{prp:discover-order}.

\begin{corollary}\label{cor:vertex-to-discoverer}
Let $v$ be a $d(v)$-prime vertex. For any vertex $w$, $d(v) \prec_\lambda d(w)$~implies~$v \prec_\rho w$.
\end{corollary}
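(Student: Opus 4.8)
The plan is to prove the corollary by contraposition, precisely as the sentence preceding it announces, so that the whole argument reduces to a single application of \cref{prp:discover-order}. We must show that, for a $d(v)$-prime vertex $v$, the relation $d(v)\prec_\lambda d(w)$ forces $v\prec_\rho w$ for every vertex $w$. Equivalently, I will establish the contrapositive: if $v\not\prec_\rho w$, then $d(v)\not\prec_\lambda d(w)$, i.e., $d(w)\preceq_\lambda d(v)$.

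First I would handle the trivial case $v=w$ separately: here $d(v)=d(w)$, and since $\prec_\lambda$ is a strict order we have $d(v)\not\prec_\lambda d(w)$, so the implication holds vacuously. For $v\neq w$, I would use that $\rho$ is a linear (hence total) ordering, so that $v\not\prec_\rho w$ is equivalent to $w\prec_\rho v$.

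The heart of the argument is then the correct instantiation of \cref{prp:discover-order}. In that property the prime vertex is called $w$ and the arbitrary vertex preceding it is called $v$; here the prime vertex is $v$, so I would apply the property with $v$ in the role of its prime vertex and $w$ in the role of its arbitrary vertex. Its hypothesis asks that the arbitrary vertex precede the prime vertex in $\rho$, which is exactly the relation $w\prec_\rho v$ secured in the previous step. The property then yields $d(w)\preceq_\lambda d(v)$, which is the desired conclusion $d(v)\not\prec_\lambda d(w)$, completing the contrapositive.

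I do not expect any real obstacle, since the corollary is a pure logical restatement of \cref{prp:discover-order}. The only points requiring care are bookkeeping ones: matching the quantifier ``for any vertex $w$'' to the same $w$ in the instance of \cref{prp:discover-order}, tracking which vertex plays the prime role once the variable names are swapped, and using that the negation of the strict total orders $\prec_\rho$ and $\prec_\lambda$ (on distinct elements) gives the reverse strict relation, respectively the reflexive relation $\preceq_\lambda$.
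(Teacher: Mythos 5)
Your proposal is correct and matches the paper's proof exactly: the paper also obtains the corollary by contraposition as a direct consequence of \cref{prp:discover-order}, with the roles of the variables swapped just as you describe. Your extra care about the case $v=w$ and the totality of $\prec_\rho$ and $\prec_\lambda$ is sound bookkeeping that the paper leaves implicit.
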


\subsubsection{Edge-to-Page Assignment}
\label{sse:assignment}

With the linear ordering $\rho$ at hand, we now describe how to perform the edge-to-page assignment which concludes the construction of our book embedding.
We start with some particular types of edges defined as follows. 
An edge $(v,w)$ is a \emph{dominator} edge if $v$ is the dominator of an intra-level face $f_{w}$ containing $w$ on its boundary.
A dominator edge $(v,w)$ is \emph{backward} if $v \prec_\rho w$ or \emph{forward} otherwise.
In the following lemma, we prove that all backward edges of $G$ can be assigned to a single page. 
We note that the proof is reminiscent of a corresponding one by 
Yannakakis~\cite{DBLP:journals/jcss/Yannakakis89} for similarly-defined backward edges.

\begin{figure}[tb!]
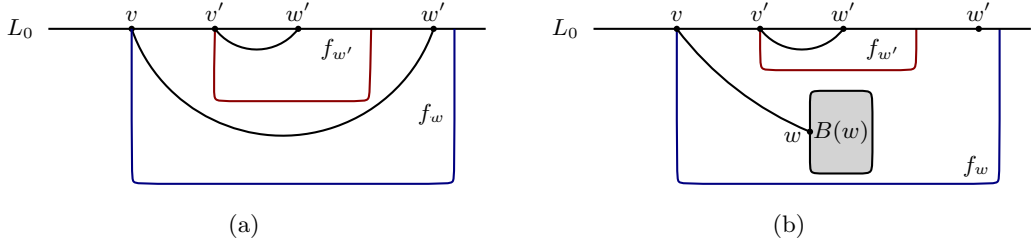

	\centering
	\begin{subfigure}{.48\textwidth}
		\centering
		\includegraphics[width=.9\textwidth,page=1]{figures/backward}
		\subcaption{}
		\label{fig:backward-1}
	\end{subfigure}
	\begin{subfigure}{.48\textwidth}
		\centering
		\includegraphics[width=.9\textwidth,page=2]{figures/backward}
		\subcaption{}
		\label{fig:backward-2}
	\end{subfigure}
	\caption{%
		Illustration for the proof of \cref{lem:backward}.}
	\label{fig:backward}
\end{figure}

\begin{lemma}\label{lem:backward}
Let $(v,w)$ and $(v',w')$ be two backward edges of $G$, such that $v,w,v'$ and $w'$ are four distinct vertices of $G$ with $v \prec_\rho w$, $v' \prec_\rho w'$ and $v \prec_\rho v'$. Then, $v \prec_\rho w \prec_\rho v' \prec_\rho w'$ or $v \prec_\rho v' \prec_\rho w' \prec_\rho w$ holds.
\end{lemma}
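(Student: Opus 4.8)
The statement asserts that backward edges form a \emph{laminar} (non-crossing) family under $\prec_\rho$: two backward edges either nest or lie in disjoint intervals. My plan is to prove the contrapositive of crossing, i.e. to rule out the interleaving pattern $v \prec_\rho v' \prec_\rho w \prec_\rho w'$ (together with $v \prec_\rho w$, $v' \prec_\rho w'$). So I would assume for contradiction that $v \prec_\rho v' \prec_\rho w \prec_\rho w'$ and derive a contradiction with planarity of $\sigma(G)$. The key structural fact I want to exploit is that each backward dominator edge $(v,w)$ has $v \in L_0$ as the dominator of an intra-level face $f_w$ that contains $w$ on its boundary, and $v \prec_\rho w$; hence by \cref{prp:discover-order} and \cref{cor:vertex-to-discoverer} the relative $\rho$-order of the two tails $v,v'$ is controlled by the $\lambda$-order of their discovering faces.

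\textbf{Key steps.}
First I would record what a backward dominator edge looks like: $v=\text{dom}(f_w)$ is a vertex of $L_0$, $w$ lies on $f_w$, and by the definition of ``backward'' and Rule~\ref{r:1} the head $w$ sits in the block-interval immediately following $v$, so $v$ and the vertices discovered by $f_w$ are $\rho$-adjacent. Next I would use \cref{prp:first-the-discoverer}, which gives $d(w)\preceq_\lambda f_w$, to locate $w$'s discovering face relative to the face $v$ dominates. Then, from $v \prec_\rho v'$ and the fact that $v$ is (a dominator, hence) $d(v)$-prime, I would apply \cref{cor:vertex-to-discoverer}/\cref{prp:discover-order} to conclude $d(v) \preceq_\lambda d(v')$, i.e. $v'$ is discovered no earlier than $v$. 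The crux is then to show that the hypothesized ordering $v \prec_\rho v' \prec_\rho w \prec_\rho w'$ forces the edge $(v,w)$ and the edge $(v',w')$ to cross in the plane drawing of $\sigma(G)$, or else forces $w'\prec_\rho w$ contradicting the assumption; I would split into cases according to whether $d(w)=d(w')$ (the two heads are discovered by the same intra-level face) or $d(w)\prec_\lambda d(w')$, and within the first case use the counterclockwise-traversal tie-break in the definition of $\prec$ on blocks together with \cref{prp:block-vertex-order} to pin down the block order.

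\textbf{Main obstacle.}
The hard part will be the bookkeeping that translates the purely combinatorial ordering statement into a planarity violation: because $v,v'$ both lie on $L_0$ and their heads $w,w'$ may be discovered by different faces, I must carefully track, via the face orderings $\lambda(\mathcal{F})$ and the block orderings $\prec$, that once $v'$ is inserted strictly between $v$ and $w$ in $\rho$, the head $w'$ of $v'$ cannot ``escape'' past $w$ without one of the two binding/dominator edges being routed so as to separate the other's endpoints on the spine in a way that contradicts the nested structure of intra-level faces around a common dominator. I expect the cleanest route is to reduce to the two sub-patterns $w\prec_\rho v'$ (disjoint) and $v'\prec_\rho w'\prec_\rho w$ (nested) by showing the only remaining alternative $v\prec_\rho v'\prec_\rho w\prec_\rho w'$ contradicts either \cref{cor:vertex-to-discoverer} (when $d(v)\prec_\lambda d(v')$ the whole interval of $v'$ would have to follow that of $v$, so $w\prec_\rho v'$) or, when $d(v)=d(v')$, contradicts planarity of the two dominator edges inside the common face, since both edges emanate from vertices of $L_0$ on the boundary of a single intra-level face and the $\rho$-order there is inherited from the counterclockwise boundary traversal.
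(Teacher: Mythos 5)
Your overall reduction is the right first move and matches the paper's: with $v \prec_\rho v'$ the only pattern to exclude is $v \prec_\rho v' \prec_\rho w \prec_\rho w'$, and the paper likewise assumes $v' \prec_\rho w$ and shows the remaining head order must be $w' \prec_\rho w$. But two of your concrete mechanisms fail. The main gap is your structural premise that the head $w$ of a backward edge ``sits in the block-interval immediately following $v$'': this holds only when $w$ belongs to $L_1$. A backward dominator edge may perfectly well have its head on $L_0$ (an $L_0$ vertex of $f_w$ that follows the dominator $v$ in $\rho$), in which case $w$ is assigned to no block at all, and your whole case split on $d(w)=d(w')$ versus $d(w)\prec_\lambda d(w')$ with the block tie-break and \cref{prp:block-vertex-order} has nothing to act on. This is precisely the split the paper makes: when $w\in L_1$, the interleaving dies immediately by Rule~\ref{r:1} (no $L_0$ vertex, in particular not the dominator $v'$, can lie between $v$ and the vertices of the block dominated by $v$), with no analysis of discovering faces needed; when $w\in L_0$, the paper instead observes that $w$ is not $f_w$-prime, derives $f_w \prec_\lambda f_{w'}$ directly from $v \prec_\rho v'$ (both being dominators), and concludes $w' \prec_\rho w$ from the planarity of $\sigma(G)$ — i.e.\ the nested outcome, not an outright impossibility.

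Second, the step ``$v$ is (a dominator, hence) $d(v)$-prime'' is false: small faces exist in the paper exactly because a vertex $u_j$ that is \emph{not} $d(u_j)$-prime can still dominate faces; moreover \cref{prp:discover-order} requires primality of the \emph{later} vertex $v'$, not of $v$. Fortunately the comparison you actually need is not $d(v)\preceq_\lambda d(v')$ at all, since $f_w \prec_\lambda f_{w'}$ follows immediately from the dominators satisfying $v \prec_\rho v'$. Your closing dichotomy is also unsound: $d(v) \prec_\lambda d(v')$ does not force the disjoint pattern $w \prec_\rho v'$ (a nested configuration with $w \in L_0$ lying far to the right is consistent with it), and the subcase $d(v)=d(v')$ does not put the two dominator edges ``inside a common face'' — the faces $f_w$ and $f_{w'}$ are always distinct because each face has a unique dominator. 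Finally, note that dominator edges may be crossing edges of $G$ drawn inside faces rather than edges of $\sigma(G)$, so ``the two edges cross in the plane drawing of $\sigma(G)$'' is not in itself a contradiction; the planarity contradiction must be pinned on the boundaries of the skeleton faces $f_w$ and $f_{w'}$ and the placement rules, which is what the paper's argument (and its figures) actually do.
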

\begin{proof}
By definition, $v$ and $v'$ are the dominators of two intra-level faces $f_w$ and $f_{w'}$ containing $w$ and $w'$ on their boundaries, respectively. Note that if $w\prec_\rho v'$, we have $v \prec_\rho w \prec_\rho v' \prec_\rho w'$. 
Thus, assume $v'\prec_\rho w$. If $w$ belongs to $L_0$, then $w$ is not $f_w$-prime; see \cref{fig:backward-1}. Since $v \prec_\rho v'$, and $v$ and $v'$ are the dominators of $f_w$ and $f_{w'}$, respectively, it follows that $f_w \prec_\lambda f_w'$.
Since vertex $w$ is not $f_w$-prime, we have $w' \prec_\rho w$. Hence, it follows that $v \prec_\rho v' \prec_\rho w' \prec_\rho w$. 
Assume now that $w$ belongs to $L_1$; see \cref{fig:backward-2}. Since $v$ is the dominator of $f_w$, and $v\prec_\rho w$, the vertex $w$ belongs to a block $B(w)$ discovered by $v$. By Rule~\ref{r:1}, there is no vertex of $L_0$ between $v$ and the vertices assigned to $B(w)$ in $\rho$. Hence, $v'$ cannot appear between $v$ and $w$ in $\rho$.  
\end{proof}

\noindent Next, we prove that all forward edges can also be assigned to a single page.

\begin{lemma}\label{lem:forward}
Let $(v,w)$ and $(v',w')$ be two forward edges of $G$, such that $v,w,v'$ and $w'$ are four distinct vertices of $G$ with $w \prec_\rho v$, $w' \prec_\rho v'$ and $v' \prec_\rho v$. Then, $w' \prec_\rho v' \prec_\rho w \prec_\rho v$ or $w \prec_\rho w' \prec_\rho v' \prec_\rho v$ holds.
\end{lemma}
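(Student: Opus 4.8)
The plan is to mirror the proof of \cref{lem:backward}, exploiting the duality between backward and forward dominator edges. Write the two edges as $(v,w)$ and $(v',w')$ with $v=\text{dom}(f_w)$ and $v'=\text{dom}(f_{w'})$ for intra-level faces $f_w\ni w$ and $f_{w'}\ni w'$. The first observation is that for a \emph{forward} edge the dominated endpoint must lie in $L_1$: if $w\in L_0$, then $v,w\in L_0$ are ordered along $\rho$ by subscript, so $w\prec_\rho v$ would force $w$ to have a smaller subscript than $v$, contradicting that $v=\text{dom}(f_w)$ has the smallest subscript among the $L_0$-vertices of $f_w$ (which include $w$). Hence $w,w'\in L_1$, while $v,v'\in L_0$ and $v'\prec_\rho v$ means the subscript of $v'$ is smaller than that of $v$.

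Next I would reduce the claim to a single forbidden pattern. Among the orderings compatible with $w\prec_\rho v$, $w'\prec_\rho v'$ and $v'\prec_\rho v$, the conclusion fails exactly when $w'\prec_\rho w\prec_\rho v'$; indeed $v'\prec_\rho w$ gives the disjoint pattern $w'\prec_\rho v'\prec_\rho w\prec_\rho v$, and $w\prec_\rho w'$ gives the nested pattern $w\prec_\rho w'\prec_\rho v'\prec_\rho v$. So if $v'\prec_\rho w$ we are immediately done, and it remains to treat the case $w\prec_\rho v'$ and prove $w\prec_\rho w'$. Suppose, for a contradiction, that $w'\prec_\rho w$. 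Since $w\in L_1$ is $d(w)$-prime, \cref{prp:discover-order} gives $d(w')\preceq_\lambda d(w)$.

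The core of the remaining argument is a planarity-based bookkeeping tying the $\lambda$-order of discovering faces to the subscripts of dominators. I would first record, as an observation on $\lambda(\mathcal{F})$, that a face first enters $\lambda(\mathcal{F})$ precisely when the sweep over $u_0,\dots,u_{s-1}$ reaches its dominator; hence $f\preceq_\lambda f'$ implies $\text{dom}(f)\preceq_\rho \text{dom}(f')$, and \cref{prp:first-the-discoverer} yields $\text{dom}(d(w))\preceq_\rho \text{dom}(f_w)=v$. Combining $w\prec_\rho v'$ with Rule~\ref{r:1} (each $L_1$-vertex sits immediately before or after the dominator of its discovering block) pins down $\text{dom}(d(w))\preceq_\rho v'$, so $w$ is discovered by a block whose dominator does not exceed $v'$, whereas $w'$ lies on $f_{w'}$ whose dominator is exactly $v'$. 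Using planarity of $\sigma(G)$ to compare $f_{w'}$ with the block discovering $w$, together with \cref{prp:block-vertex-order} and \cref{cor:vertex-to-discoverer}, I would conclude that $v'\prec_\rho w$, contradicting $w\prec_\rho v'$ and closing the case.

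I expect this last step to be the main obstacle. Unlike the backward case, whose $L_1$ subcase collapses at once because the dominator $v$ itself discovers the block of $w$ (so $w$ is placed right after $v$, leaving no room for $v'$), here $w$ precedes its dominator $v$ and may be discovered by a face with a strictly smaller dominator than $v$; one therefore cannot read off the block of $w$ directly from $v$. The delicate part is the split between $w$ being discovered by a small face of $v'$ and by an earlier face, and ruling out the crossing separately in each subcase via the planar structure of $\sigma(G)$.
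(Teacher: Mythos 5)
Your reductions are correct and coincide with the paper's opening moves: among $L_0$-vertices the order $\rho$ is subscript order, so a forward dominator edge forces its dominated endpoint into $L_1$ (hence $w,w'\in L_1$); the case $v'\prec_\rho w$ is immediate; and in the remaining case $w\prec_\rho v'$ the task is exactly to prove $w\prec_\rho w'$. The genuine gap is that your proof stops precisely where the real work begins, and the one concrete endgame you offer is unsound. From $w'\prec_\rho w$, $d(w')\preceq_\lambda d(w)$ and $\text{dom}(d(w))\preceq_\rho v'$ you propose to ``conclude $v'\prec_\rho w$'' via planarity, \cref{prp:block-vertex-order} and \cref{cor:vertex-to-discoverer}. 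But \cref{cor:vertex-to-discoverer} needs a $d(\cdot)$-prime vertex on the left and a strict relation between discoverers, and your chain provides neither: $v'\in L_0$ need not be $d(v')$-prime, and nothing establishes $d(v')\prec_\lambda d(w)$. Worse, $v'\prec_\rho w$ is not derivable in admissible configurations: if $d(w)$ is a small face with $\text{dom}(d(w))=v'$, then Rule~\ref{r:1} places $w$ in the all-$L_1$ group immediately before $v'$, so $w\prec_\rho v'$ holds by construction and no order inversion against $v'$ can appear; such configurations must instead be excluded by a global argument involving $w'$, $f_{w'}$ and the planarity of $\sigma(G)$.

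What is missing is exactly the case analysis you defer to in your last sentence. The paper splits on whether $f_w$ is small, whether $f_{w'}$ is small, or neither (note the relevant smallness is that of $f_w$, a face dominated by $v$, and of $f_{w'}$, a face dominated by $v'$ --- not, as you write, ``$w$ being discovered by a small face of $v'$''). In the neither-small case it further splits on $d(w')\prec_\lambda d(w)$ versus $d(w')=d(w)$: the former is refuted because it yields the chain $d(w')\prec_\lambda d(w)\prec_\lambda f_{w'}\prec_\lambda f_w$, incompatible with the planarity of $\sigma(G)$; in the latter, if $B(w)\neq B(w')$ then $f_{w'}\prec_\lambda f_w$ forces $B(w)$ before $B(w')$ in the counterclockwise traversal of the common discovering face, and \cref{prp:block-vertex-order} gives $w\prec_\rho w'$, while if $B(w)=B(w')$ then Rule~\ref{r:3} combined with planarity forces $w$ before $w'$ along the block boundary, again giving $w\prec_\rho w'$. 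Several of these subcases end in a geometric contradiction (two faces, or a face and a block, that cannot be simultaneously embedded in $\sigma(G)$), not in the single ordering contradiction your plan aims for. Without carrying out this analysis --- which you yourself flag as ``the main obstacle'' --- the lemma is not proved; the proposal reproduces the paper's skeleton but omits its substance.
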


\begin{proof}
By definition, $v$ and $v'$ are the dominators of two intra-level faces $f_w$ and $f_{w'}$ containing $w$ and $w'$ on their boundaries, respectively.
Note that if $v'\prec_\rho w$, then we have $w' \prec_\rho v' \prec_\rho w \prec_\rho v$. 
Thus, assume $w\prec_\rho v'$.
Hence, we have $w\prec_\rho v'\prec_\rho v$ and $w'\prec_\rho v'$, and it remains to show that $w\prec_\rho w'$.
Since $v$ and $v'$ are the dominators of $f_w$ and $f_{w'}$, respectively, and since we know that $w \prec_\rho v$ and $w' \prec_\rho v'$, it follows that $w$ and $w'$ belong to $L_1$ with $d(w)\preceq_\lambda f_w$ and $d(w')\preceq_\lambda f_{w'}$.
Equality holds if $f_w$ or $f_{w'}$ is small. We proceed by distinguishing three cases: (a)~$f_w$ is small, (b)~$f_{w'}$ is small, and (c)~neither $f_w$ nor $f_{w'}$ is not small.

\begin{figure}[tb!]
	\centering
	\begin{subfigure}{.48\textwidth}
		\centering
		\includegraphics[width=.8\textwidth,page=4]{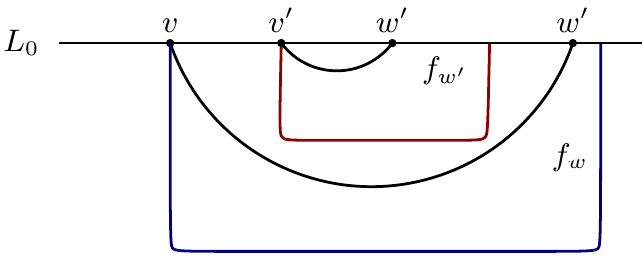}
		\subcaption{}
		\label{fig:forward-4}
	\end{subfigure}		
	\begin{subfigure}{.48\textwidth}
		\centering
		\includegraphics[width=.8\textwidth,page=3]{figures/backward.pdf}
		\subcaption{}
		\label{fig:forward-5}
	\end{subfigure}
	\begin{subfigure}{.48\textwidth}
		\centering
		\includegraphics[width=.8\textwidth,page=11]{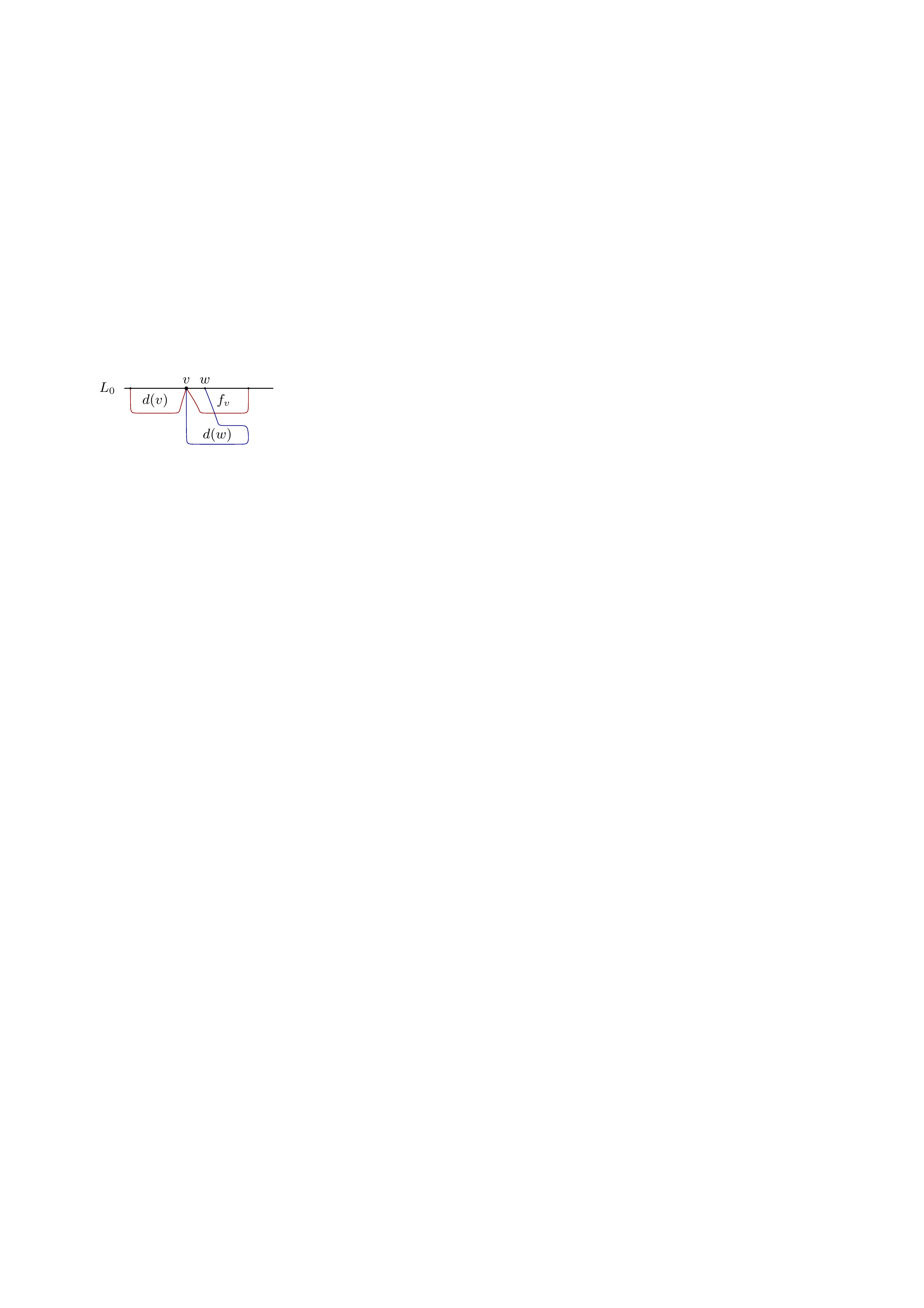}
		\subcaption{}
		\label{fig:forward-1}
	\end{subfigure}	
	\begin{subfigure}{.48\textwidth}
		\centering
		\includegraphics[width=.8\textwidth,page=12]{figures/pdl.pdf}
		\subcaption{}
		\label{fig:forward-2}
	\end{subfigure}		
	\begin{subfigure}{.48\textwidth}
		\centering
		\includegraphics[width=.8\textwidth,page=13]{figures/pdl.pdf}
		\subcaption{}
		\label{fig:forward-3}
	\end{subfigure}	
	\caption{%
		Illustration for the proof of \cref{lem:forward}.}
	\label{fig:forward}
\end{figure}

\begin{itemize}
\item[--] Consider first Case~(a), in which $f_w$ is small.
Since $w\prec_\rho v'$, it follows that $\text{dom}(d(w))\prec_\rho v'$. By the planarity of $\sigma(G)$, we obtain $d(w)\preceq_\lambda d(w'))$; see \cref{fig:forward-4}.
If $d(w) = d(w'))$, then by Rules~\ref{r:2} and \ref{r:3} it follows that $w\prec_\rho w'$ because of the counterclockwise traversal of $d(w)=d(w')$ and the traversal of the blocks.
Otherwise, by \cref{cor:vertex-to-discoverer}, it follows that $w\prec_\rho w'$ .

\item[--] Consider now Case~(b), in which $f_{w'}$ is small.
In this case, the order is $\text{dom}(d(v'))\prec_\rho w'\prec_\rho v'\prec_\rho v$.
As illustrated in \cref{fig:forward-5}, the only $L_1$-vertices that can be on the boundary of $f_w$ and precede $v'$ in $\rho$ are vertices assigned to a block $B$ such that $B$ appears before any $L_0$-vertex different from $\text{dom}(f)$ in a counterclockwise traversal of $f$ starting from $\text{dom}(f)$. But then we obtain $w\prec_\rho w'$.

\item[--] Finally, we consider Case~(c), in which neither $f_w$ nor $f_{w'}$ is small. Hence, $d(w)\prec_\lambda f_w$ and $d(w')\prec_\lambda f_{w'}$.
Observe that if $d(w)\prec_\lambda d(w')$, the claim follows by \cref{cor:vertex-to-discoverer}.
We proceed by considering the two subcases, namely, $d(w') \prec_\lambda d(w)$ and $d(w')=d(w)$.
In the former case, the vertices $v$ and $v'$ are the dominators of the two intra-level faces $f_w$ and $f_{w'}$, and $v' \prec_\rho v$. This yields $f_{w'}\prec_\lambda f_w$.
However, since $w\prec_\rho v'$ and since $v'$ is the dominator of $f_{w'}$, we obtain the order: $d(w')\prec_\lambda d(w)\prec_\lambda f_{w'}\prec_\lambda f_w$.
This contradicts the planarity of $\sigma(G)$, as illustrated in \cref{fig:forward-1}.
Consider now the latter case, in which $d(w')=d(w)$.
Since $w$ belongs to $L_1$, vertex $w$ belongs to the boundary of block $B(w)$ discovered by $d(w')=d(w)$. 
Similarly, vertex $w'$ belongs to the boundary of block $B(w')$ discovered by $d(w')=d(w)$. 
For the two blocks $B(w')$ and $B(w)$, either $B(v) \neq B(w)$ or $B(v)=B(w)$ holds.
Assume first that $B(v) \neq B(w)$. 
Since $B(w')$ and $B(w)$ are discovered by the same face, and since $f_{w'} \prec_\lambda f_w$, it follows that $B(w)$ precedes $B(w')$ in the counterclockwise traversal of $d(w')=d(w)$. Otherwise the faces $f_{w'}$ and $f_w$ would violate the planarity of $\sigma(G)$, as illustrated in \cref{fig:forward-2}.
Thus, by \cref{prp:block-vertex-order}, we obtain $w\prec_\rho w'$.
To complete the proof, it remains to consider the case in which $B(w') = B(w)$. 
Similar to the case above, by Rule~\ref{r:3}, in the counterclockwise traversal of $B(w') = B(w)$ 
starting from its leader, vertex $w$ precedes $w'$ since otherwise the faces $f_{w'}$ and $f_w$ violate the planarity of $\sigma(G)$, as illustrated in \cref{fig:forward-3}.
\end{itemize}
The above case analysis completes the proof.
\end{proof}

\noindent In the following, we describe properties that will be useful in the egde-to-page assignment of the non-dominator edges.

\begin{figure}[tb!]
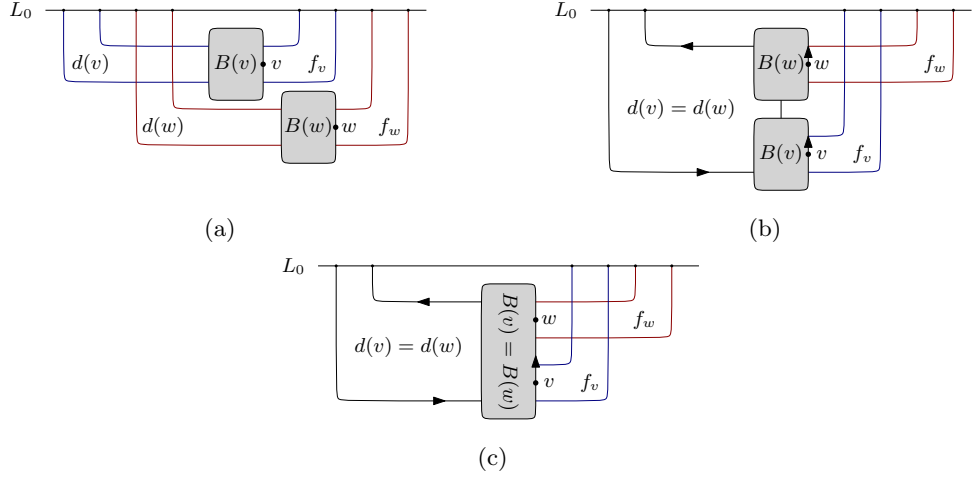

	\centering
	\begin{subfigure}{.48\textwidth}
	\centering
	\includegraphics[width=.8\textwidth,page=4]{figures/pdl}
	\subcaption{}
	\label{fig:pdl-14}
	\end{subfigure}		
	\begin{subfigure}{.48\textwidth}
	\centering
	\includegraphics[width=.8\textwidth,page=9]{figures/pdl}
	\subcaption{}
	\label{fig:pdl-25}
	\end{subfigure}	
	\begin{subfigure}{.48\textwidth}
	\centering
	\includegraphics[width=.8\textwidth,page=10]{figures/pdl}
	\subcaption{}
	\label{fig:pdl-26}
	\end{subfigure}	
	\caption{%
	Illustration for the proof of \cref{lem:pdl}.}
	\label{fig:pdf-256}
\end{figure}

\begin{lemma}\label{lem:pdl}
Let $v$ and $w$ be two vertices of $G$, such that $v \prec_\rho w$. Also, let $f_v$ and $f_w$ be two intra-level faces containing $v$ and $w$ on their boundaries, respectively, such that $f_v \prec_\lambda f_w$. If the following conditions hold, then $f_v \preceq_\lambda d(w)$. 
\begin{enumerate}[(i)]
\item \label{c:pdl0} $v$ is $d(v)$-prime, 
\item \label{c:pdl1} $w$ is $d(w)$-prime, 
\item \label{c:pdl3} $v$ and $w$ are not the dominators of $f_v$ and $f_w$, respectively,
\end{enumerate} 
\end{lemma}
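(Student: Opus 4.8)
The plan is to argue by contradiction. Suppose the conclusion fails, so that $d(w) \prec_\lambda f_v$. Combining this with the hypothesis $f_v \prec_\lambda f_w$ and \cref{prp:first-the-discoverer} (which gives $d(w) \preceq_\lambda f_w$), I would first pin down the relative position of the three faces $d(w)$, $f_v$, and $f_w$ in the ordering $\lambda(\mathcal{F})$. The key tension is that $v \prec_\rho w$ while $v$ sits on a face $f_v$ that comes \emph{strictly after} $w$'s discovering face $d(w)$; the goal is to show this is incompatible with $v$ being $d(v)$-prime.

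\textbf{Key steps.}
First I would locate $d(v)$ relative to $d(w)$. Since $v$ is $d(v)$-prime and $v \prec_\rho w$, I would invoke \cref{prp:discover-order} to obtain $d(v) \preceq_\lambda d(w)$. Next, because $v$ lies on $f_v$ and is $d(v)$-prime, \cref{prp:prime} tells me that $v$ is $f_v$-prime, and moreover $v = \text{dom}(f_v)$ whenever $f_v \neq d(v)$. But condition~\eqref{c:pdl3} asserts $v$ is \emph{not} the dominator of $f_v$; this forces $f_v = d(v)$. Symmetrically, condition~\eqref{c:pdl3} applied to $w$ together with \cref{prp:prime} forces $f_w = d(w)$. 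Substituting these identities, the contradiction hypothesis becomes $d(w) \prec_\lambda d(v)$, directly contradicting $d(v) \preceq_\lambda d(w)$ from the first step. Alternatively, the hypothesis $f_v \prec_\lambda f_w$ becomes $d(v) \prec_\lambda d(w)$, while the assumed negation $d(w) \prec_\lambda f_v = d(v)$ yields $d(w) \prec_\lambda d(v)$, an immediate contradiction.

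\textbf{Main obstacle.}
The crux is the correct application of \cref{prp:prime} to convert the primality hypotheses~\eqref{c:pdl0} and \eqref{c:pdl1} together with the non-dominator hypothesis~\eqref{c:pdl3} into the face identities $f_v = d(v)$ and $f_w = d(w)$; once these are in hand the statement collapses almost immediately via \cref{prp:discover-order}. I would be careful that \cref{prp:prime} only guarantees $v = \text{dom}(f)$ for faces $f \neq d(v)$, so the logical step is that $v \neq \text{dom}(f_v)$ can only happen when $f_v = d(v)$ (otherwise primality would force $v$ to be the dominator). The same subtlety governs the treatment of $w$. I expect no further case analysis beyond these two applications of \cref{prp:prime}, so the proof should be short.
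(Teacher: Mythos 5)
Your proposal has a genuine gap: \cref{prp:prime} is stated only for vertices \emph{of $L_0$} (``Let $v$ be a $d(v)$-prime vertex of $L_0$''), whereas conditions~\eqref{c:pdl0} and \eqref{c:pdl1} of the lemma are satisfied vacuously by every vertex of $L_1$, since by definition any $L_1$-vertex is $f$-prime for every intra-level face $f$. For an $L_1$-vertex your chain ``prime $+$ not dominator $\Rightarrow f_v = d(v)$'' collapses: an $L_1$-vertex is never the dominator of any face (dominators are by definition $L_0$-vertices with smallest subscript), so condition~\eqref{c:pdl3} carries no information about it, and such a vertex can lie on the boundary of an intra-level face strictly later in $\lambda(\mathcal{F})$ than its discovering face --- for instance, any vertex of a block of $C_1(G)$ whose boundary touches several intra-level faces. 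Hence the identities $f_v = d(v)$ and $f_w = d(w)$ on which your contradiction rests are false in general, and the lemma does not ``collapse almost immediately.''

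The missing configuration --- both $v$ and $w$ in $L_1$ --- is exactly the substantive part of the paper's proof (its Case~(d)), and it cannot be dispatched by order bookkeeping alone: assuming $d(w) \prec_\lambda f_v$, the paper derives $d(v) \preceq_\lambda d(w) \prec_\lambda f_v \prec_\lambda f_w$ and then obtains a contradiction with the \emph{planarity of $\sigma(G)$}, splitting on whether $d(v) \prec_\lambda d(w)$ or $d(v)=d(w)$, and in the latter case on whether $B(v) \neq B(w)$ (using Rule~\ref{r:2} and the counterclockwise traversal of the common discovering face) or $B(v) = B(w)$ (using Rule~\ref{r:3} and the traversal of the common block). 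This $L_1$--$L_1$ case is moreover the one actually needed downstream, e.g.\ in the proof of \cref{lem:one-page}, where the lemma is applied to two $L_1$-vertices. Your handling of the cases where $v$ or $w$ lies on $L_0$ does match the paper's Cases~(a)--(c) --- with one small slip: \cref{prp:discover-order} requires $w$ (not $v$) to be $d(w)$-prime, i.e.\ condition~\eqref{c:pdl1} --- but as proposed the proof is incomplete.
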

\begin{proof}
First, observe that by \cref{prp:discover-order}, we have $d(v)\preceq_\lambda d(w)$. We proceed by considering four cases based on whether $v$ and $w$ belong to $L_0$ or to $L_1$ as follows: (a)~$v$ and $w$ belong to $L_0$, (b)~$v$ belongs to $L_0$ and $w$ belongs to $L_1$, (c)~$v$ belongs to $L_1$ and $w$ belongs to $L_0$, and (d)~$v$ and $w$ belong to $L_1$.

\begin{itemize}
\item[--] We start with Case~(a), in which $v$ and $w$ belong to $L_0$.
Since $v$ is $d(v)$-prime, it follows by \cref{prp:prime} that $v$ is also $f_v$-prime.
However, since $v$ is not the dominator of $f_v$, it follows that $d(v)=f_v$.
The same holds for vertex $w$ and the faces $d(w)$ and $f_w$.
Now, the claim $f_v\preceq_\lambda d(w)$ is an immediate consequence of the assumption $f_v \prec_\lambda f_w$.

\item[--] Consider now Case~(b), in which $v$ belongs to $L_0$ and $w$ belongs to $L_1$.
By \cref{prp:prime,c:pdl0}, we know that $v$ is $f_v$-prime.
By \cref{prp:first-the-discoverer}, we obtain $d(v)\preceq_\lambda f_v$.
If $d(v)\prec_\lambda f_v$, \cref{prp:prime} implies $v=\text{dom}(f_v)$ which contradicts \cref{c:pdl3}.
However, if $d(v)= f_v$, the claim follows from $d(v)\preceq_\lambda d(w)$.

\item[--] We proceed with Case~(c), in which $v$ belongs to $L_1$ and $w$ belongs to $L_0$.
Consider vertex $w$.
As above, by \cref{prp:prime,c:pdl1}, it follows that $w$ is $f_w$-prime and therefore, by \cref{c:pdl3}, $d(w)= f_w$ holds.
Recalling the assumption $f_v \prec_\lambda f_w$, the claim $f_v \preceq_\lambda d(w)$ is a direct consequence of $f_v \prec_\lambda f_w$.

\item[--] To complete the proof of the lemma, we consider Case~(d), in which $v$ and $w$ belong to $L_1$.
Assume to the contrary that $d(w) \prec_\lambda f_v$.
This implies $d(v) \preceq_\lambda d(w)\prec_\lambda f_v \prec_\lambda f_w$. 
We consider the two subcases, namely, $d(v) \prec_\lambda d(w)$ and $d(v)=d(w)$.
In the former case, since $v$ belongs to $L_1$, vertex $v$ belongs to the boundary of block $B(v)$ discovered by $d(v)$.
Similarly, vertex $w$ belongs to the boundary of block $B(w)$ discovered by $d(w)$.
Hence, we have $B(v) \neq B(w)$, as $d(v) \prec_\lambda d(w)$; see \cref{fig:pdl-14}.
The order $f_v \prec_\lambda f_w$ violates the planarity of $\sigma(G)$; a contradiction.
We now consider the case, in which $d(v)=d(w)$. Since $v$ belongs to $L_1$, vertex $v$ 
belongs to the boundary of block $B(v)$ discovered by $d(v)=d(w)$. 
Similarly, vertex $w$ belongs to the boundary of block $B(w)$ discovered by $d(v)=d(w)$. 
For the two blocks $B(v)$ and $B(w)$ either $B(v)\neq B(w)$ or $B(v)=B(w)$ holds.
First, assume that $B(v) \neq B(w)$; see \cref{fig:pdl-25}. 
$B(v)$ and $B(w)$ are discovered by the same face, and $v \prec_\rho w$. By Rule~\ref{r:2} it follows $B(v)$ precedes $B(w)$ in the counterclockwise traversal of $d(v)=d(w)$. With $f_v \prec_\lambda f_w$, 
the planarity of $\sigma(G)$ is violated; a contradiction. 
Next, assume $B(v) = B(w)$. 
Since $v \prec_\rho w$, by Rule~\ref{r:3}, in the counterclockwise traversal of $B(v) = B(w)$ 
starting from its leader, vertex $v$ precedes $w$; see \cref{fig:pdl-26}. 
The order $f_v \prec_\lambda f_w$ violates the planarity of $\sigma(G)$; a contradiction.
\end{itemize}
The above case analysis completes the proof.
\end{proof}

\noindent The next lemma reveals a relationship between two faces containing two edges that cross in the linear ordering.

\begin{lemma}\label{lem:alternation}
Let $v$, $w$, $x$ and $z$ be four vertices of $G$, such that $(v,w)$ and $(x,z)$ are two non-dominator edges of $G$, and $v \prec_\rho x \prec_\rho w \prec_\rho z$. 
Let $f_{vw}$ be a face with $v$ and $w$ on its boundary, and let $f_{xz}$ be a face with $x$ and $z$ 
on its boundary such that $f_{vw}$ and $f_{xz}$ are two distinct faces. Moreover, $v$ and $w$ are $f_{vw}$-prime, whereas $x$ and $z$ are $f_{xz}$-prime.
Then $d(x) = f_{vw}$ or $d(w)=f_{xz}$ holds.
\end{lemma}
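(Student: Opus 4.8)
The plan is to compare the two faces $f_{vw}$ and $f_{xz}$ in the ordering $\lambda(\mathcal{F})$. Since they are distinct and $\lambda(\mathcal{F})$ is a total order, exactly one of $f_{vw}\prec_\lambda f_{xz}$ and $f_{xz}\prec_\lambda f_{vw}$ holds, and I would prove the disjunct $d(x)=f_{vw}$ in the former case and $d(w)=f_{xz}$ in the latter; the two cases are handled analogously. Consider the case $f_{vw}\prec_\lambda f_{xz}$. The engine is \cref{lem:pdl} applied to the pair $(v,x)$: indeed $v\prec_\rho x$, the faces $f_{vw}$ and $f_{xz}$ contain $v$ and $x$ respectively with $f_{vw}\prec_\lambda f_{xz}$, and conditions (i)--(iii) of \cref{lem:pdl} are precisely the restriction of our hypotheses to $v$ and $x$ (their primeness, together with the fact that they are not the dominators of $f_{vw}$ and $f_{xz}$, which follows from $(v,w)$ and $(x,z)$ being non-dominator edges). \cref{lem:pdl} then yields $f_{vw}\preceq_\lambda d(x)$.

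To upgrade this to the desired equality, I would argue by contradiction: assume $f_{vw}\prec_\lambda d(x)$. Since $w$ lies on the boundary of $f_{vw}$, \cref{prp:first-the-discoverer} gives $d(w)\preceq_\lambda f_{vw}$, whence $d(w)\prec_\lambda d(x)$. Applying \cref{cor:vertex-to-discoverer} to the vertex $w$, the relation $d(w)\prec_\lambda d(x)$ forces $w\prec_\rho x$, contradicting the assumption $x\prec_\rho w$. Hence $d(x)=f_{vw}$. The case $f_{xz}\prec_\lambda f_{vw}$ is symmetric: \cref{lem:pdl} applied to $(x,w)$ gives $f_{xz}\preceq_\lambda d(w)$, and were this strict, then using $z$ on the boundary of $f_{xz}$ we would get $d(z)\preceq_\lambda f_{xz}\prec_\lambda d(w)$, so \cref{cor:vertex-to-discoverer} applied to $z$ would yield $z\prec_\rho w$, contradicting $w\prec_\rho z$; thus $d(w)=f_{xz}$.

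The step I expect to be the main obstacle is the verification of the primeness hypotheses required by \cref{lem:pdl} and \cref{cor:vertex-to-discoverer}, which are phrased for vertices that are prime with respect to their own discovering face (i.e.\ $d(\cdot)$-prime), whereas our hypotheses only provide primeness with respect to $f_{vw}$ and $f_{xz}$. For endpoints in $L_1$ there is nothing to check, as such vertices are $g$-prime for every intra-level face $g$ by definition. For endpoints in $L_0$ the delicate point is to reconcile the two notions: using \cref{prp:prime} and the non-dominator conditions, one has to establish that the face witnessing primeness is in fact the discovering face, so that the two notions coincide; whenever this reconciliation fails one cannot invoke the auxiliary lemmas directly and instead must reproduce their planarity argument on $\sigma(G)$ in the present configuration, ruling out the crossing order $v\prec_\rho x\prec_\rho w\prec_\rho z$ against the cyclic structure of $f_{vw}$ and $f_{xz}$.
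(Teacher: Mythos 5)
Your high-level architecture matches the paper's: it too splits on the $\lambda$-order of $f_{vw}$ and $f_{xz}$, uses \cref{lem:pdl} as the engine to rule out $d(x)\prec_\lambda f_{vw}$ in the first case, and implicitly exploits \cref{prp:first-the-discoverer} to see that only one disjunct is available in each case (in case $f_{vw}\prec_\lambda f_{xz}$ one has $d(w)\preceq_\lambda f_{vw}\prec_\lambda f_{xz}$, so only $d(x)=f_{vw}$ can hold, and symmetrically). However, the obstacle you defer at the end --- that \cref{lem:pdl} and \cref{cor:vertex-to-discoverer} require $d(\cdot)$-primeness while the hypotheses only give $f_{vw}$- and $f_{xz}$-primeness --- is not a routine verification; it is the bulk of the proof, and your sketched resolution is misdirected. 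For the left endpoints $v$ and $x$ there is no ``reconciliation'' of the two primeness notions to perform: the paper's first step shows these vertices must lie in $L_1$ outright. Indeed, if $v\in L_0$, then ($v$ being a non-dominator $f_{vw}$-prime vertex with $v\prec_\rho w$) $w$ lies in $L_0$ as well, and two $f_{vw}$-prime $L_0$ vertices leave no room for $x$ strictly between them in $\rho$ unless $f_{vw}=f_{xz}$, contradicting distinctness; the same argument with $x\prec_\rho w\prec_\rho z$ forces $x\in L_1$. This is what legitimizes applying \cref{lem:pdl} to the pair $(v,x)$, and it also yields $d(v)\preceq_\lambda d(x)$ via \cref{prp:discover-order}. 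You do not have this step, and your fallback (``establish that the face witnessing primeness is in fact the discovering face'') cannot replace it: for the left endpoints the $L_0$ case is simply impossible, not reconcilable.

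The deeper problem is your claim that the two cases are symmetric: they are not, because nothing forces the \emph{right} endpoints $w$ and $z$ into $L_1$, and an $L_0$ endpoint can genuinely fail $d(\cdot)$-primeness (for a non-dominator $f_{vw}$-prime $w\in L_0$, by \cref{prp:prime} this failure occurs exactly when $d(w)\prec_\lambda f_{vw}$, a configuration planarity permits). Consequently three of your steps are unjustified as written. First, in case (a) you invoke \cref{cor:vertex-to-discoverer} on $w$, which needs $w$ to be $d(w)$-prime; the paper instead argues directly that every $f_{vw}$-prime vertex precedes any vertex discovered by $d(x)\succ_\lambda f_{vw}$, crucially using $x\in L_1$. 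Second, in case (b) your application of \cref{lem:pdl} to $(x,w)$ needs $w$ to be $d(w)$-prime, which may be false; the paper does not use \cref{lem:pdl} here at all, but handles the branches $w\in L_1$, and $w\in L_0$ with $w$ not $d(w)$-prime, by dedicated planarity arguments on $\sigma(G)$ (\cref{fig:conflictlemma-11,fig:conflictlemma-12}). Third, your strictness-ruling step in case (b) applies \cref{cor:vertex-to-discoverer} to $z$, whose $d(z)$-primeness is again not given; the paper must argue separately that if $z$ is not $d(z)$-prime then the face $d(w)$ violates planarity (\cref{fig:conflictlemma-10}) before invoking the corollary. So while you located the obstacle correctly, the specific arguments that constitute the proof --- forcing $v,x\in L_1$, and the non-symmetric planarity analyses for the possibly non-prime $L_0$ endpoints $w,z$ --- are missing, and the proposal as it stands has a genuine gap.
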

\begin{proof}
We first show that $v$ cannot belong to $L_0$. Assume the contrary. Vertex $v$ is not the dominator of $f_{vw}$, and $v \prec_\rho w $. Thus, it follows that $w$ also belongs to $L_0$. Since $w$ is $f_{vw}$-prime, and $v \prec_\rho w $, the only way for $x$ to lie between $v$ and $w$ in $\rho$ is when $f_{vw}=f_{xz}$ holds; a contradiction.
The same argumentation holds for $x\prec_\rho w\prec_\rho z$.
Hence, we may assume that both $v$ and $x$ belong to $L_1$.
By \cref{prp:discover-order}, we have $d(v) \preceq_\lambda d(x)$.
We assume to the contrary that $d(x) \neq f_{vw}$ and $d(w) \neq f_{xz}$ hold.

We consider the two cases (a)~$f_{vw} \prec_\lambda f_{xz}$ and (b)~$f_{xz} \prec_\lambda f_{vw}$.
First, consider Case~(a). We continue by distinguishing between two subcases based on whether $f_{vw} \prec_\lambda d(x)$ or $d(x) \prec_\lambda f_{vw}$.
We start with $f_{vw} \prec_\lambda d(x)$. 
This implies that every $f_{vw}$-prime vertex precedes any $d(x)$-prime vertex that is discovered by $d(x)$. 
Since $w$ is $f_{vw}$-prime, and $x$ belongs to $L_1$, it follows that $w \prec_\rho x$; a contradiction.
Hence, we may focus on the case $d(x) \prec_\lambda f_{vw}$.
Our plan is to apply \cref{lem:pdl} on vertices $v$ and $x$ for which we know that $v \prec_\rho x$ and $f_{vw}\prec_\lambda f_{xz}$.
Since $v$ and $x$ belong to $L_1$, \cref{c:pdl0,c:pdl1} of \cref{lem:pdl} are satisfied.
Furthermore, with $(v,w)$ and $(x,z)$ being non-dominator edges, \cref{c:pdl3} of \cref{lem:pdl} holds as well.
Hence, by \cref{lem:pdl}, we obtain $f_{vw} \preceq_\lambda d(x)$. This contradicts the original assumption $d(x) \prec_\lambda f_{vw}$.

\begin{figure}[tb!]
	\centering
	\begin{subfigure}{.48\textwidth}
		\centering
		\includegraphics[width=.9\textwidth,page=12]{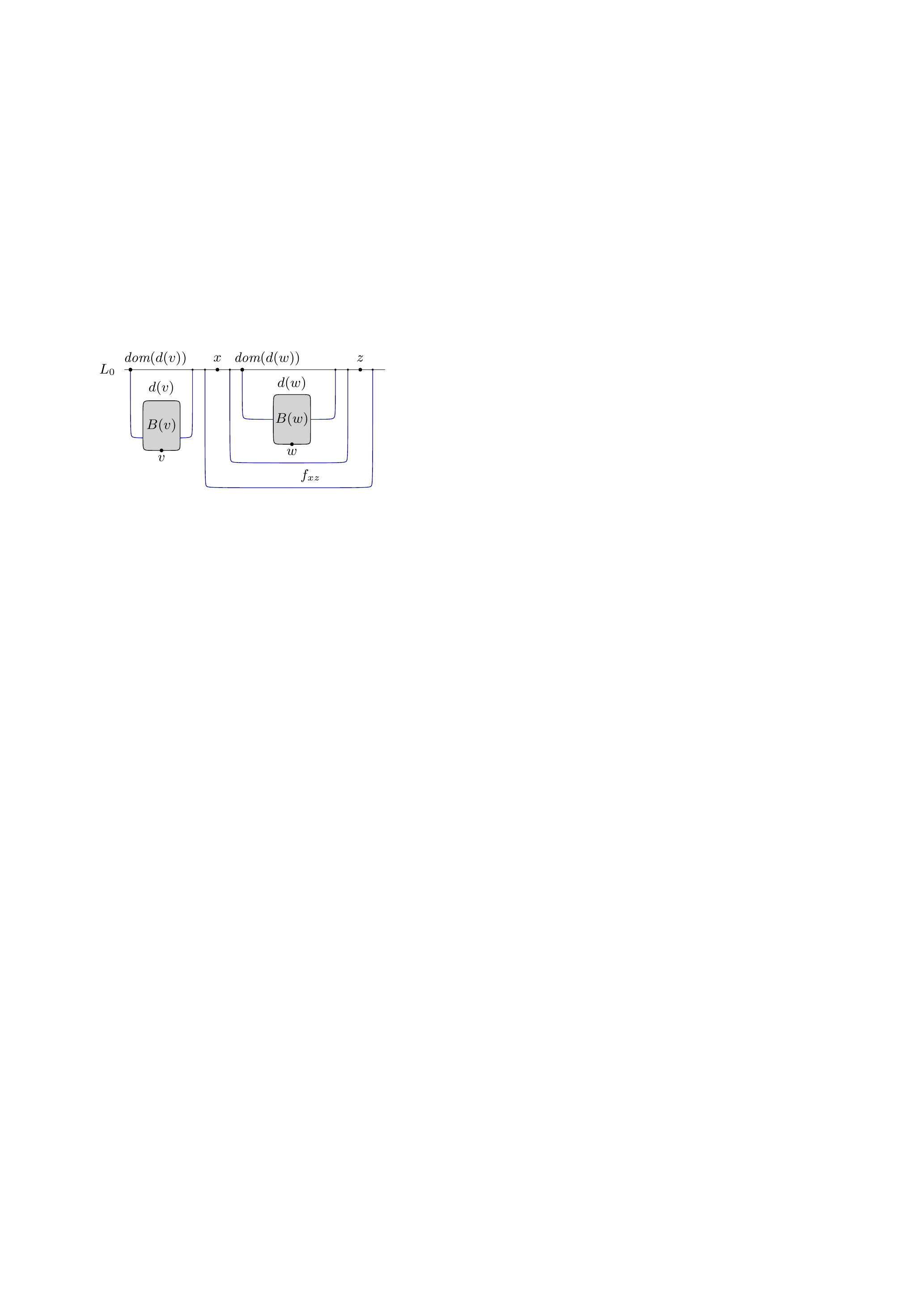}
		\subcaption{}
		\label{fig:conflictlemma-10}
	\end{subfigure}		
	\begin{subfigure}{.48\textwidth}
		\centering
		\includegraphics[width=.9\textwidth,page=16]{figures/conflictlemma.pdf}
		\subcaption{}
		\label{fig:conflictlemma-11}
	\end{subfigure}
	\begin{subfigure}{.48\textwidth}
		\centering
		\includegraphics[width=.9\textwidth,page=15]{figures/conflictlemma.pdf}
		\subcaption{}
		\label{fig:conflictlemma-12}
	\end{subfigure}			
	\caption{%
		Illustrations for the proof of \cref{lem:alternation}.}
	\label{fig:conflict-lemma}
\end{figure}

Next, consider Case~(b), in which $f_{xz} \prec_\lambda f_{vw}$. 
By \cref{prp:first-the-discoverer}, we have $d(x) \preceq_\lambda f_{xz}$ which together with $d(v) \preceq_\lambda d(x)$ implies $d(v) \preceq_\lambda d(x) \preceq_\lambda f_{xz} \prec_\lambda f_{vw}$.
By assumption, $d(w) \neq f_{xz}$.
We continue by considering two subcases based on whether $f_{xz} \prec_\lambda d(w)$ or $d(w) \prec_\lambda f_{xz}$.
First, assume $f_{xz} \prec_\lambda d(w)$. By \cref{prp:first-the-discoverer}, it follows that $d(z) \preceq_\lambda f_{xz}$. The latter two inequalities imply $d(z) \prec_\lambda d(w)$. 
We may assume that $z$ is $d(z)$-prime, since otherwise face $d(w)$ violates planarity as shown in \cref{fig:conflictlemma-10}.
Hence, by \cref{cor:vertex-to-discoverer}, it follows that $z \prec_\rho w$ which contradicts our assumption $w \prec_\rho z$.
Hence, in the following we consider the case $d(w) \prec_\lambda f_{xz}$.
We distinguish two subcases based on whether $w$ belongs to $L_0$ or to $L_1$.
First, consider the case, in which $w$ belongs to $L_0$.
If $w$ is $d(w)$-prime, $d(w)=f_{vw}$ follows by \cref{prp:prime} since $w$ is not the dominator of $f_{vw}$.
Therefore, we have $d(w) \prec_\lambda f_{xz}\prec_\lambda f_{vw}=d(w)$; a contradiction.
Thus, we may assume that $w$ is not $d(w)$-prime which yields $d(w) \prec_\lambda f_{xz}\prec_\lambda f_{vw}$.
However, since $z$ is $f_{xz}$-prime, we have $z\prec_\rho w$ as shown in \cref{fig:conflictlemma-11}; a contradiction.
To compete the proof of the lemma, it remains to consider the case, in which $w$ belongs to $L_1$.
Observe that $d(v) \preceq_\lambda d(x) \preceq_\lambda d(w)$ by \cref{prp:discover-order}.
This yields $d(v) \preceq_\lambda d(x) \preceq_\lambda d(w) \prec_\lambda f_{xz}\prec_\lambda f_{vw}$.
As illustrated in~\cref{fig:conflictlemma-12}, $f_{vw}$ violates the planarity of $\sigma(G)$.
\end{proof}

\noindent Observe that in \cref{lem:alternation} the edges $(v,w)$ and $(x,z)$ form two non-dominator edges that cannot be assigned to the same page. \cref{lem:alternation} translates this conflict into a relationship between the two faces $f_{vw}$ and $f_{xz}$ containing these edges. In the following, we model these conflicts as edges of an auxiliary graph which we call the \emph{conflict graph} and denote by $\mathcal{C}(G)$; see also \cref{fig:example-conflict} for an illustration.

\begin{figure}[t!]
	\centering
	\includegraphics[width=.7\textwidth,page=4]{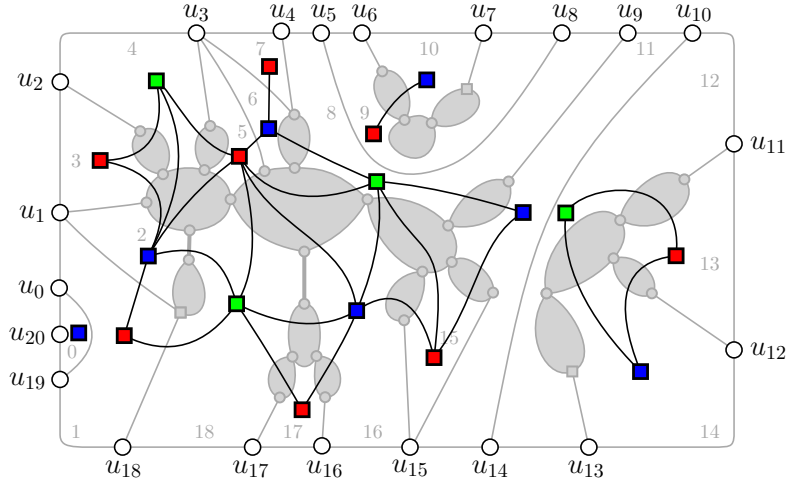}
	\caption{The conflict graph of the example illustrated in \cref{fig:example}. }
	\label{fig:example-conflict}
\end{figure}

\begin{definition}
The \emph{conflict graph} $\mathcal{C}(G)$ of $G$ is an undirected graph whose vertices are the faces of $\mathcal{F}$. There exists an edge $(f,g)$ with $f \neq g$ in $\mathcal{C}(G)$ if and only if there exists a vertex $w$ of level $L_1$ on the boundary of $g$ such that $f=d(w)$.
\end{definition}

\noindent With this definition, we are can restate \cref{lem:alternation} as follows.

\begin{lemma}\label{lem:magic-revised}
Let $(v,w)$ and $(x,z)$ be two non-dominator edges of $G$ belonging to two distinct faces $f_{vw}$ and $f_{xz}$
such that $v$ and $w$ are $f_{vw}$-prime, $x$ and $z$ are $f_{xz}$-prime, $v \prec_\rho w$, and $x \prec_\rho z$.
If $(v,w)$ and $(x,z)$ cross in $\rho$, then there is an edge $(f_{vw},f_{xz})$ in $\mathcal{C}(G)$.
\end{lemma}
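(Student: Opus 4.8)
The plan is to derive this statement directly from \cref{lem:alternation}, which already translates a crossing into a relation between the two discovering faces, and then to read off the required edge of $\mathcal{C}(G)$ straight from the definition of the conflict graph. First I would unfold the crossing hypothesis: since $v \prec_\rho w$ and $x \prec_\rho z$, two edges cross in $\rho$ exactly when their vertex intervals interleave, so one of $v \prec_\rho x \prec_\rho w \prec_\rho z$ or $x \prec_\rho v \prec_\rho z \prec_\rho w$ holds. The hypotheses here (both edges non-dominator, all endpoints prime with respect to their own faces, and distinct faces) are symmetric in the two edges, so in the second pattern I would simply swap the names of $(v,w)$ and $(x,z)$ and assume without loss of generality the canonical interleaving $v \prec_\rho x \prec_\rho w \prec_\rho z$. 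Now every hypothesis of \cref{lem:alternation} is met verbatim, and it gives the dichotomy $d(x)=f_{vw}$ or $d(w)=f_{xz}$.

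It then remains to turn each branch into an edge of $\mathcal{C}(G)$, and here I would reuse a fact from the proof of \cref{lem:alternation}, namely that the leading endpoints $v$ and $x$ of the two crossing pairs must lie in $L_1$. In the branch $d(x)=f_{vw}$ this suffices at once: $x\in L_1$ lies on the boundary of $f_{xz}$ and is discovered by $f_{vw}$, so the pair $(f_{vw},f_{xz})$ meets the defining condition of a conflict edge (with $g=f_{xz}$ and $f=f_{vw}$). The branch $d(w)=f_{xz}$ reads identically whenever $w\in L_1$, now using $w$ as the witnessing $L_1$ vertex on the boundary of $f_{vw}$.

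The delicate point, which I expect to be the real obstacle, is the branch $d(w)=f_{xz}$ when $w\in L_0$, since $\mathcal{C}(G)$ only records $L_1$ witnesses. By \cref{prp:first-the-discoverer} we have $f_{xz}=d(w)\preceq_\lambda f_{vw}$, hence $f_{xz}\prec_\lambda f_{vw}$; this already rules out any witness on $\partial f_{xz}$ and forces the witness to be an $L_1$ vertex on $\partial f_{vw}$ discovered by $f_{xz}$. Writing $w=u_j$, I would invoke \cref{prp:prime}: if $w$ were $d(w)$-prime, the non-dominator condition would force $d(w)=f_{vw}$, contradicting $f_{xz}\ne f_{vw}$; so $w$ is not $d(w)$-prime, and its primality with respect to $f_{vw}$ together with $w\ne\mathrm{dom}(f_{vw})$ pins $f_{vw}$ to be the first interior face incident to $u_j$, i.e. $f_{vw}=f_0^{u_j}$, incident to the cycle edge $(u_{j-1},u_j)$. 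With $f_{vw}$ so identified, I would argue by planarity of $\sigma(G)$, in the same spirit as \cref{lem:alternation,lem:pdl}, that the strictly earlier face $f_{xz}$, which also contains $u_j$, must discover an $L_1$ vertex lying on $\partial f_{vw}$; the natural candidate is $v$ itself, for which $d(v)\preceq_\lambda d(x)\preceq_\lambda f_{xz}$ already holds, and upgrading this inequality to $d(v)=f_{xz}$ yields the witness. I expect this last planarity bookkeeping, rather than the main reduction, to carry the technical weight of the proof.
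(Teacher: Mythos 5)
Your proposal is correct and takes essentially the same route as the paper's proof: the same reduction (w.l.o.g.\ $v \prec_\rho x \prec_\rho w \prec_\rho z$ by symmetry), the same reuse of the argument from \cref{lem:alternation} that $v,x \in L_1$, the same dichotomy $d(x)=f_{vw}$ or $d(w)=f_{xz}$ with the easy branches read off the definition of $\mathcal{C}(G)$, and in the hard branch ($d(w)=f_{xz}$ with $w\in L_0$) the same chain $d(v)\preceq_\lambda d(x)\preceq_\lambda f_{xz}\prec_\lambda f_{vw}$ upgraded to $d(v)=f_{xz}$ so that $v$ serves as the $L_1$ witness. The final planarity step you flag as the technical weight is justified in the paper exactly as you sketch it: both $d(v)$ and $f_{vw}$ bound the block $B(v)$, which is impossible under $d(v)\prec_\lambda f_{xz}$ without violating the planarity of $\sigma(G)$.
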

\begin{proof} 
Without loss of generality, we may assume $v\prec_\rho x\prec_\rho w\prec_\rho z$.
As in the proof of \cref{lem:alternation}, we first show that $v$ and $x$ belong to $L_1$.
Furthermore, by \cref{lem:alternation}, we have that $f_{vw}=d(x)$ or $f_{xz}=d(w)$ holds.
Since $x$ belongs to $L_1$, it follows that there is an edge $(f_{vw},f_{xz})$ in $\mathcal{C}(G)$ if $f_{vw}=d(x)$ holds.
Thus, consider $f_{xz}=d(w)$.
If $w$ belongs to $L_1$, it follows that there is an edge $(f_{vw},f_{xz})$ in $\mathcal{C}(G)$.
Hence, assume that $w$ is on $L_0$.
Recall that $f_{vw} \neq f_{xz}$ holds, vertex $w$ is $f_{vw}$-prime, and we have $w\neq \text{dom}(f_{vw})$, since $(v,w)$ is not a dominator edge.
We split the proof into the two cases (a)~$f_{vw} \prec_\lambda f_{xz}$ and (b)~$f_{xz} \prec_\lambda f_{vw}$.
In Case (a), we get $d(w)\preceq_\lambda f_{vw} \prec_\lambda f_{xz} =d(w)$ by \cref{prp:first-the-discoverer}; a contradiction.
In Case (b), we observe that if $w$ is $d(w)$-prime, we have $d(w)=f_{xz}\prec_\lambda f_{vw}$ and thus, $w=\text{dom}(f_{vw})$ by \cref{prp:prime}; a contradiction.
Hence, we may assume that $w$ is not $d(w)$-prime.
However, since $w$ is $f_{vw}$-prime and $w\neq \text{dom}(f_{vw})$, there is at least one vertex on $L_0$ right before $w$ in a clockwise traversal of $L_0$ that is also on the boundary of $f_{vw}$. This is illustrated in \cref{fig:conflict-L1-1}.
Now, recall that by \cref{prp:discover-order} and since $v$ and $x$ belong to $L_1$, we have $d(v)\preceq_\lambda d(x)$.
Together with \cref{prp:first-the-discoverer}, we conclude that $d(v)\preceq_\lambda d(x) \preceq_\lambda f_{xz}$.
In fact, $d(v)= d(x) = f_{xz}$ has to hold; otherwise not both $d(v)$ and $f_{vw}$ could bound the block $B(v)$ without violating the planarity of $\sigma(G)$. Since $d(v)= f_{xz}$ and since $v$ belongs to $L_1$, the edge $(f_{vw},f_{xz})$ exists in $\mathcal{C}(G)$.
\end{proof}

\begin{figure}[t]
	\centering
	\includegraphics[width=.4\textwidth,page=1]{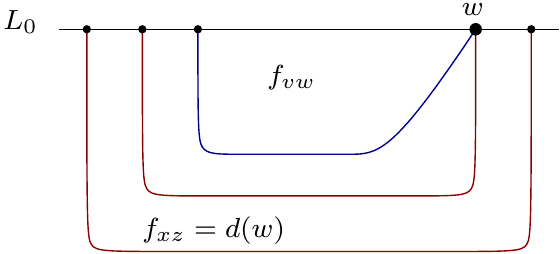}
	\caption{Illustration for the proof of \cref{lem:magic-revised}.}
	\label{fig:conflict-L1-1}
\end{figure}

\noindent In the following lemma, we prove an important property of the conflict graph.

\begin{lemma}\label{lem:one-page}
Graph $\mathcal{C}(G)$ is 1-page book embeddable.
\end{lemma}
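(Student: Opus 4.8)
The plan is to use the linear ordering $\lambda(\mathcal{F})$ of the faces \emph{themselves} as the spine order for $\mathcal{C}(G)$, place all edges on a single page, and show that no two of them cross. To set this up I would first observe that every edge of $\mathcal{C}(G)$ is \emph{forward} with respect to $\lambda(\mathcal{F})$: if $(f,g)$ is an edge, then by definition there is a vertex $w\in L_1$ on the boundary of $g$ with $f=d(w)$, and \cref{prp:first-the-discoverer} gives $f=d(w)\preceq_\lambda g$; since $f\neq g$, we get $f\prec_\lambda f$... i.e. $f\prec_\lambda g$. Thus each edge spans an interval $[f,g]$ in $\lambda(\mathcal{F})$, and a single page suffices precisely when these intervals are laminar, that is, when no two of them properly interleave.

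Before the main argument I would record a monotonicity property of $\lambda(\mathcal{F})$ that is immediate from its construction: a face is appended to $\lambda(\mathcal{F})$ exactly when the sweep around $C_0(G)$ first reaches the $L_0$-vertex of smallest subscript on its boundary, which is $\text{dom}(f)$; hence the subscripts of the dominators are non-decreasing along $\lambda(\mathcal{F})$. This lets me translate $\lambda$-order statements about faces into subscript-order statements about their dominators on the boundary cycle.

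For the contradiction, suppose two edges $(f_1,g_1)$ and $(f_2,g_2)$ cross, i.e.\ $f_1\prec_\lambda f_2\prec_\lambda g_1\prec_\lambda g_2$, witnessed by $w_1\in L_1$ on $\partial g_1$ with $d(w_1)=f_1$ and $w_2\in L_1$ on $\partial g_2$ with $d(w_2)=f_2$. Since $w_1,w_2\in L_1$ are $d(\cdot)$-prime and $d(w_1)=f_1\prec_\lambda f_2=d(w_2)$, \cref{cor:vertex-to-discoverer} yields $w_1\prec_\rho w_2$. Moreover the blocks satisfy $B(w_1)\neq B(w_2)$ (otherwise $f_1=d(B(w_1))=d(B(w_2))=f_2$, a contradiction) and therefore $B(w_1)\prec B(w_2)$, consistently with \cref{prp:block-vertex-order}. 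The task is then to show that the configuration in which $f_2=d(w_2)$ lies strictly between $f_1=d(w_1)$ and a face $g_1$ carrying $w_1$, while a face $g_2$ carrying $w_2$ comes even later, cannot be realized in the plane graph $\sigma(G)$.

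The final and hardest step is this planarity bookkeeping, which I expect to mirror the case analysis of \cref{lem:alternation}. I would distinguish cases according to whether the dominators of $g_1$ and $g_2$ coincide with those of $f_1$ and $f_2$, and then exploit that the intra-level faces tile the annular region between $L_0$ and the $L_1$-cycles and are met in a single rotational sweep. Having $B(w_2)$ first discovered at $f_2$, then the face $g_1\ni w_1$, and only afterwards the face $g_2\ni w_2$ would force the boundaries of $g_1$ and $g_2$, together with the territories of the blocks $B(w_1)$ and $B(w_2)$, to wrap around one another, contradicting the planarity of $\sigma(G)$. I anticipate the main obstacle to be precisely that the discoverer $d(w)$ need not itself contain $w$ on its boundary, so the interval $[d(w),g]$ can start before every face incident to $w$; handling this gap cleanly, via \cref{prp:vertex-on-boundary} to locate a boundary vertex of the block on $d(w)$, is where the argument must be most careful.
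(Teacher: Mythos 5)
Your setup coincides with the paper's own proof right up to the decisive step: the same spine order $\lambda(\mathcal{F})$, the same elimination of the ``backward'' orientation of conflict edges via \cref{prp:first-the-discoverer} (the witness vertex must lie on the \emph{later} face of each edge), and the same deduction $w_1 \prec_\rho w_2$ from \cref{cor:vertex-to-discoverer}. But the argument then stops where the real work begins: the claim that the pattern $f_1 \prec_\lambda f_2 \prec_\lambda g_1 \prec_\lambda g_2$ ``cannot be realized in the plane graph $\sigma(G)$'' is only announced, with a sketch of an intended case analysis (``I would distinguish cases\dots I expect\dots I anticipate\dots''). That planarity bookkeeping is not routine --- it genuinely splits into cases according to whether $B(w_1)=B(w_2)$, whether the discoverers coincide, and how the relevant faces sit around a common block, and, as you yourself note, it is complicated by the fact that $d(w)$ need not contain $w$ on its boundary. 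Left at this level, the lemma is not proved, so this is a genuine gap.

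What you are missing is that the paper has already packaged exactly this planarity argument as \cref{lem:pdl}, and with it the proof closes in one line. Apply \cref{lem:pdl} to the pair $w_1 \prec_\rho w_2$ with the faces $f_v := g_1$ and $f_w := g_2$, which contain $w_1$ and $w_2$ on their boundaries and satisfy $g_1 \prec_\lambda g_2$: \cref{c:pdl0,c:pdl1} hold because $w_1, w_2 \in L_1$ are trivially $d(\cdot)$-prime, and \cref{c:pdl3} holds because $L_1$-vertices are never dominators of intra-level faces (dominators lie on $L_0$ by definition). The conclusion $g_1 \preceq_\lambda d(w_2) = f_2$ contradicts $f_2 \prec_\lambda g_1$, finishing the proof. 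Note also that your plan to ``mirror the case analysis of \cref{lem:alternation}'' points the dependency the wrong way: \cref{lem:alternation} itself invokes \cref{lem:pdl} for precisely this kind of configuration, so the correct route runs \emph{through} \cref{lem:pdl}, not around it. (Your auxiliary observations --- the monotonicity of dominators along $\lambda(\mathcal{F})$ and $B(w_1) \neq B(w_2)$ --- are correct but become unnecessary once \cref{lem:pdl} is used.)
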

\begin{proof}
We order the vertices of $\mathcal{C}(G)$ according to $\lambda(\mathcal{F})$. 
Suppose for contradiction that two edges $(f,g)$ and $(f',g')$ of $\mathcal{C}(G)$ cross in $\lambda(\mathcal{F})$
such that, without loss of generality, $f \prec_\lambda f' \prec_\lambda g \prec_\lambda g'$.
By definition~of~$\mathcal{C}(G)$, there is either a vertex $v$ of level $L_1$ on the boundary of $f$ such that $g=d(v)$, or there is a vertex $w$ of level $L_1$ on the boundary of $g$ such that $f=d(w)$.
In the first case, by \cref{prp:first-the-discoverer}, we have $d(v)\preceq_\lambda f$, which contradicts $g=d(v)\preceq_\lambda f\prec_\lambda g$.
Now consider the second case.
We argue analogously for the edge $(f',g')$.
Hence, there exist two vertices $w$ and $w'$ of level $L_1$ on the boundaries of $g$ and $g'$, respectively, such that $f=d(w)$ and $f'=d(w')$ hold. 
This yields $d(w) \prec_\lambda d(w') \prec_\lambda g \prec_\lambda g'$.
Since $w$ and $w'$ belong to $L_1$, they are $d(w)$- and $d(w')$-prime, respectively.
By \cref{cor:vertex-to-discoverer} and since $w\neq w'$, we have $w\prec_\rho w'$.
Now we apply \cref{lem:pdl} on $w$ and $w'$ with $f_v=g$ and $f_w=g'$, and obtain $g \preceq_\lambda d(w)$, a contradiction to the fact that $d(w) \prec_\lambda g$.
\end{proof}

\noindent Since $\mathcal{C}(G)$ is 1-page book embeddable, it is outerplanar~\cite{DBLP:journals/jct/BernhartK79}. Hence, the following corollary becomes a direct implication of \cref{lem:one-page}.

\begin{corollary}\label{lem:3-coloring}
Graph $\mathcal{C}(G)$ admits a vertex coloring with three colors.
\end{corollary}

We are now ready to describe how to assign the edges of $G$ to the pages of the book embedding. First, we embed all backward edges in a single page $p_0$ and all forward edges in a single page $p_1$. 
By \cref{lem:backward,lem:forward}, this assignment is valid. Next, we assign the remaining edges of $G$ to a total of $3 \cdot\left\lceil \frac{k}{2} \right\rceil$ pages. To ease the description, we partition these pages into three sets $R^1$, $B^1$, and $G^1$, each containing $\left\lceil \frac{k}{2} \right\rceil$ pages as follows: $R^1 = \{r^1_1,\ldots,r^1_{\lceil k/2 \rceil}\}$, $B^1 = \{b^1_1,\ldots,b^1_{\lceil k/2 \rceil}\}$, and $G^1 = \{g^1_1,\ldots,g^1_{\lceil k/2 \rceil}\}$.
The actual assignment is done by processing the intra-level faces of $\mathcal{F}$ according to the ordering $\lambda(\mathcal{F})$. Assume that we have processed a certain number of faces in this order and that we have assigned all the non-dominator edges of $G$ that are induced by the vertices of these faces in the pages mentioned above. Let $f$ be the next face to process.
By \cref{lem:3-coloring}, face $f$ has a color out of three available ones, say red, blue, and green. 
Now, observe that the vertices of $f$ induce at most a $k$-clique $Q_f$ in $G$.
Also, observe that some of the edges on the boundary of $f$ may have been already assigned to a page.
We assign the remaining non-dominator edges of $Q_f$ to the pages of one of the sets $R^1$, $B^1$, and $G^1$ according to the color of $f$.
Since $Q_f$ is at most a $k$-clique, $\left\lceil \frac{k}{2} \right\rceil$ pages are sufficient regardless of the underlying linear order~\cite{DBLP:journals/jct/BernhartK79}. 

The remainder of this section is devoted in proving that the (non-dominator) edges assigned to the pages in $R^1$, $B^1$, and $G^1$ do not cross, and thus that the computed book embedding is valid. Consider two non-dominator edges $(v,w)$ and $(x,z)$, and let $f_{vw}$ and $f_{xz}$ be the faces of $\mathcal{F}$ responsible for assigning $(v,w)$ and $(x,z)$ to one of the pages of $R^1 \cup B^1 \cup G^1$.
If $v$ and $w$ are $f_{vw}$-prime, and if $x$ and $z$ are $f_{xz}$-prime, then by \cref{lem:magic-revised}, we know that $(v,w)$ and $(x,z)$ do not cross.
Hence, we may assume that the edges $(v,w)$ and $(x,z)$ are incident to vertices that are not prime with respect to the face that belongs to that edge. In this direction, we need a few auxiliary lemmata.

\begin{property}\label{prp:prime-order}
Let $v$ and $w$ be two vertices with $v \prec_\rho w$ on the boundary of a face $f_{vw}$. If $w$ is $f_{vw}$-prime, 
then $v$ is also $f_{vw}$-prime. If $v$ is not $f_{vw}$-prime, then $w$ is not $f_{vw}$-prime.
\end{property}
\begin{proof}
Both claims follow from the fact that all vertices that are $f_{vw}$-prime precede 
those that are not $f_{vw}$-prime. Since $v \prec_\rho w$, the property follows. 
\end{proof}

\begin{property}\label{prp:dom-L0-L0}
Let $v$ and $w$ be two vertices of $G$. If the following conditions hold, then  $d(v)= f_{vw}$.
\begin{enumerate}[(i)]
\item \label{prp:dom-L0-L0-1}$v$ and $w$ belong to $L_0$,
\item \label{prp:dom-L0-L0-2}$v$ and $w$ are on the boundary of a face $f_{vw}$, and
\item \label{prp:dom-L0-L0-3}$\text{dom}(f_{vw}) \prec_\rho v\prec_\rho w$.
\end{enumerate}
\end{property}
\begin{proof}
Condition~\ref{prp:dom-L0-L0-1} and \cref{prp:first-the-discoverer} imply that $d(v) \preceq_\lambda f_{vw}$. To prove the property, assume to the contrary $d(v) \prec_\lambda f_{vw}$. Since, by Condition~\ref{prp:dom-L0-L0-3}, $\text{dom}(f_{vw})$ precedes $v$, vertex $v$ cannot be prime with respect to face $d(v)$ that discovers $v$. However, it follows that $v$ is the last vertex on $L_0$ in the ordering $\rho$ that is on the boundary of $f_{vw}$; see \cref{fig:dom-L0-L0}. This contradicts the existence of vertex $w$, which is also on $L_0$ (by Condition~\ref{prp:dom-L0-L0-1}), on the boundary of $f_{vw}$ and follows $v$ in the ordering $\rho$ (by Condition~\ref{prp:dom-L0-L0-3}).
\end{proof}

\begin{figure}[t!]
    \centering
    \begin{subfigure}{.48\textwidth}
	\includegraphics[width=.8\textwidth,page=4]{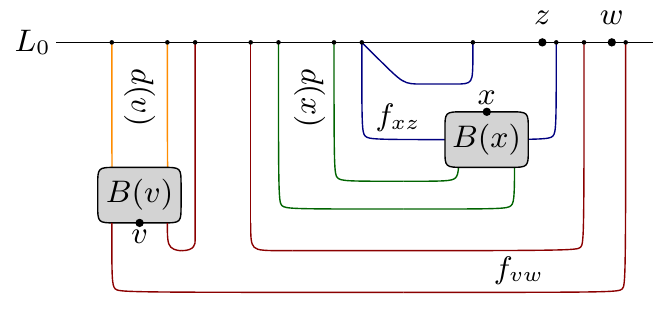}
	\caption{}
	\label{fig:dom-L0-L0}
	\end{subfigure}
	\begin{subfigure}{.48\textwidth}
    \centering
	\includegraphics[width=.8\textwidth,page=5]{figures/prime-xz}
	\caption{}
	\label{fig:L0-v-L0}
	\end{subfigure}
	\caption{Illustrations for the proofs of 
	(a)~\cref{prp:dom-L0-L0}, and 
	(b)~\cref{prp:L0-v-L0}}
\end{figure}

\begin{property}\label{prp:L0-v-L0}
Let $v$, $w$ and $x$ be three vertices of $G$. If the following conditions hold, then  $f_{vw} \preceq_\lambda d(x)$.
\begin{enumerate}[(i)]
\item \label{prp:L0-v-L0-1} $v$ and $w$ belong to $L_0$,
\item \label{prp:L0-v-L0-2} $v$ and $w$ are on the boundary of a face $f_{vw}$, and
\item \label{prp:L0-v-L0-3} $\text{dom}(f_{vw}) \prec_\rho v\prec_\rho x \prec_\rho w$.
\end{enumerate}
\end{property}
\begin{proof}
Assume to the contrary that $d(x) \prec_\lambda f_{vw}$, which implies that $\text{dom}(d(x)) \preceq_\rho \text{dom}(f_{vw})$. Hence, by Condition~\ref{prp:L0-v-L0-3}, we obtain $\text{dom}(d(x)) \preceq_\rho \text{dom}(f_{vw}) \prec_\rho v\prec_\rho x \prec_\rho w$.
Recall that $x$ is placed between $v$ and $w$ (by Condition~\ref{prp:L0-v-L0-3}), both $v$ and $w$ belong to $L_0$ (by Condition~\ref{prp:L0-v-L0-1}) and on the boundary of $f_{vw}$ (Condition~\ref{prp:L0-v-L0-2}), and neither $v$ nor $w$ is the dominator of $f_{vw}$ (by Condition~\ref{prp:L0-v-L0-3}).
It follows that either $x$ also belongs to $L_0$, or $x$ is discovered by a face $d(x)$ with $v\preceq_\rho \text{dom}(d(x))$.
The latter case contradicts the fact that $\text{dom}(d(x)) \prec_\rho v$.
In the former case, it follows that the faces $d(x)$ and $f_{vw}$ violate planarity of $\sigma(G)$; refer to \cref{fig:L0-v-L0} for an illustration. Since both cases have been led to a contradiction, the proof follows.
\end{proof}

\clearpage

\begin{figure}[t!]
    \centering
	\includegraphics[width=.35\textwidth,page=6]{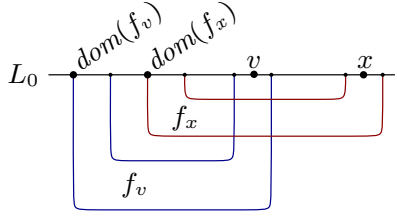}
	\caption{Illustration for the proof of \cref{prp:dom-dom-L0}.}
	\label{fig:dom-dom-L0}
\end{figure}

\begin{property}\label{prp:dom-dom-L0}
Let $v$ and $x$ be two vertices of $G$. If the following conditions hold, then $\text{dom}(f_{v}) \preceq_\rho \text{dom}(f_{x})\preceq_\rho x \prec_\rho v$.
\begin{enumerate}[(i)]
\item\label{prp:dom-dom-L0-1} $v$ and $x$ belong to $L_0$,
\item\label{prp:dom-dom-L0-2} $v$ is on the boundary of a face $f_{v}$,
\item\label{prp:dom-dom-L0-3} $x$ is on the boundary of a face $f_{x}$,
\item\label{prp:dom-dom-L0-4} $f_v \prec_\lambda f_x$, and
\item\label{prp:dom-dom-L0-5}$\text{dom}(f_{x})\prec_\rho v$.
\end{enumerate}
\end{property}
\begin{proof}
By \cref{prp:dom-dom-L0-4}, we obtain $\text{dom}(f_{v}) \preceq_\rho \text{dom}(f_{x})$.
Since $x$ is on the boundary of $f_{x}$ (by Condition~\ref{prp:dom-dom-L0-3}) and on $L_0$ (by Condition~\ref{prp:dom-dom-L0-1}), it follows that $\text{dom}(f_{x})\preceq_\rho x$.
This together with Condition~\ref{prp:dom-dom-L0-5} imply that, in order to prove the property, it suffices to show that $x \prec_\rho v$; recall that we have already shown that $\text{dom}(f_{v}) \preceq_\rho \text{dom}(f_{x})$. Assume to the contrary that $v \prec_\rho x$.
By \cref{prp:dom-dom-L0-4,prp:dom-dom-L0-5}, it follows that $v$ is not $f_v$-prime.
Since $v \prec_\rho x$, this leads to the order $\text{dom}(f_{v}) \preceq_\rho \text{dom}(f_{x})\prec_\rho v\prec_\rho x$ and all of these vertices belong to $L_0$ (by Condition~\ref{prp:dom-dom-L0-1}). Together with \cref{prp:dom-dom-L0-4}, this violates the planarity of $\sigma(G)$, as illustrated in \cref{fig:dom-dom-L0}.
\end{proof}

\begin{lemma}\label{lem:excluding}
Let $x$ and $z$ be two vertices of $G$ belonging to the boundary of a face $f_{xz}$ such that $\text{dom}(f_{xz})\prec_\rho x \prec_\rho z$, and let $f$ be a face preceding $f_{xz}$ in $\lambda(\mathcal{F})$, that is, $f \prec_\lambda f_{xz}$.  
Then, for any vertex $y$ of $G$ with $x \prec_\rho y \prec_\rho z$, we have that $y$ is not on the boundary of $f$.
\end{lemma}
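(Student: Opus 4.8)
The plan is to prove this by contradiction, assuming that $y$ lies on the boundary of $f$ while $f \prec_\lambda f_{xz}$ and $x \prec_\rho y \prec_\rho z$. The underlying intuition is geometric: since $\text{dom}(f_{xz}) \prec_\rho x \prec_\rho z$, the vertices $x$ and $z$ bound a ``window'' along the spine, and the planarity of $\sigma(G)$ should prevent a vertex of an \emph{earlier} face $f$ from being trapped inside this window. The main tool will be the crossing-type relationships between faces and the discover-order properties established above, especially \cref{prp:first-the-discoverer}, \cref{prp:discover-order}, \cref{cor:vertex-to-discoverer}, and the two-level structural lemma \cref{lem:pdl}.

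First I would set up the contradiction: suppose $y$ is on the boundary of $f$. I would then locate $x$, $y$, and $z$ relative to the discover-order $\lambda(\mathcal{F})$ by splitting into cases according to which levels these vertices belong to ($L_0$ versus $L_1$), mirroring the case structure used in \cref{lem:pdl} and \cref{lem:alternation}. A key observation is that $y$ sits strictly between $x$ and $z$ in $\rho$, so $y$ cannot be an $L_0$-vertex lying outside the block structure discovered between $x$ and $z$; I would use \cref{prp:discover-order} and \cref{cor:vertex-to-discoverer} to pin down $d(y)$ relative to $d(x)$ and $d(z)$. Since $x$ and $z$ are on $f_{xz}$ with $\text{dom}(f_{xz}) \prec_\rho x$, and $f \prec_\lambda f_{xz}$, I expect to derive from \cref{prp:first-the-discoverer} the chain $d(y) \preceq_\lambda f \prec_\lambda f_{xz}$, which says that $y$ is discovered no later than $f$, strictly before $f_{xz}$.

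The crux of the argument should be a planarity violation. Once I have $d(y) \prec_\lambda f_{xz}$ together with $x \prec_\rho y \prec_\rho z$ and the fact that $x,z$ both lie on $f_{xz}$, I would argue that the presence of $y$ on the boundary of a face $f$ preceding $f_{xz}$, with $y$ nested strictly between the two boundary vertices $x$ and $z$ of $f_{xz}$, forces the boundaries of $f$ and $f_{xz}$ to interleave around the spine in a way incompatible with both faces being bounded by simple cycles of the planar biconnected \skeleton $\sigma(G)$. Concretely, I expect to apply \cref{lem:pdl} (with a suitable identification of its hypothesis vertices) to the pair $x,y$ or to the pair $y,z$ to obtain an inequality on discover-faces that contradicts $f \prec_\lambda f_{xz}$; alternatively, a direct planarity argument analogous to the one depicted for \cref{prp:L0-v-L0} and \cref{prp:dom-dom-L0} should close the remaining case.

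The hardest part will be handling the sub-cases where $y \in L_1$ versus $y \in L_0$ cleanly, since the notion of ``discovery'' behaves differently for the two levels, and the block-assignment machinery (\cref{prp:consecutively}, \cref{prp:block-vertex-order}) must be invoked to control where the vertices of a block assigned to $y$'s component sit relative to $x$ and $z$ in $\rho$. In particular, I must rule out the degenerate possibility that $y$ belongs to the same block or the same intra-level structure that bounds $f_{xz}$, which would otherwise allow $y$ on $f$ without contradiction; here I anticipate needing \cref{prp:first-the-discoverer} to force $d(y)=f$ or $d(y) \preceq_\lambda f$ and then a careful planarity comparison of the cyclic boundaries of $f$ and $f_{xz}$, exactly the kind of figure-driven contradiction used throughout the preceding properties.
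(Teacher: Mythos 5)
There is a genuine gap. The keystone of the paper's proof is the claim that the hypothesis $\text{dom}(f_{xz}) \prec_\rho x$ forces $f_{xz} = d(x)$, i.e., $x$ is \emph{discovered} by $f_{xz}$ (for $x \in L_1$ this follows from the placement rules; for $x \in L_0$ one first deduces that $z \in L_0$ too and then invokes \cref{prp:dom-L0-L0}). Your sketch never extracts this from the hypothesis, and without it your chain $d(y) \preceq_\lambda f \prec_\lambda f_{xz}$ (which you correctly get from \cref{prp:first-the-discoverer}) yields no contradiction: a vertex discovered by an early face can in general be interleaved with vertices of later faces, and it is precisely the identity $d(x)=f_{xz}$ that turns $d(y) \prec_\lambda f_{xz}$ into $d(y) \prec_\lambda d(x)$, so that \cref{cor:vertex-to-discoverer} gives $y \prec_\rho x$ and kills the case $y \in L_1$. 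Relatedly, your plan to ``pin down $d(y)$ relative to $d(x)$ and $d(z)$'' via \cref{prp:discover-order} needs the later vertex to be prime with respect to its discover face, which fails exactly in the delicate case $y \in L_0$ with $y$ not $d(y)$-prime.

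Your proposed concrete mechanism also does not work as stated. Applying \cref{lem:pdl} to the pair $(x,y)$ is impossible because its face-order hypothesis is reversed: you would need $f_{xz} \prec_\lambda f$, whereas $f \prec_\lambda f_{xz}$. Applying it to $(y,z)$ with $f_v = f$ and $f_w = f_{xz}$ is formally admissible only if $y$ is $d(y)$-prime and $y \neq \text{dom}(f)$ --- neither is guaranteed (nothing in the hypotheses prevents $y = \text{dom}(f)$) --- and even then its conclusion $f \preceq_\lambda d(z)$ is \emph{consistent} with the situation rather than contradictory: when $z \in L_1$ one in fact has $d(z) = f_{xz}$, so $f \preceq_\lambda d(z)$ just restates $f \prec_\lambda f_{xz}$. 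The paper instead closes the case $y \in L_0$ by two separate arguments that your sketch does not anticipate: if $z \in L_0$, apply \cref{prp:dom-dom-L0} (to $y$ on $f$ and $z$ on $f_{xz}$) to get $z \prec_\rho y$, contradicting $y \prec_\rho z$; if $z \in L_1$, use $d(z) = f_{xz}$ together with Rule~\ref{r:1}, which places $z$ immediately after $\text{dom}(f_{xz})$ and before the next $L_0$-vertex, again forcing $z \prec_\rho y$. Your closing appeal to ``a direct planarity argument analogous to \cref{prp:L0-v-L0} and \cref{prp:dom-dom-L0}'' points in the right direction for the first sub-case, but the $z \in L_1$ sub-case genuinely needs the placement argument, not planarity, and the missing claim $d(x) = f_{xz}$ is needed before any of the cases can be closed.
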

\begin{proof}
First, we claim that $x$ is discovered by $f_{xz}$, that is $f_{xz} = d(x)$.
If $x$ belongs to $L_1$, the claim follows from $\text{dom}(f_{xz})\prec_\rho x$.
Now consider the case in which $x$ belongs to $L_0$.
Since $x$ is preceded by $\text{dom}(f_{xz})$ and followed by vertex $z$, and both vertices belong to the boundary of $f_{xz}$, vertex $z$ must belong to $L_0$ as well. \cref{prp:dom-L0-L0} concludes the claim.
Assume for a contradiction that there exists a vertex $y$ with $x \prec_\rho y \prec_\rho z$ that is on the boundary of $f$. Note that by assumption $x \neq y \neq z$ holds. We distinguish two cases.

\begin{itemize}
\item[--] \emph{Vertex $y$ belongs to $L_1$}:
In this case, $y$ is $f$-prime and assigned to the block $B(y)$. Since $y$ is on the boundary of $f$, we obtain $d(B(y)) \preceq_\lambda f \prec_\rho f_{xz} = d(x)$. Hence, it follows by \cref{cor:vertex-to-discoverer} that $y \prec_\rho x$; a contradiction.

\item[--] \emph{Vertex $y$ belongs to $L_0$}: We first observe that $\text{dom}(f_{xz}) \prec_\rho y$ holds, as otherwise we have that $y \preceq_\rho \text{dom}(f_{xz}) \prec_\rho x \prec_\rho z$, which is a clear contradiction. 
Vertex $z$ either belongs to $L_0$ or to $L_1$.
First, assume that $z$ belongs to $L_0$.
By \cref{prp:dom-dom-L0}, we obtain $\text{dom}(f) \preceq_\rho \text{dom}(f_{xz})\prec_\rho z \prec_\rho y$; a contradiction.
In the latter case, $z$ is assigned to the block $B(z)$ and with $\text{dom}(f_{xz}) \prec_\rho z$, we get $d(B(z))=d(z)=f_{xz}$. 
By Rule \ref{r:1}, $z$ is placed right after $\text{dom}(f_{xz})$ and to the left of the next vertex on $L_0$ after $\text{dom}(f_{xz})$. With $y$ belonging to $L_0$, we obtain $z\prec_\rho y$; a contradiction.
\end{itemize}
Since each of the cases above have been led to a contradiction, the proof of the lemma follows.
\end{proof}

\noindent As a next step, we will consider all cases of crossing non-dominator edges that might arise depending on whether the endpoints are prime or not.
In order to reduce the number of cases we show the two following lemmata.

\begin{lemma}\label{lem:prime-xz}
Let $(v,w)$ and $(x,z)$ be two non-dominator edges of $G$ belonging to two distinct faces $f_{vw}$ and $f_{xz}$, respectively, such that $v\prec_\rho w$, $x\prec_\rho z$ and $f_{vw} \prec_\lambda f_{xz}$. If $(v,w)$ and $(x,z)$ cross, then either the edge $(f_{vw},f_{xz})$ exists in $\mathcal{C}(G)$, or
there exists a non-dominator edge $(x', z')$ in $f_{xz}$ with $x'$ and $z'$ being $f_{xz}$-prime such that $(v,w)$ and $(x',z')$ cross.
\end{lemma}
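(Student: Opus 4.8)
\textbf{Proof plan for \cref{lem:prime-xz}.}
The plan is to reduce the general case (where $x$ or $z$ need not be $f_{xz}$-prime) to the prime case already handled by \cref{lem:magic-revised}. I assume that $(v,w)$ and $(x,z)$ cross in $\rho$, and that the conflict edge $(f_{vw},f_{xz})$ does \emph{not} already exist in $\mathcal{C}(G)$; under this assumption I must produce a substitute non-dominator edge $(x',z')$ on the boundary of $f_{xz}$ whose endpoints are both $f_{xz}$-prime and which still crosses $(v,w)$. First I would record what crossing means here: since $v \prec_\rho w$ and $x \prec_\rho z$, the edges cross exactly when $v \prec_\rho x \prec_\rho w \prec_\rho z$ or $x \prec_\rho v \prec_\rho z \prec_\rho w$ (the interleaving patterns). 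I would fix one of these orderings, say $v \prec_\rho x \prec_\rho w \prec_\rho z$, the other being symmetric.

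The core idea is to \emph{slide} the non-prime endpoints of $(x,z)$ backwards along the boundary of $f_{xz}$ toward $\text{dom}(f_{xz})$. By \cref{prp:prime-order}, on the boundary of $f_{xz}$ all $f_{xz}$-prime vertices precede (in $\rho$) all non-$f_{xz}$-prime vertices; in particular $\text{dom}(f_{xz})$ is prime, and the prime vertices of $f_{xz}$ form a $\rho$-initial segment of its boundary vertices. If $z$ is not $f_{xz}$-prime, I would replace it by a prime vertex $z'$ of $f_{xz}$ with $w \prec_\rho z'$: such a $z'$ exists because $\text{dom}(f_{xz}) \prec_\rho z$ forces $f_{xz}=d(x)$ in the relevant subcase, and the clique structure of $f_{xz}$ guarantees an edge from $x$ (or a suitable prime replacement $x'$) to every other boundary vertex, including prime ones lying past $w$. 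Symmetrically, if $x$ is not prime I would move it to a prime vertex $x'$ still satisfying $x' \prec_\rho w$, using that the prime vertices occupy the initial $\rho$-segment and that $\text{dom}(f_{xz})$ itself is prime and precedes $x$. The replacement edge $(x',z')$ lies in the same face $f_{xz}$ (all boundary vertices induce a clique there, so $(x',z')$ is present as a crossing or crossing-free edge), it is a non-dominator edge provided neither endpoint is $\text{dom}(f_{xz})$, and by construction it still interleaves with $(v,w)$.

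The technical heart of the argument is verifying that the new endpoints can be chosen to preserve the crossing with $(v,w)$ while becoming prime, and here \cref{lem:excluding} is the key tool. The worry is that pulling $z$ back to a prime vertex $z'$ might drag it to the left of $w$, destroying the interleaving; \cref{lem:excluding} is exactly designed to exclude this, since it says no vertex strictly between $x$ and $z$ in $\rho$ can lie on an earlier face, which pins down the relative position of the prime boundary vertices of $f_{xz}$ and forces any prime $z'$ with $x \prec_\rho z'$ to satisfy $w \prec_\rho z'$. Dually, \cref{prp:prime-order} together with $f_{vw}\prec_\lambda f_{xz}$ and \cref{cor:vertex-to-discoverer} keeps the left endpoint $x'$ on the correct side of $w$. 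I would organize the write-up as a short case analysis on which of $x,z$ fail to be prime, invoking these two results to relocate each bad endpoint.

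The main obstacle I anticipate is the bookkeeping needed to guarantee that the relocated edge $(x',z')$ is genuinely a non-dominator edge and genuinely still crosses $(v,w)$: one must avoid accidentally choosing $\text{dom}(f_{xz})$ as an endpoint (which would make it a dominator edge) and must ensure that the prime replacement does not collapse the interleaving to a nesting or to a shared endpoint. Managing the interaction between the face ordering $\lambda(\mathcal{F})$, the vertex ordering $\rho$, and the primality segment of $f_{xz}$ simultaneously — rather than any single deep step — is where the care lies; each individual move is justified by \cref{lem:excluding}, \cref{prp:prime-order}, or \cref{cor:vertex-to-discoverer}, but stitching them together so that $(x',z')$ satisfies all three requirements at once is the delicate part.
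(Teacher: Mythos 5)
Your plan correctly identifies the sliding mechanism that the paper uses for the easy half of the lemma, but it assumes away precisely the case that makes the conflict-edge alternative necessary. Your construction needs an $f_{xz}$-prime vertex $z'$ with $\text{dom}(f_{xz}) \prec_\rho z' \prec_\rho z$, so that \cref{lem:excluding} (applied to the pair $z',z$ on $f_{xz}$, with $f_{vw}\prec_\lambda f_{xz}$) pins $w$ to the left of $z'$ and preserves the interleaving. But no such $z'$ need exist: it can happen that $\text{dom}(f_{xz})$ is the \emph{only} $f_{xz}$-prime $L_0$-vertex of $f_{xz}$, and the dominator itself is forbidden as an endpoint. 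In that situation $x$, being $f_{xz}$-prime but not the dominator, satisfies $x \prec_\rho \text{dom}(f_{xz})$, hence $x$ belongs to $L_1$ and $d(x) \prec_\lambda f_{xz}$ (the alternative $d(x)=f_{xz}$ with $f_{xz}$ small is excluded because $z$ is a non-dominator $L_0$-vertex of $f_{xz}$). The paper then proves $d(x)=f_{vw}$ --- i.e., that the edge $(f_{vw},f_{xz})$ exists in $\mathcal{C}(G)$ --- through a substantial planarity case analysis distinguishing $f_{vw}\prec_\lambda d(x)$ from $d(x)\prec_\lambda f_{vw}$ and treating the two interleavings separately; this is the bulk of the proof and is entirely absent from your plan. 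Your stated justification for the existence of $z'$ (``$\text{dom}(f_{xz}) \prec_\rho z$ forces $f_{xz}=d(x)$'') is also incorrect: that hypothesis concerns $z$, not $x$, and in the hard case the opposite holds for $x$, namely $d(x)\prec_\lambda f_{xz}$. The clique structure of $f_{xz}$ gives you an edge from $x$ to every boundary vertex, but it cannot produce a prime boundary vertex to the right of $w$ when none exists.

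A secondary problem: your symmetric treatment of a non-prime $x$ by sliding it leftward to a prime $x'$ would not work, because the prime vertices form an initial $\rho$-segment of $f_{xz}$ and nothing prevents all of them from preceding $v$, collapsing the crossing into a nesting. The paper instead shows this case is vacuous: if $x$ is not $f_{xz}$-prime then by \cref{prp:prime-order} neither is $z$, so $\text{dom}(f_{xz})\prec_\rho x \prec_\rho z$; then \cref{prp:L0-v-L0} (applied with $w$, or with $v$ in the other interleaving) together with \cref{prp:first-the-discoverer} yields $f_{xz}\preceq_\lambda d(w)\preceq_\lambda f_{vw}$, contradicting $f_{vw}\prec_\lambda f_{xz}$. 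So the correct structure is an impossibility argument showing $x$ is always prime, followed by your sliding step when a suitable $z'$ exists, and the $d(x)=f_{vw}$ derivation when it does not; your proposal supplies only the middle piece.
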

\begin{proof}
Since $v\prec_\rho w$, $x\prec_\rho z$ and since $(v,w)$ and $(x,z)$ cross, either (a)~$v \prec_\rho x \prec_\rho w \prec_\rho z$ or (b)~$x \prec_\rho v \prec_\rho z \prec_\rho w$ holds. We proceed by distinguishing different cases depending on whether $x$ and $z$ are $f_{xz}$-prime or not. 

We first claim that  at least one of the vertices $x$ and $z$ is $f_{xz}$-prime. For a contradiction, assume that neither $x$ nor $z$ is $f_{xz}$-prime. In this case, $\text{dom}(f_{xz}) \prec_\rho x \prec_\rho z$ holds.  For the partial order of $v$, $w$, $x$ and $z$ of Case~(a), we apply \cref{prp:L0-v-L0} on vertices $x$, $z$, and $w$, and we obtain $f_{xz} \preceq_\lambda d(w)$. By \cref{prp:first-the-discoverer}, we further obtain that $d(w) \preceq_\lambda f_{vw}$. Hence, $f_{xz}\preceq_\lambda d(w)\preceq_\lambda f_{vw}$ must hold, which is a contradiction to the fact that $f_{vw}\prec_\lambda f_{xz}$. For the partial order of Case~(b), we obtain a contradiction by applying an argument analogous to the one above in which we interchange the roles of $w$ and~$v$.

\begin{figure}[tb!]
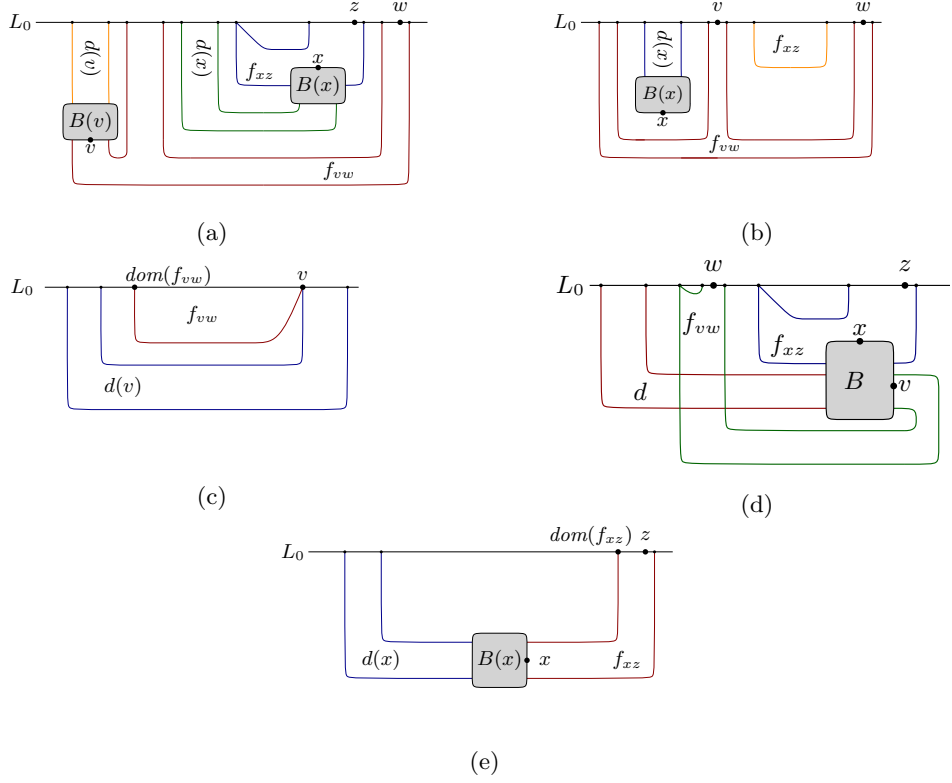

	\centering
	\begin{subfigure}{.48\textwidth}
		\centering
		\includegraphics[width=.8\textwidth,page=1]{figures/prime-xz}
		\subcaption{}
		\label{fig:prime-xz-1}
	\end{subfigure}
	\begin{subfigure}{.48\textwidth}
		\centering
		\includegraphics[width=.8\textwidth,page=7]{figures/prime-xz}
		\subcaption{}
		\label{fig:prime-xz-3}
	\end{subfigure}
	\begin{subfigure}{.48\textwidth}
		\centering
		\includegraphics[width=.8\textwidth,page=8]{figures/prime-xz}
		\subcaption{}
		\label{fig:prime-xz-4}
	\end{subfigure}
	\begin{subfigure}{.48\textwidth}
		\centering
		\includegraphics[width=.8\textwidth,page=2]{figures/prime-xz}
		\subcaption{}
		\label{fig:prime-xz-2}
	\end{subfigure}
	\begin{subfigure}{.48\textwidth}
		\centering
		\includegraphics[width=.8\textwidth,page=9]{figures/prime-xz}
		\subcaption{}
		\label{fig:prime-xz-5}
	\end{subfigure}
	\caption{%
	Illustrations for the proof of \cref{lem:prime-xz}.}
	\label{fig:prime-xz}
\end{figure}

By the above claim, we may assume that at least one of the vertices $x$ and $z$ is $f_{xz}$-prime. Note that if $x$ is not $f_{xz}$-prime, then, by \cref{prp:prime-order}, $z$ is not $f_{xz}$-prime either. Hence, we can conclude that $x$ is $f_{xz}$-prime, while $z$ is not $f_{xz}$-prime. We proceed by setting $x'$ to be $x$ (i.e., $x':=x$). Since $z$ is not $f_{xz}$-prime, $z$ belongs to $L_0$. It follows that $\text{dom}(f_{xz})\prec_\rho z$.

We first rule out the case, in which there exists an $f_{xz}$-prime vertex $\overline{z}$, such that $\text{dom}(f_{xz})  \prec_\rho \overline{z} \prec_\rho z$. By \cref{lem:excluding}, there is no vertex between $\overline{z}$ and $z$ in $\rho$ that belongs to the boundary of $f_{vw}$.
Hence, edges $(v,w)$ and $(x,\overline{z})$ cross, since $(v,w)$ and $(x,z)$ cross. The proof of the lemma follows by setting $z'$ to be $\overline{z}$.

To complete the proof of the lemma, we have to focus on the case, in which there exists no $f_{xz}$-prime vertex as defined above. In this case, the dominator of $f_{xz}$ is the only $f_{xz}$-prime vertex on $L_0$. Since $x$ is $f_{xz}$-prime and since $x$ is not the dominator of face $f_{xz}$ (recall that the edge $(x,z)$ is a non-dominator edge and $x \prec_\rho y$), it follows that $x \prec_\rho \text{dom}(f_{xz})$, which in particular implies that $x$ belongs to $L_1$. Since $x \prec_\rho \text{dom}(f_{xz})$, either the face $d(x)$ that discovers $x$ strictly precedes $f_{xz}$ in $\lambda(\mathcal{F})$ or $d(x)$ is identified with $f_{xz}$ and $f_{xz}$ is small.

We first prove that the latter case does not apply. To see this, assume for a contradiction that $f_{xz}$ is small. Then, $\text{dom}(f_{xz})$ is the only $L_0$-vertex on the boundary of $f_{xz}$. Since $(x,z)$ is a non-dominator edge, it follows that $\text{dom}(f_{xz})\neq z$, which is a contradiction since $z$ belongs to $L_0$. From the discussion above it follows that $d(x) \prec_\lambda f_{xz}$. We next argue that $d(x)=f_{vw}$ holds, which implies that the edge $(f_{vw},f_{xz})$ exist in $\mathcal{C}(G)$, since we have already proved that $x$ belongs to $L_1$. Hence, the proof of this property also concludes the proof of this lemma.

We assume for a contradiction that $d(x) \neq f_{vw}$ holds. We distinguish two cases based on whether $f_{vw} \prec_\lambda d(x)$ or $d(x) \prec_\lambda f_{vw}$. First, suppose that $f_{vw} \prec_\lambda d(x)$ and consider the partial order of Case~(a). Since $x \prec_\rho w$ and $f_{vw} \prec_\lambda d(x)$, it follows that vertex $w$ is $f_{vw}$-prime. Thus, $w$ belongs to $L_0$ and $\text{dom}(d(x)) \prec_\rho w$.  
By the planarity of $\sigma(G)$, it follows that $z \prec_\rho w$ (see \cref{fig:prime-xz-1}); a contradiction.
Consider now the partial order of Case~(b). By \cref{prp:first-the-discoverer}, we obtain we get $d(v)\preceq_\lambda f_{vw} \prec_\lambda d(x)$. Since $x\prec_\rho v$, it follows that $v$ belongs to $L_0$ and is not $f_{vw}$-prime. By \cref{prp:prime-order}, $w$ is also not $f_{vw}$-prime; see \cref{fig:prime-xz-3} for an illustration.
For $v\prec_\rho z\prec_\rho w$ to hold, we must have $v\preceq_\rho \text{dom}(d(z))\prec_\rho w$ and $d(z)$ cannot be small.
Thus, $x$ and $z$ cannot both be on the boundary of $f_{xz}$ without violating the planarity of $\sigma(G)$; a contradiction.

Suppose now that $d(x) \prec_\lambda f_{vw}$ and consider first the partial order of Case~(a). Since $x$ belongs to $L_1$, it is $d(x)$-prime.
We apply \cref{prp:discover-order} with $v \prec_\rho x$, and we get $d(v) \preceq_\lambda d(x)$. Hence, the order is $d(v) \preceq_\lambda d(x) \prec_\lambda f_{vw} \prec_\lambda f_{xz}$. If $v$ belongs to $L_0$, vertex $v$ cannot be $d(v)$-prime, since otherwise $v=\text{dom}(f_{vw})$ follows by \cref{prp:prime}; a contradiction.
Since $v$ is not the dominator of $f_{vw}$, it follows that $v$ is the last $L_0$ vertex on the boundary of $f_{vw}$, and hence, we get that $w\prec_\rho v$, as illustrated in \cref{fig:prime-xz-4}.
Thus, we may assume that both $v$ and $x$ belong to $L_1$.
Furthermore, $d(x)$ and $f_{xz}$ are both incident to $B(x)$ while $d(v)$ and $f_{vw}$ are both incident to $B(v)$.
By the planarity of $\sigma(G)$, we have $B(x)=B(v)$ which we abbreviate with $B$.
Thus, $d(x)=d(v)$, which we abbreviate with $d$.
The three faces $d$, $f_{vw}$, and $f_{xz}$ are incident to block $B$ and by the fact that $d \prec_\lambda f_{vw} \prec_\lambda f_{xz}$, they appear in this counterclockwise order around $B$.
This violates the planarity of $\sigma(G)$, as illustrated in \cref{fig:prime-xz-2}.

Next, consider the partial order of Case~(b).
We have $d(x)\prec_\lambda f_{vw}\prec_\lambda f_{xz}$.
Recall that $z$ is not $f_{xz}$-prime and therefore an $L_0$-vertex different from $\text{dom}(f_{xz})$; see \cref{fig:prime-xz-5}.
Now vertex $w$ either belongs to $L_0$ or to $L_1$.
In the first case we have the order $\text{dom}(d(x))\preceq_\rho \text{dom}(f_{vw})\preceq_\rho \text{dom}(f_{xz})\prec_\rho z\prec_\rho w$ of vertices on $L_0$.
Together with $d(x)\prec_\lambda f_{vw}\prec_\lambda f_{xz}$, and in order for $f_{vw}$ to bound vertex $w$, the planarity of $\sigma(G)$ is violated.
Assume the second case, that is $w$ belongs to $L_1$.
By $z\prec_\rho w$, we have $z\preceq_\rho \text{dom}(d(w))$ on $L_0$.
With \cref{prp:first-the-discoverer}, we obtain $\text{dom}(d(w))\preceq_\rho \text{dom}(f_{vw})$.
However, then we have $\text{dom}(f_{vw})\preceq_\rho \text{dom}(f_{xz})\prec_\rho z\preceq_\rho \text{dom}(d(w))\preceq_\rho \text{dom}(f_{vw})$; a contradiction.
\end{proof}

\begin{lemma}\label{lem:nonprime-vxv}
Let $(v,w)$ and $(x,z)$ be two non-dominator edges of $G$ belonging to two distinct faces $f_{vw}$ and $f_{xz}$, respectively, such that $v$ and $w$ are $f_{vw}$-prime, $x$ is $f_{xz}$-prime, and $z$ is not $f_{xz}$-prime. 
If $(v,w)$ and $(x,z)$ cross such that $v \prec_\rho x \prec_\rho w \prec_\rho z$, then the edge $(f_{vw},f_{xz})$ exists in $\mathcal{C}(G)$.
\end{lemma}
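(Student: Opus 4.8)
The plan is to deal with the single non-prime endpoint $z$ by a case distinction on the relative order of the two faces in $\lambda(\mathcal{F})$, reducing to the all-prime situation of \cref{lem:magic-revised} whenever possible and arguing directly otherwise. First I record two facts that hold in both cases. Since every vertex of $L_1$ is prime with respect to every intra-level face, the hypothesis that $z$ is not $f_{xz}$-prime forces $z\in L_0$ and $\text{dom}(f_{xz})\prec_\rho z$. Moreover, exactly as in the opening of the proof of \cref{lem:alternation} — which only uses that $w$ is $f_{vw}$-prime, that $(v,w)$ is a non-dominator edge, and that $x$ lies between $v$ and $w$ in $\rho$ — one rules out $v\in L_0$ and concludes $v\in L_1$. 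Since $f_{vw}\neq f_{xz}$, it remains to treat the two cases $f_{vw}\prec_\lambda f_{xz}$ and $f_{xz}\prec_\lambda f_{vw}$.

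In the first case $f_{xz}$ is the later face, so I apply \cref{lem:prime-xz} to the ordered pair $(v,w),(x,z)$: its hypotheses $v\prec_\rho w$, $x\prec_\rho z$, $f_{vw}\prec_\lambda f_{xz}$, together with the fact that the two edges are non-dominator and cross, are all met. Either \cref{lem:prime-xz} already produces the desired edge $(f_{vw},f_{xz})$ of $\mathcal{C}(G)$, or it yields a non-dominator edge $(x',z')$ drawn in $f_{xz}$ whose endpoints are both $f_{xz}$-prime and which still crosses $(v,w)$. In the latter situation all four endpoints of $(v,w)$ and $(x',z')$ are prime with respect to their faces, so \cref{lem:magic-revised} applies to $(v,w)$ and $(x',z')$ and gives the edge $(f_{vw},f_{xz})$ in $\mathcal{C}(G)$.

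In the second case $f_{xz}$ is the earlier face, so \cref{lem:prime-xz} can no longer be used to make the $z$-endpoint prime, and I argue directly. The key is to prove that $d(v)=f_{xz}$: then $v\in L_1$ lies on the boundary of $f_{vw}$ with $d(v)=f_{xz}$, and the definition of $\mathcal{C}(G)$ immediately yields the edge $(f_{vw},f_{xz})$. To reach $d(v)=f_{xz}$ I first show that $x$ is $d(x)$-prime: this is immediate if $x\in L_1$, while if $x\in L_0$ then $\text{dom}(f_{xz})\prec_\rho x\prec_\rho z$ with $x$ and $z$ both in $L_0$ on $f_{xz}$, so \cref{prp:dom-L0-L0} gives $d(x)=f_{xz}$ and hence $x$ is $d(x)$-prime. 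Applying \cref{prp:discover-order} to $v\prec_\rho x$ and then \cref{prp:first-the-discoverer} to $x$ on $f_{xz}$, I obtain the chain $d(v)\preceq_\lambda d(x)\preceq_\lambda f_{xz}\prec_\lambda f_{vw}$, so in particular $d(v)\preceq_\lambda f_{xz}$; it then remains only to exclude $d(v)\prec_\lambda f_{xz}$.

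Ruling out $d(v)\prec_\lambda f_{xz}$ is where I expect the main obstacle to lie. Here $v\in L_1$ belongs to the block $B(v)$, which is discovered by $d(v)$ and is also bounded by $f_{vw}$, since $v$ lies on $f_{vw}$. If $d(v)\prec_\lambda f_{xz}\prec_\lambda f_{vw}$ held, then $d(v)$ and $f_{vw}$ could not both bound $B(v)$ without $f_{xz}$ violating the planarity of $\sigma(G)$ — this is precisely the concluding planarity argument in the proof of \cref{lem:magic-revised}. Making it rigorous requires a careful analysis of how the faces incident to $B(v)$, the vertex $x$, and its block interleave in $\lambda(\mathcal{F})$, and this geometric step, rather than the bookkeeping of the first case, is the delicate part. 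Once $d(v)=f_{xz}$ is established, the edge $(f_{vw},f_{xz})$ of $\mathcal{C}(G)$ follows as explained, completing the proof.
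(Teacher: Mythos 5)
Your handling of the case $f_{vw}\prec_\lambda f_{xz}$ is correct and in fact slicker than the paper's: chaining \cref{lem:prime-xz} (whose hypotheses impose no primality on the input edges) with \cref{lem:magic-revised} is exactly the reduction the paper itself performs at the level of \cref{th:2-level}, and it legitimately disposes of this case in two lines, whereas the paper re-derives $v,x\in L_1$ and invokes \cref{lem:pdl} from scratch. The preliminary observations ($z\in L_0$ with $\text{dom}(f_{xz})\prec_\rho z$, and $v\in L_1$) are also fine, although for the case $f_{xz}\prec_\lambda f_{vw}$ the paper establishes $v\in L_1$ via \cref{lem:excluding} applied to $\text{dom}(f_{vw})\prec_\rho v\prec_\rho x\prec_\rho w$ rather than by transplanting the opening of \cref{lem:alternation}.

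The genuine gap is in the case $f_{xz}\prec_\lambda f_{vw}$, which is the entire reason this lemma exists (it is precisely the case \cref{lem:prime-xz} cannot repair, as you note). Two problems. First, your target claim $d(v)=f_{xz}$ is \emph{stronger} than what the paper proves, and is not forced by the hypotheses: the paper's proof applies \cref{lem:pdl} to the pair $x\prec_\rho w$ (after separately establishing that $x\in L_1$, that $v\in L_1$, and that $w$ is $d(w)$-prime --- this last step being a nontrivial sub-argument with its own planarity figure and an escape hatch through $d(v)=f_{xz}$), obtaining $f_{xz}\preceq_\lambda d(w)$; in the sub-case $d(w)=f_{xz}$ with $w\in L_1$ the conflict edge is witnessed by $w$ on the boundary of $f_{vw}$, not by $v$, and nothing there pins down $d(v)$ to equal $f_{xz}$. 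Only in the complementary sub-cases ($f_{xz}\prec_\lambda d(w)$, or $d(w)=f_{xz}$ with $w\in L_0$, the latter killed by \cref{prp:prime} since $(v,w)$ is non-dominator) does the paper force $d(v)=f_{xz}$ or a planarity contradiction. Second, even for the sub-cases where $d(v)=f_{xz}$ is the right conclusion, your exclusion of $d(v)\prec_\lambda f_{xz}$ is not an argument but an acknowledgment that one is needed: the ``concluding planarity argument in the proof of \cref{lem:magic-revised}'' you appeal to is carried out there in a configuration with extra anchors ($z$ being $f_{xz}$-prime, $w\in L_0$ non-$d(w)$-prime with an $L_0$ neighbor on $f_{vw}$) that are absent here --- indeed here $z$ is by hypothesis \emph{not} $f_{xz}$-prime. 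The mere $\lambda$-order $d(v)\prec_\lambda f_{xz}\prec_\lambda f_{vw}$ with $B(v)$ incident to $d(v)$ and $f_{vw}$ does not by itself contradict planarity (faces unrelated to $B(v)$ routinely interleave two faces of a block in $\lambda(\mathcal{F})$); the contradiction must additionally exploit the positions of $x$ (with $v\prec_\rho x\prec_\rho w$) and of $z\in L_0$ with $w\prec_\rho z$, which is exactly the case analysis over $d(w)$ and the level of $w$ that the paper performs and your sketch omits. As it stands, the crux of the lemma is unproven.
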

\begin{proof}
First, we rule out the case, in which $f_{vw} \prec_\lambda f_{xz}$.
Similar to the proof of \cref{lem:alternation}, we argue that $v$ cannot belong to $L_0$. To see this, assume the contrary. Since $v$ is not the dominator of $f_{vw}$ and since $v \prec_\rho w $, it follows that $w$ also belongs to $L_0$. Since $w$ is also $f_{vw}$-prime and since $v \prec_\rho w $, the only way for $x$ to appear between $v$ and $w$ in $\rho$, is if $f_{vw}=f_{xz}$, which is a contradiction to the fact that $f_{vw}$ and $f_{xz}$ are distinct. Next, we claim that $x$ belongs to $L_1$ as well. Assume the contrary. Since $z$ is not $f_{xz}$-prime, $z$ also belongs to $L_0$. Since $(x,z)$ is a non-dominator edge, it follows that $\text{dom}(f_{xz}) \prec_\rho x\prec_\rho z$. We apply \cref{lem:excluding} to $\text{dom}(f_{xz}) \prec_\rho x\prec_\rho w\prec_\rho z$ with $f_{vw} \prec_\lambda f_{xz}$ and obtain that $w$ be on the boundary of $f_{vw}$, which is a contradiction. Thus, we may assume that both $v$ and $x$ belong to $L_1$. By \cref{prp:discover-order}, it follows that $d(v) \preceq_\lambda d(x)$. Observe that if $d(x)=f_{vw}$, then the edge $(f_{vw},f_{xz})$ exists in $\mathcal{C}(G)$, as desired, since we have already shown that $x$ belongs to $L_1$. 

\begin{figure}[tb!]
	\centering
	\begin{subfigure}{.48\textwidth}
		\centering
		\includegraphics[width=.8\textwidth,page=10]{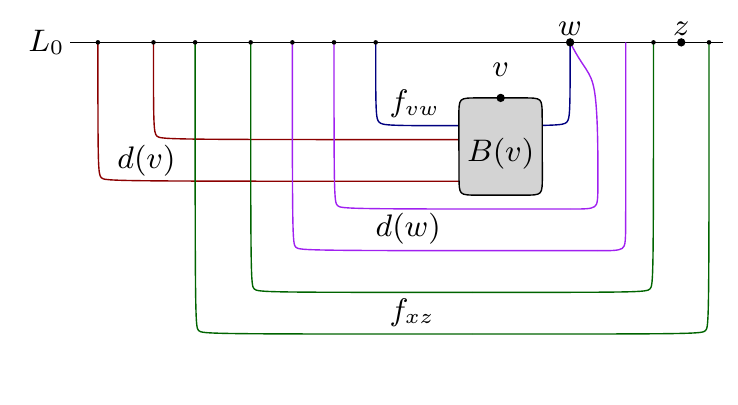}
		\subcaption{}
		\label{fig:nonprime-vxv-3}
	\end{subfigure}
	\begin{subfigure}{.48\textwidth}
		\centering
		\includegraphics[width=.8\textwidth,page=5]{figures/100s}
		\subcaption{}
		\label{fig:nonprime-vxv-4}
	\end{subfigure}
	\begin{subfigure}{.48\textwidth}
		\centering
		\includegraphics[width=.8\textwidth,page=1]{figures/100s}
		\subcaption{}
		\label{fig:nonprime-vxv-1}
	\end{subfigure}
	\begin{subfigure}{.48\textwidth}
		\centering
		\includegraphics[width=.8\textwidth,page=2]{figures/100s}
		\subcaption{}
		\label{fig:nonprime-vxv-2}
	\end{subfigure}
	\begin{subfigure}{.48\textwidth}
		\centering
		\includegraphics[width=.8\textwidth,page=13]{figures/100s}
		\subcaption{}
		\label{fig:nonprime-vxv-5}
	\end{subfigure}
	\caption{Illustrations for the proof of \cref{lem:nonprime-vxv}.}
\end{figure}

In order to prove the lemma for the case, in which $f_{vw} \prec_\lambda f_{xz}$, it suffices to show that the case, in which $d(x) \neq f_{vw}$, does not apply. Our proof is by contradiction. First, assume $f_{vw} \prec_\lambda d(x)$. This implies that every vertex that is $f_{vw}$-prime precedes any vertex that is $d(x)$-prime and that is discovered by $d(x)$. Since $w$ is $f_{vw}$-prime and since $x$ belongs to $L_1$, it follows that $w \prec x$, which is a contradiction. Hence, we may focus on the case, in which $d(x) \prec_\lambda f_{vw}$. Since $d(v) \preceq_\lambda d(x)$ and since $f_{vw} \prec_\lambda f_{xz}$, it follows that $d(v) \preceq_\lambda d(x) \prec_\lambda f_{vw} \prec_\lambda f_{xz}$. By \cref{prp:first-the-discoverer}, we obtain $d(w) \preceq_\lambda f_{vw}$. Our plan is to apply \cref{lem:pdl} on vertices $v$ and $x$ for which we know that $v \prec_\rho x$ and $f_{vw} \prec_\lambda f_{xz}$. Since $v$ and $x$ belong to $L_1$, \cref{c:pdl0,c:pdl1} of \cref{lem:pdl} are satisfied. Also, since $(v,w)$ and $(x,z)$ are non-dominator edges, \cref{c:pdl3} of \cref{lem:pdl} is satisfied. Hence, by \cref{lem:pdl}, we have $f_{vw} \preceq_\lambda d(x)$. This contradicts the previous assumption that $d(x) \prec_\lambda f_{vw}$.

To complete the proof of the lemma, we now consider the case, in which $f_{xz} \prec_\lambda f_{vw}$. Our aim is to apply \cref{lem:pdl} on $x \prec_\rho w$. 
For \cref{c:pdl0} of \cref{lem:pdl} to hold, we prove an even stronger argument, namely that $x$ belongs to $L_1$. Assume to the contrary that $x$ is on $L_0$. Since $z$ is not $f_{xz}$-prime, $z$ also belongs to $L_0$. Since $v \prec_\rho x \prec_\rho w \prec_\rho z$, and since $(v,w)$ and $(x,z)$ are non-dominator edges, we obtain the following order of vertices on $L_0$: $\text{dom}(d(v))\prec_\rho x \prec_\rho w \prec_\rho z$ or $\text{dom}(d(v))\prec_\rho x \preceq_\rho \text{dom}(d(w)) \prec_\rho z$, depending on whether $w$ belongs to $L_0$ or to $L_1$. However, in both cases face $f_{vw}$ violates the planarity of $\sigma(G)$ as shown in Figures \ref{fig:nonprime-vxv-3} and \ref{fig:nonprime-vxv-4}. Hence, $x$ belongs to $L_1$ and \cref{c:pdl0} of \cref{lem:pdl} is satisfied. 
We now claim that $v$ belongs to $L_1$ as well. To prove the claim, assume the contrary. Since $v$ and $w$ are on the boundary of the same face and since $v\prec_\rho w$, it follows that $w$ belongs to $L_0$, too. Since $(v,w)$ is a non-dominator edge, we get $\text{dom}(f_{vw})\prec_\rho v\prec_\rho w$. By applying \cref{lem:excluding} on $\text{dom}(f_{vw}) \prec_\rho v\prec_\rho x \prec_\rho w$, we conclude that that $x$ cannot be on the boundary of $f_{xz}$, which is a contradiction.
Hence, $v$ belong to $L_1$, as desired.
Next, we prove \cref{c:pdl1} of \cref{lem:pdl}, that is, $w$ is $d(w)$-prime. For a contradiction, assume that $w$ is not $d(w)$-prime, which yields that $w$ belongs to $L_0$. Since, by assumption, $w$ is $f_{vw}$-prime, we get $d(w)\neq f_{vw}$. In particular, by \cref{prp:first-the-discoverer}, we have that $d(w)\prec_\lambda f_{vw}$. Since $v$ and $x$ belong to $L_1$, by \cref{prp:discover-order,prp:first-the-discoverer}, it follows that $d(v)\preceq_\lambda d(x)\preceq_\lambda f_{xz}$. If $d(v)=f_{xz}$ holds, then the edge $(f_{vw},f_{xz})$ exists in $\mathcal{C}(G)$, since we have already shown that $v$ belongs to $L_1$. Thus, assume $d(v)\neq f_{xz}$ which yields $ d(v)\prec_\lambda f_{xz}$. We illustrate these relationships in \cref{fig:nonprime-vxv-1} and observe that in order for $d(v)$ and $f_{vw}$ to be incident to block $B(v)$, the planarity of $\sigma(G)$ is violated. Hence, we may assume that $w$ is $d(w)$-prime and therefore \cref{c:pdl1} of \cref{lem:pdl} is satisfied.
Finally, \cref{c:pdl3} of \cref{lem:pdl} holds trivially by the assumption that we only consider non-dominator edges which ensures that neither $x$ nor $w$ is the dominator of $f_{xz}$ or $f_{vw}$, respectively. Hence, we can apply \cref{lem:pdl} on $x \prec_\rho w$ yielding $f_{xz} \preceq_\lambda d(w)$.

Recall that if $f_{xz}=d(w)$ holds and $w$ belongs to $L_1$, then the edge $(f_{vw},f_{xz})$ exists in $\mathcal{C}(G)$. For a contradiction, we may assume that $f_{xz}$ and $f_{vw}$ do not induce an edge in $\mathcal{C}(G)$. Thus, either $f_{xz} \neq d(w)$ holds or $w$ belongs to $L_0$. However, if $f_{xz} \neq d(w)$, we obtain $d(v) \preceq_\lambda d(x) \preceq_\lambda f_{xz} \prec_\lambda d(w) \preceq_\lambda f_{vw}$, and $d(v) \neq f_{xz}$ implies $d(v) \prec_\lambda f_{xz} \prec_\lambda d(w) \preceq_\lambda f_{vw}$. Since $z$ is not $f_{xz}$-prime, it belongs to $L_0$. Since $w \prec_\rho z$, we obtain the order $\text{dom}(d(w)) \prec_\rho z$ or $w \prec_\rho z$ on $L_0$ depending on whether $w$ belongs to $L_0$ or to $L_1$. However, \cref{fig:nonprime-vxv-2,fig:nonprime-vxv-5} show that in both cases the planarity of $\sigma(G)$ is violated. Finally, assume $f_{xz} = d(w)$, but $w$ belongs to $L_0$. Since $w$ is $f_{vw}$-prime, we have that $v, x$ and $w$ are $d(v)$-, $d(x)$- and $d(w)$-prime, respectively. This yields $d(v)\preceq_\lambda d(x)\preceq_\lambda d(w)$ by \cref{prp:discover-order}. From $f_{xz}\prec_\lambda f_{vw}$ we obtain $d(v)\preceq_\lambda d(x)\preceq_\lambda d(w)=f_{xz}\prec_\lambda f_{vw}$. However, by \cref{prp:prime}, $w$ is the dominator of $f_{vw}$; a contradiction to the fact that $(v,w)$ is a non-dominator edge.
\end{proof}

\smallskip\noindent\textbf{The edge-to-page assignment.} We embed all backward edges in page $p_0$, and all forward edges in page $p_1$. 
We next assign the remaining edges of $G$ to three sets $R^1$, $B^1$ and $G^1$, each containing $\lceil \frac{k}{2} \rceil$ pages. We process the intra-level faces of $\mathcal{F}$ according to $\lambda(\mathcal{F})$. Let $f$ be the next face to process. By \cref{lem:3-coloring}, face $f$ has a color in $\{r,b,g\}$.  The vertices of $f$ induce at most a $k$-clique $C_f$ in $G$. We assign the non-dominator edges of $C_f$ to the pages of one of the sets $R^1$, $B^1$ and $G^1$ depending on whether the color of $f$ is $r$, $b$, or $g$, respectively. This is possible since $C_f$ is at most a $k$-clique~\cite{DBLP:journals/jct/BernhartK79}. 
In the following, we prove that this assignment is valid, which is the main result of this section.

\begin{theorem}\label{th:2-level}
The book thickness of a two-level \framed{k} graph $G$ is at most $3\cdot\left\lceil \frac{k}{2} \right\rceil + 2$.
\end{theorem}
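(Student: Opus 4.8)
The plan is to show that the edge-to-page assignment described above uses exactly $3\lceil\frac{k}{2}\rceil+2$ pages and is crossing-free. The page count is immediate: one page $p_0$ carries the backward edges, one page $p_1$ the forward edges, and the $3\lceil\frac{k}{2}\rceil$ pages of $R^1\cup B^1\cup G^1$ carry the non-dominator edges. Since every dominator edge is, by definition, either backward or forward, and since the backward (resp.\ forward) edges are pairwise non-crossing by \cref{lem:backward} (resp.\ \cref{lem:forward}), pages $p_0$ and $p_1$ are valid. Hence the whole task reduces to proving that no two non-dominator edges assigned to a common page of $R^1\cup B^1\cup G^1$ cross.

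I would first dispatch the case where two such edges $(v,w)$ and $(x,z)$ are assigned by the \emph{same} face $f$. Then both belong to the clique $Q_f$ on the at most $k$ vertices incident to $f$. Because all pairs inside a clique are edges, the relative order in $\rho$ of the vertices of $Q_f$ induces exactly the crossing pattern of a clique in convex position, so the within-face assignment distributes the non-dominator edges of $Q_f$ over the $\lceil\frac{k}{2}\rceil$ pages of its color class without crossings~\cite{DBLP:journals/jct/BernhartK79}; thus $(v,w)$ and $(x,z)$ do not cross. This leaves the substantive case in which $(v,w)$ and $(x,z)$ are assigned by two \emph{distinct} faces $f_{vw}$ and $f_{xz}$. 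Lying on a common page, they received the same color, so $f_{vw}$ and $f_{xz}$ are non-adjacent in $\mathcal{C}(G)$ by \cref{lem:3-coloring}. It therefore suffices to prove the contrapositive: whenever two non-dominator edges in distinct faces cross, their faces are joined by an edge of $\mathcal{C}(G)$, contradicting non-adjacency.

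To prove this core claim I would assume, without loss of generality, that $f_{vw}\prec_\lambda f_{xz}$, $v\prec_\rho w$, and $x\prec_\rho z$, so that the crossing realizes one of the two interleavings $v\prec_\rho x\prec_\rho w\prec_\rho z$ or $x\prec_\rho v\prec_\rho z\prec_\rho w$. If all four endpoints are prime with respect to their faces, \cref{lem:magic-revised} immediately yields the desired edge $(f_{vw},f_{xz})$ of $\mathcal{C}(G)$. Otherwise I would first invoke \cref{lem:prime-xz}, which either produces the conflict edge outright or replaces $(x,z)$ by a fully $f_{xz}$-prime non-dominator edge $(x',z')$ of $f_{xz}$ still crossing $(v,w)$; this promotes the later face to prime. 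The endpoints that may still fail to be prime are then all incident to the earlier face $f_{vw}$, and here \cref{prp:prime-order} is decisive: for $v\prec_\rho w$ it forbids $v$ from being non-prime while $w$ is prime, so the only surviving configurations are $v$ prime with $w$ non-prime, and both non-prime. These are settled by \cref{lem:nonprime-vxv}, applied after interchanging the roles of the two edges so that the already-prime edge plays the role of the fully prime edge of that lemma, matched to the appropriate interleaving; in every branch an edge $(f_{vw},f_{xz})$ of $\mathcal{C}(G)$ is forced.

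I expect the main obstacle to be exactly this final primality case analysis, because the available reduction lemmas are \emph{asymmetric}: \cref{lem:prime-xz} only promotes the endpoints of the \emph{later} face (in $\lambda(\mathcal{F})$) to prime, so after applying it one must still eliminate non-prime endpoints on the \emph{earlier} face, and \cref{lem:nonprime-vxv} covers only one mixed primality pattern tied to one interleaving. The care lies in matching each combination of (i) which of the four endpoints are prime, (ii) which of the two interleavings occurs, and (iii) the fixed order $f_{vw}\prec_\lambda f_{xz}$, to the correct lemma—while faithfully tracking the replacement edge $(x',z')$ introduced by \cref{lem:prime-xz} and the role swap needed to invoke \cref{lem:nonprime-vxv}. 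Once every branch has been shown to produce a conflict edge, the contradiction with non-adjacency in $\mathcal{C}(G)$ establishes that same-page non-dominator edges never cross, and the theorem follows.
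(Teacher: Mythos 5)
Your overall architecture coincides with the paper's: pages $p_0,p_1$ are valid by \cref{lem:backward,lem:forward}, the same-face case is absorbed by the clique bound of Bernhart--Kainen, crossings between non-dominator edges of distinct faces are to be converted into edges of $\mathcal{C}(G)$ contradicting the $3$-coloring, the all-prime configuration is dispatched by \cref{lem:magic-revised}, and \cref{lem:prime-xz} is used exactly as in the paper to assume that $x$ and $z$ are $f_{xz}$-prime, leaving only the primality of $v$ and $w$ in doubt (with \cref{prp:prime-order} reducing to the mixed and the both-non-prime configurations). The gap is precisely in the step you yourself flag as the main obstacle: your claim that both surviving configurations "are settled by \cref{lem:nonprime-vxv}" after a role swap is false, and this is where the bulk of the paper's actual proof lives.

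Concretely, \cref{lem:nonprime-vxv} requires one edge to be \emph{fully} prime and the other to have exactly its later endpoint non-prime, so it says nothing about the configuration in which both $v$ and $w$ fail to be $f_{vw}$-prime. Those branches are not corollaries of the earlier lemmata: for the interleaving $v \prec_\rho x \prec_\rho w \prec_\rho z$ the paper derives a direct planarity/small-face contradiction (using \cref{prp:discover-order} and the position of $\text{dom}(f_{xz})$), and for $x \prec_\rho v \prec_\rho z \prec_\rho w$ it runs a three-way subcase analysis on which of $d(x)$, $d(z)$ is small, invoking \cref{prp:dom-L0-L0}, \cref{prp:L0-v-L0} and \cref{lem:excluding}. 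Moreover, even in the mixed configuration ($v$ prime, $w$ non-prime) the role swap you propose only works for the interleaving $x \prec_\rho v \prec_\rho z \prec_\rho w$: mapping the lemma's fully prime edge to $(x,z)$ forces its required pattern to read $x \prec_\rho v \prec_\rho z \prec_\rho w$ with the non-prime endpoint last. Under the other interleaving $v \prec_\rho x \prec_\rho w \prec_\rho z$ the non-prime endpoint $w$ sits in \emph{third} position, so no assignment of roles satisfies the hypotheses of \cref{lem:nonprime-vxv}; the paper handles this branch by a separate argument (showing $v \in L_1$, applying \cref{lem:pdl} to $v \prec_\rho x$, and excluding $d(x) \neq f_{vw}$ by planarity of $\sigma(G)$). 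So three of the four remaining branches require fresh arguments that the lemmata you cite do not supply, and your proposal as written does not close them.
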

\begin{proof}
Consider two non-dominator edges $(v,w)$ and $(x,z)$, and assume without loss of generality that $v \prec_\rho w$ and $x \prec_\rho z$ in $\rho$. For a contradiction, assume $(v,w)$ and $(x,z)$ have been assigned to the same page $p$ and that either $v \prec_\rho x \prec_\rho w \prec_\rho z$ or $x \prec_\rho v \prec_\rho z \prec_\rho w$, i.e., $(v,w)$ and $(x,z)$ cross in the same page. By \cref{lem:backward,lem:forward}, $p \notin \{p_0,p_1\}$. Hence, $p\in R^1 \cup B^1 \cup G^1$. Let $f_{vw}$ and $f_{xz}$ be the two faces of $\mathcal{F}$ responsible for assigning $(v,w)$ and $(x,z)$ to one of the pages of $R^1 \cup B^1 \cup G^1$. Assume without loss of generality that $f_{vw} \prec_\lambda f_{xz}$. If $v$ and $w$ are $f_{vw}$-prime, and $x$ and $z$ are $f_{xz}$-prime, then by \cref{lem:magic-revised}, $(v,w)$ and $(x,z)$ cannot cross. Also, by \cref{lem:prime-xz}, we may assume that $x$ and $z$ are $f_{xz}$-prime. On the other hand, each of $v$ and $w$ can be $f_{vw}$-prime or not. In the following, we distinguish cases based on the relative order of $x$, $z$, $u$ and $w$ and on the types of the vertices $v$ and $w$.  

\begin{figure}[tb!]
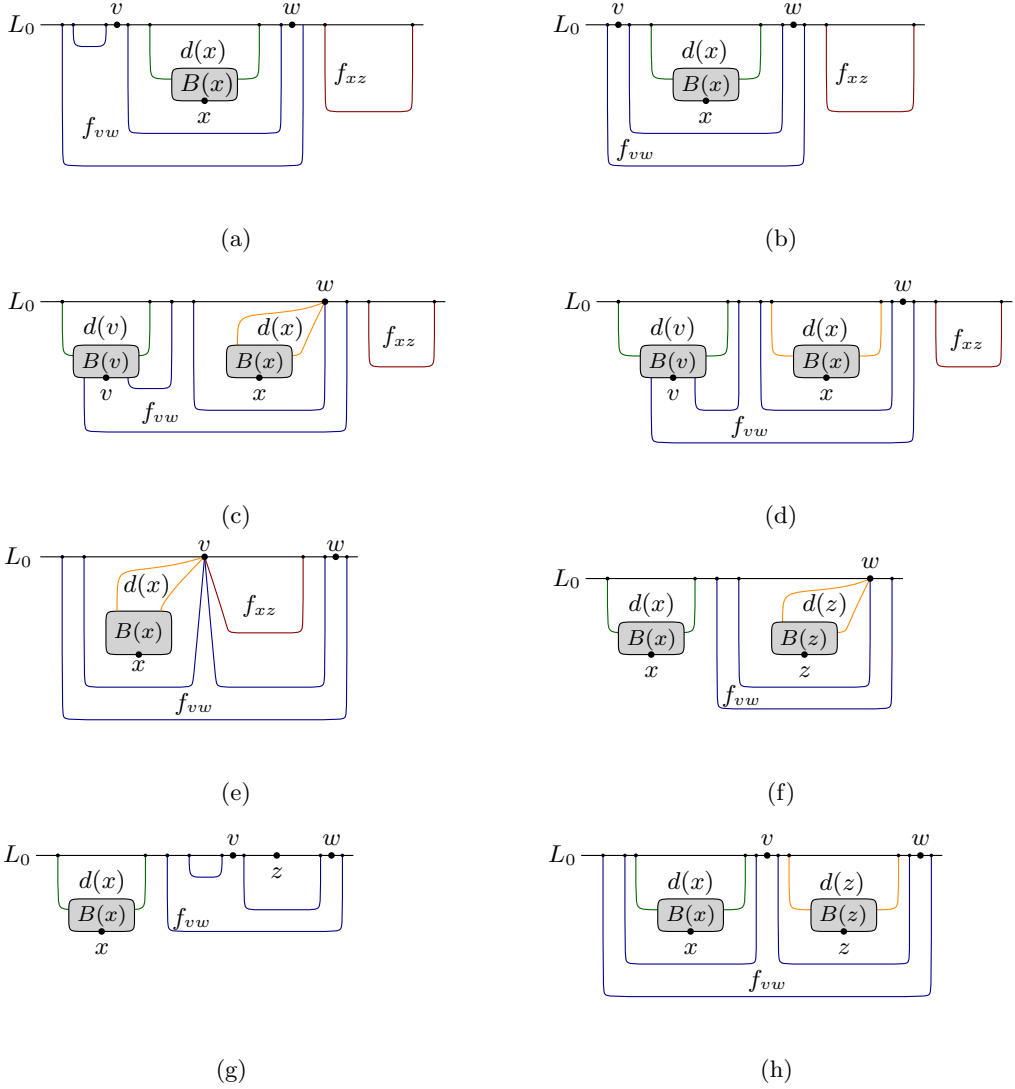

	\centering
	\begin{subfigure}{.48\textwidth}
		\centering
		\includegraphics[width=.9\textwidth,page=14]{figures/100s}
		\subcaption{}
		\label{fig:2-level-1}
	\end{subfigure}
	\begin{subfigure}{.48\textwidth}
		\centering
		\includegraphics[width=.9\textwidth,page=15]{figures/100s}
		\subcaption{}
		\label{fig:2-level-2}
	\end{subfigure}
	\begin{subfigure}{.48\textwidth}
		\centering
		\includegraphics[width=.9\textwidth,page=16]{figures/100s}
		\subcaption{}
		\label{fig:2-level-3}
	\end{subfigure}
	\begin{subfigure}{.48\textwidth}
		\centering
		\includegraphics[width=.9\textwidth,page=17]{figures/100s}
		\subcaption{}
		\label{fig:2-level-4}
	\end{subfigure}
	\begin{subfigure}{.48\textwidth}
		\centering
		\includegraphics[width=.9\textwidth,page=18]{figures/100s}
		\subcaption{}
		\label{fig:2-level-5}
	\end{subfigure}
	\begin{subfigure}{.48\textwidth}
		\centering
		\includegraphics[width=.9\textwidth,page=19]{figures/100s}
		\subcaption{}
		\label{fig:2-level-6}
	\end{subfigure}
	\begin{subfigure}{.48\textwidth}
		\centering
		\includegraphics[width=.9\textwidth,page=20]{figures/100s}
		\subcaption{}
		\label{fig:2-level-7}
	\end{subfigure}
	\begin{subfigure}{.48\textwidth}
		\centering
		\includegraphics[width=.9\textwidth,page=21]{figures/100s}
		\subcaption{}
		\label{fig:2-level-8}
	\end{subfigure}
	\label{fig:2-level}
	\caption{%
		Illustrations for the proof of \cref{th:2-level}.}
\end{figure}

\medskip\noindent Assume first that the relative order of the vertices $x$, $z$, $u$ and $w$ is $v \prec_\rho x \prec_\rho w \prec_\rho z$. Since $x\prec_\rho z$, since both vertices are on the boundary of $f_{xz}$, and since $(x,z)$ is non-dominator, it follows that if $x$ belongs to $L_0$, then $z$ also belongs to $L_0$, in which case the order on $L_0$ is $\text{dom}(f_{xz})\prec_\rho x\prec_\rho w\prec_\rho z$. However, by \cref{lem:excluding}, this contradicts the fact that $f_{vw} \prec_\lambda f_{xz}$. Thus, $x$ necessarily belongs to $L_1$. Next, we distinguish cases based on the types of vertices~$v$~and~$w$. 

\begin{itemize}
\item[--] \emph{Vertex $v$ is not $f_{vw}$-prime}, which, by \cref{prp:prime-order}, implies that $w$ is also not $f_{vw}$-prime. Hence, both $v$ and $w$ belong to $L_0$, and as result $\text{dom}(f_{vw})\prec_\rho v\prec_\rho w$. Since $w \prec_\rho z$ and since $z$ is $f_{xz}$-prime, it follows that $w \preceq_\rho \text{dom}(f_{xz})$. By \cref{prp:discover-order} and since $v$ and $w$ belong to $L_0$, we get $v\preceq_\rho \text{dom}(d(x))\preceq_\rho w \preceq_\rho \text{dom}(f_{xz})$. If $\text{dom}(d(x))=w$, then $d(x)$ has to be small, since otherwise $w\prec_\rho x$. Therefore, regardless of whether $\text{dom}(d(x))=w$ or $\text{dom}(d(x))\prec_\rho w$ holds, are arise at a situation as the one illustrated in \cref{fig:2-level-1}; recall that $f_{vw}\prec_\lambda f_{xz}$. If $w=\text{dom}(f_{xz})$ holds, then $f_{xz}$ has to be small, as otherwise the planarity of $\sigma(G)$ is violated. However, the fact that $z\prec_\rho w$ contradicts the fact that $f_{xz}$ is small. Hence, $w \prec_\rho \text{dom}(f_{xz})$ must hold. In this case, $v$ cannot be on the boundary of the intra-level face $f_{xz}$ without violating the planarity of $\sigma(G)$, which is again a contradiction.
\item[--] \emph{Vertex $v$ is $f_{vw}$-prime and $w$ is not $f_{vw}$-prime}. First, we show that vertex $v$ belongs to $L_1$. To this end, we assume to the contrary that $v$ belongs to $L_0$. Since $v$ belongs to $L_0$ and $v$ is $f_{vw}$-prime, it follows that $v\neq \text{dom}(f_{vw})$. Since $w$ also belongs to $L_0$, we have $\text{dom}(f_{vw})\prec_\rho v\prec_\rho w$. We apply \cref{prp:prime-order,prp:dom-L0-L0} which yields $d(v)=f_{vw}\preceq_\lambda d(x)$. Observe that since $z$ is $f_{xz}$-prime, we have $w\preceq_\rho \text{dom}(f_{xz})$, as otherwise $z \prec_\rho w$, which is a contradiction. Similarly, if $f_{xz}$ is small, it follows again that $z\prec_\rho w$, which is the same contradiction. Hence, $f_{xz}$ cannot small. Hence, $f_{xz}$ follows $d(w)$ in a counterclockwise traversal of $w$ starting from $(u_{j-1},u_j)$ and ending at $(u_{j},u_{j+1})$ with $u_j=w$. Thus, we arise at a situation as the one illustrated in \cref{fig:2-level-2}, which shows that $x$ cannot be on the boundary of $f_{xz}$ without violating the planarity of $\sigma(G)$; a contradiction. Thus, $v$ belongs to $L_1$, as desired. Now all conditions of \cref{lem:pdl} for vertices $v$ and $x$ are satisfied, which implies that $f_{vw} \preceq_\lambda d(x)$. We are now ready to show that the $(f_{vw},f_{xz})$ exist in graph $\mathcal{C}(G)$, which completes the proof this case, since it also implies that $(u,v)$ and $(x,z)$ have been assigned to different pages. Assume for a contradiction that there exists no edge $(f_{vw},f_{xz})$ in o$\mathcal{C}(G)$. Since $x$ belongs to $L_1$, it follows that $f_{vw}\neq d(x)$. In total, we have $d(v) \preceq_\lambda f_{vw} \prec_\lambda d(x) \preceq_\lambda f_{xz}$. Since $x\prec_\rho w$, we have either that $\text{dom}(d(x))\prec_\rho w$ or that $\text{dom}(d(x))=w$ and $d(x)$ is small. If $d(x)$ is small, then we arise at a situation as the one illustrated in \cref{fig:2-level-3}. In order for $w\prec_\rho z$ to hold, either $w\prec_\rho \text{dom}(f_{xz})$ or $w=\text{dom}(f_{xz})$ and $f_{xz}$ is not small. However, in both cases face $f_{xz}$ violates the planarity of $\sigma(G)$; a contradiction. Thus, we may assume $\text{dom}(d(x))\prec_\rho w$, as illustrated in \cref{fig:2-level-4}. Since $w\prec_\rho z$ and since $z$ is $f_{xz}$-prime, we have $w\preceq_\rho \text{dom}(f_{xz})$. If equality holds, $f_{xz}$ cannot be small, since otherwise it follows that $z\prec_\rho w$. Hence, according to the definition of small faces, $f_{xz}$ follows $d(w)$ in a counterclockwise traversal of $w$ starting from $(u_{j-1},u_j)$ and ending at $(u_{j},u_{j+1})$ with $u_j=w$. Now, $f_{xz}$ cannot have vertex $x$ on its boundary without violating the planarity of $\sigma(G)$; a contradiction.
\item[--] \emph{Vertices $v$ and $w$ are $f_{vw}$-prime}. By \cref{lem:magic-revised}, the edge $(f_{vw},f_{xz})$ exists in $\mathcal{C}(G)$, which implies that $(u,v)$ and $(x,z)$ have been assigned to different pages.
\end{itemize}

\noindent Consider now the case, in which the relative order of $x$, $z$, $u$ and $w$ is $x \prec_\rho v \prec_\rho z \prec_\rho w$. We proceed as above by considering subcases based on the types of vertices $v$ and $w$. 

\begin{itemize}
\item[--] \emph{Vertex $v$ is not $f_{vw}$-prime}, which, by \cref{prp:prime-order}, implies that $w$ is also not $f_{vw}$-prime.
Hence, both $v$ and $w$ belong to $L_0$ and since $(v,w)$ is a non-dominator edge, we obtain $\text{dom}(f_{vw})\prec_\rho v\prec_\rho w$. Observe that by \cref{prp:dom-L0-L0}, vertex $v$ is discovered by $f_{vw}$. On the other hand, we have $f_{vw} \preceq_\lambda d(x)$ by \cref{prp:L0-v-L0}. We claim that $x$ belongs to $L_1$. Assume the contrary. Since $x$ precedes $z$ and both vertices are on the boundary of $f_{xz}$, it follows that $z$ also belongs to $L_0$. Therefore, all four vertices belong to $L_0$ and their order is $x \prec_\rho v \prec_\rho z \prec_\rho w$. Since $v$ and $w$ are on the boundary of $f_{vw}$ and $x$ and $z$ on the boundary of $f_{xz}$, the two faces $f_{vw}$ and $f_{xz}$ clearly violate the planarity of $\sigma(G)$. Thus, we may assume that $x$ belongs to $L_1$, as we initially claimed. We are now ready to show that the $(f_{vw},f_{xz})$ exist in graph $\mathcal{C}(G)$, which completes the proof this case. Assume for a contradiction that there exists no edge $(f_{vw},f_{xz})$ in o$\mathcal{C}(G)$. Since $x$ belongs to $L_1$, we have $f_{vw}\neq d(x)$. Therefore, we get $d(v)=f_{vw} \prec_\lambda d(x)$. In order for $x \prec_\rho v$ to hold, the dominator of $d(x)$ either precedes $v$ on $L_0$ or the dominator of $d(x)$ is $v$ and $d(x)$ is small. By applying the same arguments on vertices $x$ and $z$, we can similarly conclude that the dominator of $d(z)$ either precedes $w$ on $L_0$ or the dominator of $d(z)$ is $w$ and $d(z)$ is small. This gives rise to three subcases to consider.

\begin{itemize}
\item \emph{$d(x)$ is small and $\text{dom}(d(x))=v$}. Since $v\prec_\rho z$ and $z$ is $f_{xz}$-prime, we have $v\preceq \text{dom}(f_{xz})$. If $v=\text{dom}(f_{xz})$ holds, then $f_{xz}$ is not small since otherwise it follows that $z\prec_\rho v$; a contradiction.
Thus, $f_{xz}$ is not small. However, \cref{fig:2-level-5} shows that in this case the face $f_{xz}$ cannot have $x$ on its boundary without violating the planarity of $\sigma(G)$.

\item \emph{$d(z)$ is small and $\text{dom}(d(z))=w$}. Having ruled out the case above, we may further assume that $d(x)$ is not small. Since $d(x)$ is not small and since $x\prec_\rho v$, we get $\text{dom}(d(x))\prec_\rho v$. As illustrated in \cref{fig:2-level-6}, face $f_{xz}$ cannot have $x$ and $z$ on its boundary without violating the planarity of $\sigma(G)$; a contradiction.

\item \emph{Neither $d(x)$ nor $d(z)$ is small}. This yields $\text{dom}(d(x)) \prec_\rho v$ and $\text{dom}(d(z)) \prec_\rho w$ on $L_0$. We claim that $v\preceq_\rho \text{dom}(d(z))$. Assume the contrary, that is, $\text{dom}(d(z))\prec_\rho v$. Since $v\prec_\rho z$, vertex $z$ cannot be $d(z)$-prime and therefore $z$ belongs to $L_0$. We obtain the order $\text{dom}(d(x))\prec_\rho v\prec_\rho z\prec_\rho w$ on $L_0$. As shown in \cref{fig:2-level-7}, face $f_{vw}$ violates the planarity of $\sigma(G)$. Thus, we conclude that $\text{dom}(d(x)) \prec_\rho v \preceq_\rho \text{dom}(d(z)) \prec_\rho w$. With $f_{vw}\prec_\lambda d(x)$, we get the situation illustrated in \cref{fig:2-level-8}, in which face $f_{xz}$ violates the planarity of $\sigma(G)$.
\end{itemize}
\item[--] \emph{Vertex $v$ is $f_{vw}$-prime but $w$ is not $f_{vw}$-prime}. By \cref{lem:nonprime-vxv}, the edge $(f_{vw},f_{xz})$ exists in $\mathcal{C}(G)$, which implies that $(u,v)$ and $(x,z)$ have been assigned to different pages.
\item[--] \emph{Vertices $v$ and $w$ are $f_{vw}$-prime}. Again, by \cref{lem:magic-revised}, the edge $(f_{vw},f_{xz})$ exists in $\mathcal{C}(G)$, which implies that $(u,v)$ and $(x,z)$ have been assigned to different pages.
\end{itemize}
From the above case analysis, we can conclude that edges $(v,w)$ and $(x,z)$ cannot be assigned to the same, which concludes the proof.
\end{proof}

\subsection{Inductive step: Multi-level instances}
\label{ssec:multiple-levels}

\begin{figure}[tb!]
	\centering	\includegraphics[width=.9\textwidth,page=1]{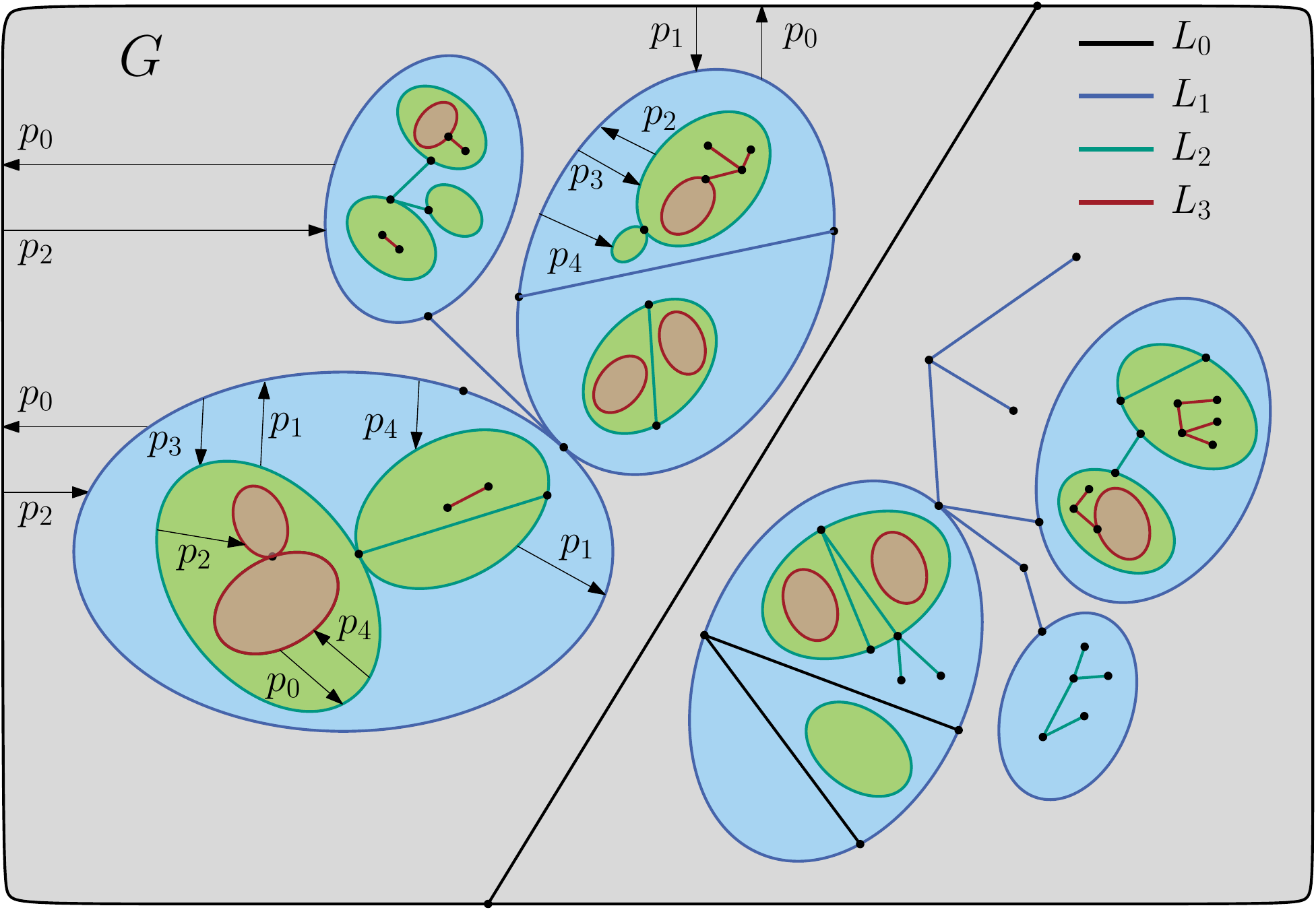}
	\caption{A multi-level instance $G$ with four levels of vertices, such that the bicomponents of $\hat{G}_2$ (which are shaded blue) form two connected components.
    Incoming edge and the two outgoing edges incident to the components are used to indicate page to which the backward edges and the the two sets of forward edges of each bicomponent are assigned, respectively.}
	\label{fig:multi-level-instance}
\end{figure}

In this section, we consider the general instances, which we call \emph{multi-level instances}, in which the input \framed{k} graph~$G$ consists of $q \ge 3$ levels $L_0,L_1,\ldots,L_{q-1}$. We refer to \cref{fig:multi-level-instance} for a schematic representation of a multi-level instance. Initially, we assume that the unbounded face of $\sigma(G)$ contains no crossing edges in its interior; we will eventually drop this assumption.
Recall that $G_i$ denotes the subgraph of $G$ induced by the vertices of $L_0 \cup \ldots \cup L_i$ containing neither chords of $\sigma_i(G)$ nor the crossing edges that are in the interior of the unbounded face of $\sigma(G)$.
We will further denote by $\hat{G}_i$ the subgraph of $G_i$ that is induced by the vertices of $L_{i-1} \cup L_i$ without the chords of $\sigma_{i+1}(G)$.
Observe that $\hat{G}_i$ is not necessarily connected; however, its maximal biconnected components, refered to as \emph{bicomponents} in the following, form two-level instances. To ease the description, we refer to the blocks of all bicomponents of $\hat{G}_i$ simply as the blocks of $\hat{G}_i$. In a book embedding of $G_i$, we say that two vertices of the level $L_j$ (with $j \leq i$) are \emph{sequential} if there is no other vertex of level $L_j$ between them along the spine. 
We say that a set $U$ of vertices of level $L_{j'}$ is \emph{$j$-delimited}, with $j' \neq j$, if either: (a) there exist two sequential vertices of level $L_{j}$ such that~all vertices of $U$ appear between them  along the spine, or (b) all vertices of $U$ are preceded or followed along the spine by all vertices of $L_j$.  

A book embedding~$\mathcal{E}_{i}$ of $G_i$ is \emph{good} if it satisfies the following properties\footnote{We stress at this point that even though Properties~P.\ref{prp:back-forward}, P.\ref{prp:forward} and P.\ref{prp:alter} might be a bit difficult to be parsed, they formalize the main idea of Yannakakis' algorithm for reusing the same set of pages in a book embedding. Notably, this formalization in the original seminal paper~\cite{DBLP:journals/jcss/Yannakakis89} is not present.}:

\begin{enumerate}[\bf {P.}1]
\item \label{prp:cyclicorder} The left-to-right order of the vertices on the boundary of each non-degenerate block $B$ of $\hat{G}_i$ in $\mathcal{E}_{i}$ complies with the order of these vertices in a counterclockwise (clockwise) traversal of the boundary of $B$, if $i$ is odd (even). 

\item \label{prp:vert-cons-1} All vertices of each block $B$ of $\hat{G}_i$, except possibly for its leftmost vertex, are consecutive and $(i-1)$-delimited.

\item \label{prp:block-adjacency-1} If between the leftmost vertex $\ell(B)$ of a block $B$ of $\hat{G}_i$ and the remaining vertices of $B$ there is a vertex $v$ of $L_{i}$ that belongs to a block $B'$ of $\hat{G}_i$ in the same connected component as $B$, such that the leftmost vertex $\ell(B')$ of $B'$ is to the left of $\ell(B)$, then $B$ and $B'$ share $\ell(B)$.

\item \label{prp:block-adjacency-2} Let $B$ and $B'$ be two blocks of $\hat{G}_i$ for which P.\ref{prp:block-adjacency-1} does not apply, and let $\ell(B)$ and $\ell(B')$ be their leftmost vertices. If $\ell(B)$ precedes $\ell(B')$, then either $\ell(B')$ precedes all remaining vertices of $B$ or all remaining vertices of $B'$ precede all remaining vertices~of~$B$.

\item \label{prp:vert-cons-2} For any $j \leq i-2$, all the vertices of each block of $\hat{G}_i$ are $j$-delimited.

\item \label{prp:pages} The edges of $G_i$ are assigned to $6\lceil k/2 \rceil + 5$ pages partitioned as  
(i) $P=\{p_0,\ldots,p_4\}$, and
(ii) $R^j = \{r^j_1,\ldots,r^j_{\lceil k/2 \rceil}\}$,
$B^j = \{b^j_1,\ldots,b^j_{\lceil k/2 \rceil}\}$,
$G^j = \{g^j_1,\ldots,g^j_{\lceil k/2 \rceil}\}$, $j\in\{0,1\}$. 

\item \label{prp:classification} The edges of ${G}_{i}$ are classified as backward, forward, or non-dominator in such a way that the following hold:

\begin{enumerate}[\bf\small a]
\item \label{prp:intra} For $\zeta \leq i$, the non-dominator edges of $\hat{G}_{\zeta}$ are assigned to $R^j \cup B^j \cup G^j$ with $j = \zeta \mod 2$.


\item \label{prp:leftmost} The edges that are incident to the leftmost vertex of a bicomponent of $\hat{G}_{i}$ and that are in its interior are backward.

\item \label{prp:back-forward} Let $\mathcal{B}_{i}$ be a bicomponent of $\hat{G}_{i}$. The backward edges of $\hat{G}_{i}$ in the interior of  $\mathcal{B}_{i}$ are assigned to a single page $b(\mathcal{B}_{i})$, while the forward edges are assigned to two pages $f_1(\mathcal{B}_{i})$ and $f_2(\mathcal{B}_{i})$ of $P$ different from $b(\mathcal{B}_{i})$; refer to \cref{fig:multi-level-instance}.

\item \label{prp:forward} Let $\mathcal{B}_{i-1}$ be a bicomponent of $\hat{G}_{i-1}$. 
The blocks $B_{i-1}^1,\ldots,B_{i-1}^\mu$ of $\mathcal{B}_{i-1}$ are the boundaries of several bicomponents of $\hat{G}_i$. Then, the forward edges of $\hat{G}_{i-1}$ incident to $B_{i-1}^j$, with $j=1,\ldots,\mu$, are either all assigned to $f_1(\mathcal{B}_{i-1})$ or to $f_2(\mathcal{B}_{i-1})$.
%

\item \label{prp:alter} Let $\langle p'_0, \ldots , p'_4 \rangle $ be a permutation of $P$. 
Assume that the backward edges of $\hat{G}_{i-2}$ that are in the interior of a bicomponent $\mathcal{B}_{i-2}$ of $\hat{G}_{i-2}$ have been assigned to $p'_0$ (in accordance with P.\ref{prp:back-forward}), while the forward edges of $\hat{G}_{i-2}$ that are in the interior of $\mathcal{B}_{i-2}$ have been assigned to $p'_1$ and $p'_2$ (in accordance to P.\ref{prp:back-forward} and P.\ref{prp:forward}). 
The blocks of $\mathcal{B}_{i-2}$ are the boundaries of several bicomponents $\mathcal{B}_{i-1}^1,\ldots,\mathcal{B}_{i-1}^\mu$ of $\hat{G}_{i-1}$. 
Consider now a bicomponent $\mathcal{B}_{i-1}^j$ with $1 \leq j \leq \mu$ of $\hat{G}_{i-1}$.
Assume w.l.o.g.~that the forward edges of $\mathcal{B}_{i-2}$ incident to $\mathcal{B}_{i-1}^j$ are assigned to $p'_1$.  
%
%
Then, the backward edges of $\mathcal{B}_{i-1}^j$ (which are incident to its blocks, and thus to the bicomponents of $\hat{G}_i$) are assigned to $p'_2$, while its forward edges to $ p_3'$ and $p_4'$. 
\end{enumerate}
\end{enumerate}

\noindent We next argue that the book embeddings computed by the algorithm of \cref{ssec:two-levels}  can be easily adjusted to become good.

\begin{lemma}\label{lem:good-two-levels}
Any two-level instance admits a good book embedding. 
\end{lemma}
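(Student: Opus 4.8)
The plan is to take the book embedding of $G$ computed by the algorithm of \cref{ssec:two-levels}, whose validity is established by \cref{th:2-level}, and to show that after a mere renaming of its pages it already fulfils every requirement of a good embedding. Since a two-level instance has only the levels $L_0$ and $L_1$, we are in the case $i=1$, so $\hat{G}_1 = G$; moreover, as $\sigma(G)$ is biconnected and spans all vertices, $G$ forms a single bicomponent whose leftmost vertex in $\rho$ is $u_0$. Two groups of requirements become vacuous here: Property~\ref{prp:vert-cons-2} asks for $j$-delimitedness only for $j \le i-2 = -1$, and the sub-conditions P.\ref{prp:forward} and P.\ref{prp:alter} of~\ref{prp:classification} speak about $\hat{G}_{i-1}$ and $\hat{G}_{i-2}$, which do not exist for $i=1$. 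Hence it suffices to verify Properties~\ref{prp:cyclicorder}--\ref{prp:block-adjacency-2}, \ref{prp:pages}, and the sub-conditions P.\ref{prp:intra}--P.\ref{prp:back-forward} of~\ref{prp:classification}.

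The combinatorial properties can be read off directly from $\rho$ and the facts already proved in \cref{ssec:two-levels}. Property~\ref{prp:cyclicorder} (counterclockwise, since $i=1$ is odd) is exactly Rule~\ref{r:3}: the vertices assigned to a non-degenerate block $B$ are placed in counterclockwise order along its boundary starting from its leader, and since $\ell(B)$ is assigned to a block preceding $B$ it is, by \cref{prp:block-vertex-order}, the leftmost boundary vertex of $B$ in $\rho$; thus the whole boundary appears in counterclockwise order. Property~\ref{prp:vert-cons-1} combines \cref{prp:consecutively}, which makes the vertices assigned to $B$ consecutive, with Rule~\ref{r:1}, which places them immediately after (or before) their common dominator $\text{dom}(B)\in L_0$ and hence between two sequential $L_0$-vertices, i.e.\ they are $0$-delimited. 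Property~\ref{prp:block-adjacency-1} is the statement of \cref{prp:ordering-connected}, since ``$\ell(B)$ is assigned to $B'$'' means precisely that $B$ and $B'$ share $\ell(B)$. Finally, Property~\ref{prp:block-adjacency-2} follows by a case distinction on whether the two blocks lie in the same connected component of $C_1(G)$ or not, invoking \cref{prp:ordering-disconnected} across components and \cref{prp:block-vertex-order,prp:ordering-descendants} within a component to exclude any improper interleaving.

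For the page requirements, I would reinterpret the $3\lceil k/2\rceil+2$ pages of \cref{th:2-level} inside the palette of Property~\ref{prp:pages}: the backward page becomes $p_0\in P$, the forward page becomes $p_1\in P$ (one of the two forward pages permitted by P.\ref{prp:back-forward}, the other staying empty), and the non-dominator pages $R^1,B^1,G^1$ serve the role prescribed by P.\ref{prp:intra} for the only relevant index $\zeta = i = 1$ with $j = 1 \bmod 2 = 1$; all remaining pages of the palette are left unused. The single-page validity of the backward and of the forward edges, required by P.\ref{prp:back-forward}, is exactly \cref{lem:backward,lem:forward}. For P.\ref{prp:leftmost}, I would observe that $u_0$ has the globally smallest subscript, so $u_0 = \text{dom}(f)$ for every intra-level face $f$ incident to it; consequently every edge incident to $u_0$ joins it to a vertex lying on a common face dominated by $u_0$, whence it is a dominator edge, and as $u_0$ precedes all other vertices in $\rho$ it is backward.

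The step I expect to require the most care is Property~\ref{prp:block-adjacency-2}: once the configuration of Property~\ref{prp:block-adjacency-1} has been excluded, one must show that two blocks can only nest or occur in full succession, never interleave improperly, and this needs careful bookkeeping of leaders and bodies in both the single-component case (via \cref{prp:block-vertex-order,prp:ordering-descendants}) and the multi-component case (via \cref{prp:ordering-disconnected}). A minor additional subtlety is the meaning of ``in its interior'' in P.\ref{prp:leftmost}, but it is harmless here, since the argument above shows that \emph{every} edge at $u_0$—whether drawn inside a face or lying on the outer cycle $C_0(G)$—is backward.
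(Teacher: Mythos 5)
Your verification of the ordering requirements matches the paper's proof almost verbatim: the paper also dispatches Properties~P.\ref{prp:cyclicorder}--P.\ref{prp:block-adjacency-2} via Rules~\ref{r:1}--\ref{r:3} and \cref{prp:consecutively,prp:ordering-connected,prp:ordering-disconnected}, notes that P.\ref{prp:vert-cons-2} and P.\ref{prp:alter} are void for two levels, and gets P.\ref{prp:pages}, P.\ref{prp:intra} and P.\ref{prp:leftmost} exactly as you do. The genuine gap is your treatment of Property~P.\ref{prp:forward}: you declare it vacuous for $i=1$ and keep \emph{all} forward edges on the single page $p_1$, leaving $p_2$ empty. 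The paper's proof does precisely the opposite, and this reassignment is the only substantive modification it makes to the embedding of \cref{ssec:two-levels}: it roots each connected component of the blocks of $C_1(G)$ at the degenerate block of its first vertex and assigns the forward edges towards blocks at odd (even) distance from that root to $p_1=f_1(\mathcal{B}_1)$ (to $p_2=f_2(\mathcal{B}_1)$, respectively). The split costs nothing for validity---by \cref{lem:forward} all forward edges fit on one page, so any bipartition of them is crossing-free---but it cannot be omitted.

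The reason your literal reading fails is that P.\ref{prp:forward} is forward-looking: the blocks of $C_1(G)$ in the two-level instance become the boundaries of the bicomponents of the next level, so the goodness of $\mathcal{E}_{i+1}$ requires that the forward edges of $\hat{G}_i$---which were assigned when the two-level piece was embedded---are grouped per block onto two pages, with blocks sharing a cut vertex alternating between them. The inductive proof of \cref{lem:good-multi-levels} uses this explicitly (``By \cref{lem:good-two-levels}, \dots its forward edges have been assigned to $p_1$ and $p_2$''), and, via P.\ref{prp:alter}, the page carrying a bicomponent's incoming forward edges determines the page of its backward edges; the parity alternation is exactly what places the backward edges of two bicomponents whose boundary blocks share a leader on \emph{different} pages. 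Such backward edges genuinely can cross: if the leftmost vertex of $B_x$ lies within the body of $B_j$ and the body of $B_x$ follows it along the spine, a backward edge of $\mathcal{B}_j$ and one of $\mathcal{B}_x$ interleave rather than nest. Under your all-on-$p_1$ assignment every next-level bicomponent would receive the same backward page, and the crossing argument for page $p'_2$ in \cref{lem:good-multi-levels} collapses. To repair the proof, add the parity-based redistribution of forward edges over $p_1$ and $p_2$; everything else in your proposal stands.
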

\begin{proof}
To prove the lemma, we show that the book embedding $\mathcal{E}$ of a two-level instance $G$ computed by the algorithm of \cref{ssec:two-levels} can be slightly modified to satisfy the properties of a good book embedding. 
We first observe that Properties~P.\ref{prp:vert-cons-2} and P.\ref{prp:alter} are clearly satisfied, since $G$ consists of only two levels. Regarding the remaining properties, we argue as follows. 
Property~P.\ref{prp:cyclicorder} holds by construction. 
Property~P.\ref{prp:vert-cons-1} directly follows from \cref{prp:consecutively} of \cref{sse:linearorder} and Rule~\ref{r:1} of the constructed linear order. 
Property~P.\ref{prp:block-adjacency-1} follows from \cref{prp:ordering-connected} of \cref{sse:linearorder}. 
Property~P.\ref{prp:block-adjacency-2} follows from \cref{prp:ordering-disconnected} of \cref{sse:linearorder}. 
Property~P.\ref{prp:pages} follows from the page assignment described in \cref{sse:assignment}; 
in particular, since $G$ consists of only two levels, its backward edges can be assigned to page $p_0$ by \cref{lem:backward}, while its non-dominator edges can be assigned to pages in $R^1 \cup B^1 \cup G^1$. 
Hence, Property~P.\ref{prp:intra} holds.
Property~P.\ref{prp:leftmost} holds by the definition of backward edges.  
Finally, as already discussed, the backward edges of $G$ are assigned to a single page $p_0$ of $P$ in $\mathcal{E}$. Further, by \cref{lem:forward} all forwards edges of $G$ can be embedded in a single page of $P$ in $\mathcal{E}$. However, in order to satisfy Property~P.\ref{prp:forward}, we reassign the forward edges to two pages of $P$ in $\mathcal{E}$ as follows. Assume that each connected component of the blocks of $G$ is rooted at the degenerate block corresponding to its first vertex. We assign the forward edges towards the blocks that are at odd (even) distance from such a root block to $p_1=f_1(\mathcal{B}_{1})$ ($p_2=f_2(\mathcal{B}_{1})$, resp.) of $P$, where $G=\mathcal{B}_1$.
\end{proof}

\noindent Finally, the next lemma deals with good book embeddings of multi-level instances. 

\begin{lemma}\label{lem:good-multi-levels}
Any multi-level instance admits a good book embedding.
\end{lemma}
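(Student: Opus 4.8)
The plan is to argue by induction on the levels, turning a good book embedding $\mathcal{E}_i$ of $G_i$ into a good book embedding $\mathcal{E}_{i+1}$ of $G_{i+1}$; the base case $\mathcal{E}_1$ of the two-level instance $G_1$ is provided by \cref{lem:good-two-levels}. The construction rests on the recursive shape of the peeling-into-levels decomposition: the non-degenerate blocks of $\hat{G}_i$ lying on level $L_i$ are precisely the simple cycles of $C_i(G)$, and each such cycle is the outer boundary of a bicomponent of $\hat{G}_{i+1}$, whose level-$L_{i+1}$ vertices play the role of $L_1$. Hence every bicomponent of $\hat{G}_{i+1}$ is a two-level instance whose $L_0$-cycle already lies on the spine of $\mathcal{E}_i$, and by Property~P.\ref{prp:cyclicorder} these $L_i$-vertices occur in an order complying with a clockwise/counterclockwise traversal of the cycle. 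This is exactly the input that the two-level algorithm of \cref{ssec:two-levels} expects for $u_0,\dots,u_{s-1}$, the parity of $i$ accounting for the orientation reversal between nested levels.

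\textbf{Linear ordering.} Keeping the relative order of the vertices of $G_i$ fixed, I would run the two-level construction on each bicomponent of $\hat{G}_{i+1}$ using its already embedded $L_0$-cycle, inserting the new $L_{i+1}$-vertices via Rules~\ref{r:1}--\ref{r:3} right next to the vertices of the cycle that discovers them. Distinct bicomponents own disjoint sets of $L_{i+1}$-vertices placed inside disjoint spans of the spine, so these insertions do not interfere. I would then check that $\mathcal{E}_{i+1}$ inherits the structural invariants for $\hat{G}_{i+1}$: Property~P.\ref{prp:cyclicorder} from Rule~\ref{r:3}, Property~P.\ref{prp:vert-cons-1} from \cref{prp:consecutively}, Property~P.\ref{prp:block-adjacency-1} from \cref{prp:ordering-connected}, and Property~P.\ref{prp:block-adjacency-2} from \cref{prp:ordering-disconnected}. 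Property~P.\ref{prp:vert-cons-2} for the new blocks follows because each $L_{i+1}$-block is inserted strictly within the span of its $L_i$-cycle, which is delimited for all $j\le i-2$ in $\mathcal{E}_i$; since inserting new vertices between existing ones never alters the relative order (nor the crossings) among vertices of lower levels, all invariants of $\mathcal{E}_i$ restricted to $G_i$ persist.

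\textbf{Page assignment.} The non-dominator edges of $\hat{G}_{i+1}$ are assigned to $R^j\cup B^j\cup G^j$ with $j=(i+1)\bmod 2$, as required by Property~P.\ref{prp:intra}. Within a single bicomponent this is valid by the analysis of \cref{th:2-level} (the conflict graph is $3$-colourable); distinct bicomponents of the same level cannot produce crossing non-dominator edges because their vertices are suitably delimited; and the two levels reusing the same parity $j$ (namely $\hat{G}_{i+1}$ and $\hat{G}_{i-1}$) are vertex-disjoint, so $6\lceil k/2\rceil$ pages suffice for all non-dominator edges over all levels. The backward and forward edges are placed on the five pages of $P$ following the alternation scheme: the backward edges of a bicomponent of $\hat{G}_{i+1}$ are routed to the single page that carried the forward edges of its parent pointing into it, and its forward edges to the two pages left free, exactly as prescribed by Properties~P.\ref{prp:back-forward}--P.\ref{prp:alter}, using the two-colouring of forward edges by block-distance parity set up in \cref{lem:good-two-levels}.

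\textbf{Main obstacle.} I expect the crux to be the verification that the five pages of $P$ never see a crossing under this alternation. The delicate point is that a backward edge of a level-$(i+1)$ bicomponent shares its page with the forward edges of its level-$i$ parent that point to \emph{sibling} bicomponents; one must use Properties~P.\ref{prp:vert-cons-1}--P.\ref{prp:vert-cons-2} together with P.\ref{prp:forward} to argue that these forward edges occupy spine intervals disjoint from the child's backward edges, and symmetrically that forward edges of different levels sharing a page nest rather than cross. Showing that the assignment of Properties~P.\ref{prp:back-forward}--P.\ref{prp:alter} is simultaneously consistent and crossing-free across \emph{all} levels, not merely for a single pair of adjacent levels, is the technical heart of the argument. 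Finally, I would drop the working assumption on the unbounded face: the crossing edges caged there form a clique on $L_0$, which is vertex-disjoint from every $\hat{G}_\zeta$ with $\zeta\ge 2$ and can therefore be accommodated within the slack of the page budget by a treatment symmetric to that of the non-dominator edges discovered at level $0$, completing the proof.
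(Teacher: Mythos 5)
Your overall route is the same as the paper's (induction on levels, bicomponents of $\hat{G}_{i+1}$ as two-level instances merged into the spine of $\mathcal{E}_i$ with orientation flipped for even $i$, parity-based reuse of $R^j\cup B^j\cup G^j$, and the alternation-with-parity-bit scheme on the five pages of $P$), but the proof has genuine gaps where you substitute disjointness claims for the paper's actual delimitation and classification arguments. First, your claim that distinct bicomponents ``own disjoint sets of $L_{i+1}$-vertices placed inside disjoint spans of the spine'' is false in general: the blocks bounding sibling bicomponents are blocks of a cactus and may share cut vertices, and by Property~P.\ref{prp:vert-cons-1} only the vertices of a block \emph{other than its leftmost one} are consecutive, so the span of one bicomponent can sit between the leftmost vertex $\ell(B)$ of another block $B$ and the remaining vertices of $B$ --- exactly the interleaving that Properties~P.\ref{prp:block-adjacency-1} and P.\ref{prp:block-adjacency-2} are there to regulate. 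The paper excludes same-level crossings not by disjointness but by observing that any edge escaping the consecutive span must be incident to the leftmost vertex of its block and is therefore backward by Property~P.\ref{prp:leftmost}, hence absent from the non-dominator pages; on the pages of $P$, it uses the fact that blocks sharing a leftmost vertex receive opposite parity bits, so the potentially conflicting edges are assigned to different pages among $\{p'_1,p'_2,p'_3,p'_4\}$. Second, ``vertex-disjoint, so $6\lceil k/2\rceil$ pages suffice'' is a non sequitur: vertex-disjoint edges can perfectly well cross in a linear layout. What actually prevents crossings between $\hat{G}_{i+1}$ and $\hat{G}_{i-1}$ on the shared parity pages is Property~P.\ref{prp:vert-cons-2}: the endpoints of each new edge are $(i-1)$-delimited, which forces nesting. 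The same delimitation argument --- not ``slack in the page budget'' --- justifies the final step: the crossing edges in the unbounded face of $\sigma(G)$ go to $R^0\cup B^0\cup G^0$ and avoid crossings because the edges already on those pages are $0$-delimited.

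Beyond these two incorrect justifications, the paragraph you label ``Main obstacle'' concedes rather than closes the central verification: showing that no two edges on the pages of $P$ cross under Properties~P.\ref{prp:back-forward}--P.\ref{prp:alter} \emph{is} the body of the paper's proof (a case analysis for the backward page $p'_2$, then for the forward pages $p'_3,p'_4$, combining P.\ref{prp:alter} with P.\ref{prp:leftmost}, P.\ref{prp:vert-cons-1}, P.\ref{prp:block-adjacency-1}, P.\ref{prp:block-adjacency-2} and the parity bits $\epsilon(\cdot)$). Writing ``one must use Properties \dots to argue'' identifies the right ingredients but leaves the lemma unproved; as it stands, your text is a correct high-level plan matching the paper's strategy, with the decisive no-crossing analysis missing and two of its intended shortcuts unsound.
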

\begin{proof}
Assume that we have recursively computed a good book embedding $\mathcal{E}_i$ of $G_i$. We next show how to extend $\mathcal{E}_i$ to a good book embedding $\mathcal{E}_{i+1}$ of $G_{i+1}$. Note that $G_{i+1}$ is the union of $G_i$ and $\hat{G}_{i+1}$, which share the vertices of $L_i$ and the edges of $C_i(G)$. 

Consider the set $\mathcal{H}$ of bicomponents $\mathcal{B}_1,\ldots,\mathcal{B}_\chi$ of $\hat{G}_{i+1}$. As already mentioned, each of the bicomponents in $\mathcal{H}$ forms a two-level instance. 
Consequently, the vertices delimiting the unbounded faces of $\mathcal{B}_1,\ldots,\mathcal{B}_\chi$ form blocks $B_1,\ldots,B_\chi$ of $\hat{G}_i$, which in turn form a set of cacti in $\sigma_i(G)$. We assume that each connected component in this set is rooted at one of its blocks. This allows as to associate each bicomponent $\mathcal{B}_i$ out of the initial ones with a root bicomponent denoted by $r(\mathcal{B}_i)$, $i=1,\ldots,\chi$. This further allows us to also associate each bicomponent $\mathcal{B}_i$ with a parity bit $\epsilon(\mathcal{B}_i)$ that expresses whether the distance between $\mathcal{B}_i$ and $r(\mathcal{B}_i)$ is odd or even.

We process the bicomponents of $\mathcal H$ one by one as follows.
Assume now that we have processed the first $x-1 < \chi$ bicomponents $\mathcal{B}_1,\ldots,\mathcal{B}_{x-1}$ of $\mathcal{H}$ and that we have extended $\mathcal{E}_{i}$ to a good book embedding $\mathcal{E}_i^{x-1}$ of $G_i$ together with $\mathcal{B}_1,\ldots,\mathcal{B}_{x-1}$. Consider the next bicomponent $\mathcal{B}_x$ of $\hat{G}_{i+1}$ in $\mathcal{H}$. Observe that the boundary of $\mathcal{B}_x$ is a simple cycle consisting of vertices of level $L_i$. As a result, the vertices and the edges of this cycle are present in $G_i$ and therefore they have been embedded in $\mathcal{E}_{i}$ and thus in $\mathcal{E}_{i}^{x-1}$. 

In the following, we show how to extend $\mathcal{E}_{i}^{x-1}$ to a good book embedding $\mathcal{E}_{i}^{x}$ of $G_i$ together with $\mathcal{B}_1,\ldots,\mathcal{B}_x$. Once all blocks in $\mathcal{H}$ have been processed, the obtained book embedding $\mathcal{E}_{i}^{\chi}$ is the desired good book embedding $\mathcal{E}_{i+1}$ of $G_{i+1}$. 
The vertices that delimit the unbounded face of $\mathcal{B}_x$ form a block $B_x$ of $\hat{G}_i$. By Property~P.\ref{prp:cyclicorder}, their left to right order in $\mathcal{E}_{i}^{x-1}$ (say $u_0,\ldots,u_{s-1}$) complies with the order in which these vertices appear in either a counterclockwise or in a clockwise traversal of the boundary of $B_x$, depending on whether if $i$ is odd or even, respectively. We proceed by computing a good book embedding $\mathcal{E}_x$ of $\mathcal{B}_x$ which exists by \cref{lem:good-two-levels}, such that the left-to-right order of the vertices of $\mathcal{B}_x$ is $u_0,\ldots,u_{s-1}$ in $\mathcal{E}_x$. Note that this can be achieved by flipping $\mathcal{B}_x$, if $i$ is even. Further, note that $\mathcal{E}_{x}$ is good by \cref{lem:good-two-levels}. We extend $\mathcal{E}_{i}^{x-1}$ to a good book embedding $\mathcal{E}_{i}^{x}$ in two steps as follows. 

In the first step, for $j=0,1,\ldots, s-2$, the vertices of $\mathcal{B}_x$ that appear between $u_j$ and $u_{j+1}$ in $\mathcal{E}_{x}$, if any, are embedded right before $u_{j+1}$ in $\mathcal{E}_{i}^{x-1}$ in the same left-to-right order as in $\mathcal{E}_{x}$; also, the vertices of $\mathcal{B}_x$ that appear after $u_{s-1}$ in $\mathcal{E}_{x}$, if any, are embedded right after $u_{s-1}$ in $\mathcal{E}_{i}^{x-1}$ in the same left-to-right order as in $\mathcal{E}_{x}$.  Let $\mathcal{E}_{i}^{x}$ be the resulting embedding (which still does not contain all the edges of $\mathcal{B}_x$). Since $\mathcal{E}_{x}$ is a good book embedding and since we do not change relative order of the vertices of $\mathcal{B}_x$ in $\mathcal{E}_{x}$ and in $\mathcal{E}_{i}^{x}$, Properties~P.\ref{prp:cyclicorder} and P.\ref{prp:vert-cons-1} hold for $\mathcal{E}_{i}^{x}$. 
Since Property~P.\ref{prp:vert-cons-1} holds for block $B_x$ in $\mathcal{E}_i^{x-1}$, it follows that there is no vertex of level $L_j$, with $j\leq i-1$, in $\mathcal{E}_{i}^{x-1}$ between any two vertices of $\{u_1,\ldots,u_{s-1}\}$. This and the fact that we have placed the remaining vertices of $\mathcal{B}_x$ either right before or right after any of $u_1,\ldots,u_{s-1}$ implies that there exists no vertex of level $L_{j}$, with $j \leq i-1$, between the vertices of $\mathcal{B}_x$ along the spine, which proves  Property~P.\ref{prp:vert-cons-2} for $\mathcal{E}_{i}^{x}$.

In the second step, we assign the internal edges of $\mathcal{B}_x$ to the already existing pages of $\mathcal{E}_{i}^{x}$ to complete the embedding, which also implies that Property~P.\ref{prp:pages} will not be deviated. 
This step will complete the extension of $\mathcal{E}_{i}^{x-1}$ to $\mathcal{E}^x_i$. The assignment is done in a straight-forward manner. 
The backward, forward, and non-dominator edges of $\mathcal{E}_x$  that are internal in $\mathcal{B}_x$ will be classified as backward, forward, and non-dominator, respectively, also in $\mathcal{E}_{i}^{x}$, which guarantees Property~P.\ref{prp:classification}. 
To guarantee that Property~P.\ref{prp:intra} holds for $\mathcal{E}_x$, we proceed as follows. The non-dominators edges of $\mathcal{E}_x$ that are internal in $\mathcal{B}_x$ and are assigned to $r^1_1,\ldots,r^1_{\lceil k/2 \rceil}$, $b^1_1,\ldots,b^1_{\lceil k/2 \rceil}$, $g^1_1,\ldots,g^1_{\lceil k/2 \rceil}$ in $\mathcal{E}_x$ are assigned to $r^j_1,\ldots,r^j_{\lceil k/2 \rceil}$, $b^j_1,\ldots,b^j_{\lceil k/2 \rceil}$, $g^j_1,\ldots,g^j_{\lceil k/2 \rceil}$ in $\mathcal{E}_{i}^{x}$, respectively, where $j=i+1 \mod 2$. Hence, Property~P.\ref{prp:intra} holds for $\mathcal{E}_x$, as desired. 

We now show that no two edges assigned to any of these pages cross. Assume for a contradiction that there is a crossing in page $p \in R^j \cup B^j \cup G^j$ with $j=i+1 \mod 2$. Since $\mathcal{E}_{i}^{x-1}$ is a good book embedding, this crossing must necessarily involve an edge $e$ of $\mathcal{B}_x$. Let $e'$ be the second edge involved in the crossing. We distinguish two cases: $(i)$~$e'$ belongs to one of $\mathcal{B}_1,\ldots,\mathcal{B}_x$, and $(ii)$ $e'$ belongs to some previously embedded graph $\hat{G}_\zeta$ with $\zeta < i+1$. In Case~$(i)$, we first observe that $e'$ cannot belong to $\mathcal{B}_x$, as otherwise $e$ and $e'$ would also cross in $\mathcal{E}_x$, contradicting the fact that $\mathcal{E}_x$ is a good book embedding of $\mathcal{B}_x$. Hence, we may assume that $e'$ belongs to $\mathcal{B}_j$ with $j<x$. Since $e \in \mathcal{B}_x$ and $e'\in \mathcal{B}_j$, by Property~P.\ref{prp:vert-cons-1}, at least one of $e$ and $e'$ must be incident to the leftmost vertex of the blocks $B_x$ and $B_j$ that delimit the unbounded faces of $\mathcal{B}_x$ and $\mathcal{B}_j$, respectively, which, by Property~P.\ref{prp:leftmost}, implies that at least one of them is backward; a contradiction. Consider now Case~$(ii)$ and recall that in this case $e'$ belongs to some graph $\hat{G}_\zeta$ with $\zeta < i+1$. Since $e$ and $e'$ cross in $p$, it follows that $\zeta \equiv i+1 \mod 2$. The latter property further implies that $\zeta \leq i-1$. In this case, however, Property~P.\ref{prp:vert-cons-2} implies the endpoint of edge $e$ are $(i-1)$-delimited, which in turn implies that $e$ and $e'$ nest, which contradicts our initial assumption.

By \cref{lem:good-two-levels}, all backward edges of $\mathcal{E}_x$ have been assigned to page $p_0$ in $\mathcal{E}_x$, while its forward edges have been assigned to $p_1$ and $p_2$; also, recall that no edge of $\mathcal{E}_x$ has been assigned to pages $p_3$ and $p_4$. To guarantee Property~P.\ref{prp:back-forward} in $\mathcal{E}_{i}^{x}$, the backward edges of $\mathcal{E}_x$ that are interior to $B_x$ will be assigned to $\mathcal{E}_{i}^{x}$ to a common page $p$ of $P$ (i.e., not necessarily to $p_0$), while the corresponding forward edges assigned to $p_1$ and $p_2$ in $\mathcal{E}_x$ will be reassigned to two pages $f_1$ and $f_2$, respectively.

To determine pages $p$, $f_1$ and $f_2$, we have to take into account Properties P.\ref{prp:forward} and P.\ref{prp:alter} that hold for $\mathcal{E}_{i}^{x-1}$. Assume first that $i \geq 3$; the case $i=2$ is immediate. Then, there is a bicomponent $\mathcal{B}_{i-2}$ of $\hat{G}_{i-2}$, whose boundary vertices form a cycle that, in $G_{i+1}$,  contains the bicomponent $\mathcal{B}_x$ in its interior. Assume w.l.o.g.\ that the backward edges of $\mathcal{B}_{i-2}$ are assigned to page $p'_0 \in P$, in accordance to P.\ref{prp:back-forward}. It follows by P.\ref{prp:alter} that we may further assume w.l.o.g.\ that all the backward edges of the bicomponents of $\hat{G}_{i-1}$, whose boundaries are blocks of $\mathcal{B}_{i-2}$, have been assigned to pages $p'_1$ and $p'_2$ different from $p'_0$. 
Assume also, w.l.o.g., that the forwards edges of $\mathcal{B}_{i-2}$ incident to $\mathcal{B}_x$ have been assigned to $p'_1$.
By Property P.\ref{prp:alter}, this implies that the backward (forward) edges of bicomponent $\mathcal{B}_x$ must be assigned to page $p'_2$ (to $p'_3$ and $p'_4$, respectively). Note that also of all the previously processed bicomponents of $\hat{G}_{i+1}$ in $\mathcal{H}$ make use of these three pages plus the page $p'_1$. 
Hence, both Properties~P.\ref{prp:back-forward} and P.\ref{prp:alter} are satisfied. 
%
The choice between the two pages $p_3'$ and $p_4'$ is done based on the parity bit $\epsilon(\mathcal{B}_x)$, so that, all forward edges of all bicomponents in $\mathcal{H}$ having the same parity bit will be assigned to the same page in $\{p'_3,p'_4\}$, thus guaranteeing that Property~P.\ref{prp:back-forward} holds for $\mathcal{E}_i^x$.

We conclude the proof by showing that no two edges assigned to pages in $\{p'_2,p'_3,p'_4\}$ cross in $\mathcal{E}_i^x$. We first focus on page $p'_2$. Clearly, no two edge in $p'_2$ belonging to $\mathcal{B}_x$ can cross, since $\mathcal{E}_x$ is a good book embedding. Hence, if there is a crossing in $p'_2$ it must involve an edge $e$ in $\mathcal{B}_x$ and an edge $e'$ of either $G_i$ or of one of the previously embedded bicomponents of $\hat{G}_{i+1}$ in $\mathcal{H}$. We first consider the case, in which $e'$ belongs to $G_i$. In particular, by Property~\ref{prp:vert-cons-2} since all the vertices of $B_x$ are $(i-2)$-delimited, it follows that $e'$ is an edge of $\hat{G}_i$. By Property~P.\ref{prp:alter}, $e'$ must be incident to the leftmost vertex of $B_x$. Now, observe that the edges of $\mathcal{B}_x$ that are incident to the leftmost vertex of $\mathcal{B}_x$ in $\mathcal{E}_x$ are by definition backward; thus, they are not assigned to $p'_2$. Since by Property~P.\ref{prp:vert-cons-1} the remaining vertices of $\mathcal{B}_x$ are $(i-1)$-delimited, it follows that if there exists a crossing in page $p_2$, this should involve a previously embedded bicomponent of $\hat{G}_{i+1}$ in $\mathcal{H}$. As a result, we can assume that $e'$ belongs to $\mathcal{B}_j$, with $j<x$. 
Let $B_x$ and $B_j$ be the blocks that delimit the unbounded faces of $\mathcal{B}_x$ and $\mathcal{B}_j$, respectively. Since $e \in \mathcal{B}_x$ and $e'\in \mathcal{B}_j$, by P.\ref{prp:block-adjacency-2}, it follows that $B_x$ and $B_j$ belong to the same connected component $C$ formed by the blocks of $\hat{G}_i$. By Property~P.\ref{prp:vert-cons-1}, at least one of $e$ and $e'$ must be incident to the leftmost vertex of $B_x$ or $B_j$ in $\mathcal{E}_{i}^{x}$, respectively. Since $B_x$ and $B_j$ belongs to $C$, by Property~P.\ref{prp:block-adjacency-1}, $B_x$ and $B_j$ must share a common vertex, which implies that $\mathcal{B}_x$ and $\mathcal{B}_j$ have different parity bits, i.e. $\epsilon(\mathcal{B}_x) \neq \epsilon(\mathcal{B}_j)$. However, since $e$ is assigned to $p'_2$, edge $e'$ is assigned to $p_1'$, contradicting our assumption that $e$ and $e'$ cross. Hence, we can conclude that there is no two edges assigned to $p'_2$ that cross in $\mathcal{E}_i^x$.

We now focus on the edges of $\{p'_3,p'_4\}$. Assume w.l.o.g.\ that $e$ is assigned to $p'_3$. As above, we argue that $e'$ either belongs to $G_i$ (in particular, to $\hat{G}_i$) or to one of the previously embedded bicomponents of $\hat{G}_{i+1}$ in $\mathcal{H}$. The former case is actually not possible, since by Property~\ref{prp:alter} there is no edge of $\hat{G}_i$ assigned to $p_3'$ that is incident to $\mathcal{B}_x$. So, we may focus on the latter case, in which $e'$ belongs to $\mathcal{B}_j$, with $j<x$. As above, we can conclude that $\mathcal{B}_x$ and $\mathcal{B}_j$ should belong to the same connected component $C$ formed by the blocks of $\hat{G}_i$, and in particular, the corresponding blocks $B_x$ and $B_j$ that delimit their unbounded faces share a common vertex, which implies that $\mathcal{B}_x$ and $\mathcal{B}_j$ have different parity bits. In this case, however, the involved edges $e$ and $e'$ are assigned to $p'_1$ and $p_2'$, and thus they cannot cross in $p_3'$.

From the discussion above, we can conclude that $\mathcal{E}_{i}^{x}$ is in fact a good book embedding.  However, recall that we initially assumed that the unbounded face of $\sigma(G)$ contains no crossing edges in its interior, to support the recursive strategy. We complete the proof by dropping this assumption as follows. We assign these edges to the pages of $R^0 \cup B^0 \cup G^0$, which results in a good book embedding of $G$, since the endvertices of the edges already assigned to these pages are $0$-delimited.  
\end{proof}

\noindent Altogether, \cref{lem:good-multi-levels} in conjunction with \cref{lem:good-two-levels} completes the proof of \cref{th:main}.

\section{Application of \cref{th:main} to map graphs}\label{sec:map}

We begin by formally defining map graphs (refer also to~\cite{DBLP:journals/jacm/ChenGP02}). A \emph{map graph} $G$ is one that admits a \emph{map} $\mathcal M$, i.e., a bijection that puts in correspondence each vertex $v$ of $G$ with a region $\mathcal M(v)$ of the sphere homeomorphic to a closed disk, called \emph{nation}, in such a way that the following properties hold: (i) the interiors of any two distinct nations are disjoint, and (ii)  two vertices $u$ and $v$ are adjacent in $G$  if and only if the boundaries of $\mathcal M(u)$ and $\mathcal M(v)$ intersect. The points of the sphere that are not covered by any nation fall into open connected regions; the closure of each such region is a \emph{hole} of $\mathcal M$. A \emph{\map{k}} graph (with $k>1$) is a graph that admits a map $\mathcal M$ such that at most $k$ nations intersect in a single point. Also, $G$ is \emph{well-formed} if for every edge $(u,v)$ of $G$ the intersection of $\mathcal M(u)$ and $\mathcal M(v)$ is either a single point or a single curve segment. Moreover, if $\mathcal M$ does not contain holes, $G$ is a \emph{\hfmap{k}}.

In order to prove \cref{co:map}, we first deal with a simpler case. Namely, we prove that well-formed \hfmap{k} graphs are \framed{k}, which, by \cref{th:main}, implies they have book thickness at most $6\lceil \frac{k}{2} \rceil + 5$.

\begin{figure}[tb!]
		\centering
		\begin{subfigure}{.32\textwidth}
			\centering
			\includegraphics[width=.9\textwidth,page=1]{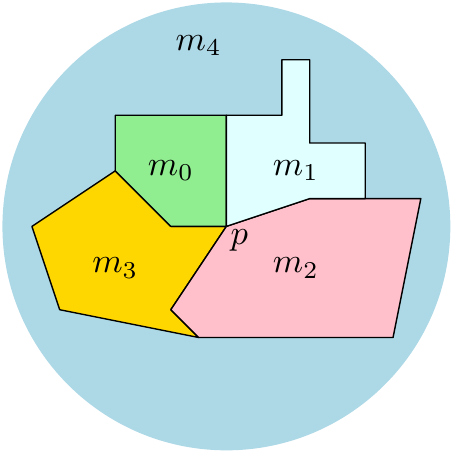}
			\subcaption{$\mathcal M$}
		\end{subfigure}
		\begin{subfigure}{.32\textwidth}
			\centering
			\includegraphics[width=.9\textwidth,page=2]{figures/map}
			\subcaption{$\mathcal M'$}
		\end{subfigure}
		\begin{subfigure}{.32\textwidth}
			\centering
			\includegraphics[width=.9\textwidth,page=3]{figures/map}
			\subcaption{$\Gamma'$}
		\end{subfigure}
		\caption{%
			Illustration for the proof of \cref{lem:wellformed}. (a) A well-formed \hfmap{4} graph $G$ with one $h$-point with $h>3$, denoted by $p$. (b) The well-formed \map{3} graph $G'$ obtained by deleting $p$. (c) A drawing of the \skeleton of $G'$. }
		\label{fig:map}
	\end{figure}

\begin{lemma}\label{lem:wellformed}
Every well-formed \hfmap{k} graph is \framed{k}.
\end{lemma}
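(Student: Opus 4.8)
The plan is to read off a \framed{k} drawing directly from the map. Since $G$ is hole-free, the nations tile the sphere, and their boundaries form a planar subdivision $\mathcal{S}$ of the sphere: its vertices are the \emph{branch points} (points where three or more nations meet), its edges are the maximal border segments shared by exactly two nations, and its faces are the nation interiors (each homeomorphic to a disk). I would place a skeleton vertex $v_N$ in the interior of every nation $N$ and, for each border segment separating two nations $N$ and $N'$, draw a crossing-free edge $v_N v_{N'}$ across that segment; the resulting plane graph $\sigma$ is exactly the dual of $\mathcal{S}$, so it is planar and spans all vertices of $G$.

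Next I would verify the structural requirements. Well-formedness guarantees that the intersection of any two adjacent nations is a single point or a single segment, which forbids two nations from sharing two distinct segments; together with the fact that each nation is a disk, this makes $\sigma$ simple (no parallel edges, no self-loops). The faces of $\sigma$ correspond to the branch points of $\mathcal{S}$: the face $f_p$ associated with a branch point $p$ is bounded by the vertices $v_{N_1},\dots,v_{N_h}$ of the $h$ nations meeting at $p$, listed in their cyclic order around $p$, with consecutive ones joined by the skeleton edges crossing the segments incident to $p$. Because the map is \map{k} and hole-free, exactly $h\le k$ nations occupy the wedges around $p$, so every face of $\sigma$ has degree at most $k$.

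It then remains to place the edges of $G$ that are not yet drawn. These are precisely the \emph{point-adjacencies}: pairs of nations that are non-consecutive around some branch point $p$ and, by well-formedness, touch only at $p$. Each such edge I would draw as a chord of the face $f_p$, in its interior; these are the crossing edges, and since the $h\le k$ nations around $p$ form a clique on the boundary of $f_p$, this realizes all remaining adjacencies of $G$ within the faces of $\sigma$, exactly as required by the definition of a \framed{k} drawing.

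The delicate part is showing that $\sigma$ is biconnected, equivalently that no branch point is a cut vertex of $\mathcal{S}$ and hence that each face of $\sigma$ is bounded by a simple cycle. Here I would use hole-freeness crucially: tracing a closed curve immediately outside the boundary of a nation $N$ stays within the union of the nations surrounding $N$, and crossing from one such nation to the next always happens across a shared border segment. This exhibits a closed walk through the neighbors of $v_N$ that avoids $v_N$, so deleting $v_N$ cannot disconnect them, and the rest of the sphere attaches to this ring; thus $\sigma$ has no cut vertex. An equivalent, more operational way to organize the argument, illustrated in \cref{fig:map}, is to repeatedly ``open up'' each branch point of multiplicity at least four into a small face, turning the map into one in which at most three nations meet at any point; the resulting planar structure is the skeleton, and the opened faces receive the diagonal crossing edges. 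I expect this biconnectivity step to be the main obstacle, since it is the only place where hole-freeness is genuinely needed and where a naive dual construction could fail.
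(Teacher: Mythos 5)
Your proposal is correct and is essentially the paper's own construction in different packaging: your dual $\sigma$ of the boundary subdivision, with one face $f_p$ of degree $h\le k$ per branch point $p$, is exactly the \skeleton the paper obtains by ``deleting'' each $h$-point with $h\ge 4$ (your own ``opening up'' remark), and your ring-of-neighbors argument for biconnectivity is the same geometric fact the paper phrases as every hole of the modified map intersecting a set of nations whose induced graph contains a cycle. The remaining ingredients --- simplicity of the skeleton from well-formedness plus each nation being a disk, spanning by construction, face degree at most $k$ from the \map{k} condition together with hole-freeness, and the point-adjacencies (non-consecutive pairs at an $h$-point, which by well-formedness meet only there) drawn as crossing chords caged inside $f_p$ --- correspond one-to-one to the steps of the paper's proof.
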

\begin{proof}
Let $\mathcal M$ be a well-formed \hfmap{k} of a graph $G$ and refer to \cref{fig:map} for an illustration. A point $p$ of $\mathcal M$ is an \emph{$h$-point}, if $h>1$ nations intersect in $p$. Let $p$ be an $h$-point of $\mathcal M$ (if any) with $4 \le h \le k$. The operation of \emph{deleting} the $h$-point $p$ works as follows. Denote by $V_p = \{m_0, m_1, \dots, m_{h-1}\}$, the set of $h$ nations that intersect in $p$. Consider now a small open disk $D$ in $\mathcal M$ centered at $p$ such that any point in $D$ is either $p$, an interior point of a nation in $V_p$, or a point where exactly two nations of $V_p$ intersect. We shall indeed assume that, excluding point $p$ and up to a relabeling of the nations, the only adjacencies realized in $D$ are those between $m_i$ and $m_{i+1}$, for $i=0,1,\dots,h-1$ (indices taken modulo $h$). Clearly, for a sufficiently small radius, such disk always exists. Removing the parts of the nations in $D$ from $\mathcal M$ introduces a hole in the map, removes $p$, and does not introduce any new $h'$-point with $h'>3$. Let $\mathcal M'$ be the well-formed \map{3} obtained by deleting all $h$-points of $\mathcal M$ with $h>3$, and let $G'$ be the corresponding map graph. We aim at proving that $G$ admits a \framed{k} drawing $\Gamma$ having $G'$ as \skeleton. 

First of all note that $G'$ is simple, because $\mathcal M'$ is well-formed, and spanning, because we do not destroy any nation. Recall that, by definition of well-formed, any two adjacent vertices $u$ and $v$ are such that $\mathcal M'(u)$ and $\mathcal M'(v)$ intersect in either a single point $p_{uv}$ or in a curve segment $s_{uv}$. In the latter case, we denote by $p_{uv}$ an arbitrary $2$-point of $\mathcal M'$ along $s_{uv}$.

Let $\Gamma'$ be a drawing of $G'$ obtained by representing each vertex $u$ as an interior point $p_u$ of $\mathcal M'(u)$, and each edge $(u,v)$ as a Jordan arc that starts a $p_u$, traverses $\mathcal M'(u)$ until $p_{uv}$, and finally traverses  $\mathcal M'(v)$ ending in $p_v$.  The circular order of the edges around a vertex $u$ is kept the same as the circular order of the corresponding points $p_{uv}$ around $\mathcal M'(u)$; this ensures that no two arcs intersect in an interior point of a nation. On the other hand, the only point where two Jordan arcs may intersect is a $3$-point (if it exists). In such a case it suffices to slightly perturb the curves around such $h$-point so to avoid any crossing. Thus $\Gamma'$ does not contain any crossing. Note that $\Gamma'$ is a spherical drawing, in what follows we consider its stereographic projection onto the plane, i.e., we view $\Gamma'$ as a planar drawing. Concerning the size of the largest face of $G'$, observe that the maximum degree of a face of $G'$ (including the unbounded face) cannot be larger than the greatest number of nations that intersect the same hole of $\mathcal M'$, which is at most $k$ by construction. 

It remains to prove that: (a) $G'$ is biconnected, and (b) all edges of $\overline{E} = G \setminus G'$ can be drawn entirely inside faces of $G'$ and are all crossed.

Concerning (a), if there existed a vertex $v$ whose removal disconnects $G'$, this would imply that the original map $\mathcal M'$ contains a hole that intersects $\mathcal M'(v)$ at least twice and (at least) two nations whose corresponding vertices in $G'$ are connected only by paths containing $v$. (Recall that any nation is homeomorphic to a closed disk, i.e., it intersects neither holes nor other nations in its interior.) However, such hole does not exist, because by construction any hole of $\mathcal M'$ intersects a set of nations $m_0, m_1, \dots, m_{h-1}$ whose induced graph contains a cycle.


Concerning (b), recall that any edge $(u,v)$ in $\overline{E}$ connects two nations that intersect in $\mathcal M$ and do not intersect anymore in $\mathcal M'$. In particular, there exists at least one hole in $\mathcal M'$ intersecting $\mathcal M'(u)$ and $\mathcal M'(v)$. When constructing $\Gamma'$ from $\mathcal M'$, such hole yields a face in $\Gamma'$ having both $u$ and $v$ on its boundary. Thus, we can draw a copy of $(u,v)$ inside each such face. This concludes the proof.
\end{proof}

\noindent The next theorem extends the proof of \cref{lem:wellformed} and, together with \cref{th:main}, implies \cref{th:map}.

\begin{theorem}\label{th:map}
Every \map{k} graph is \pframed{2k}.
\end{theorem}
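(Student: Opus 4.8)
The plan is to reduce to the already-established \cref{lem:wellformed} by turning an arbitrary \map{k} into a well-formed \hfmap{2k} of a \emph{supergraph} of $G$. Concretely, starting from a map $\mathcal M$ of $G$ in which at most $k$ nations meet at every point, I would construct a well-formed hole-free map $\mathcal M^*$ in which at most $2k$ nations meet at every point, and which realizes a graph $H$ with $G \subseteq H$. Since $\mathcal M^*$ is then a well-formed \hfmap{2k}, \cref{lem:wellformed} yields that $H$ is \framed{2k}; as $G$ is a subgraph of $H$, it is \pframed{2k} by definition, which together with \cref{th:main} proves the theorem.

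First I would make $\mathcal M$ well-formed: whenever the boundaries of two nations meet in more than a single point or a single curve segment, a local retraction of one of the two nations at all but one of these contacts removes the redundant intersections while keeping the pair adjacent; this destroys no edge of $G$ and does not increase the number of nations passing through any point. The crucial step is then to eliminate the holes. For each hole $f$ I would insert a \emph{dummy} nation $N_f$ occupying the closure of $f$; this only adds vertices and edges, so the realized graph $H$ still contains $G$, and the map becomes hole-free. The key point is to bound the new point multiplicities: consider an $h$-point $p$ with $h \le k$, and read off the cyclic sequence of \emph{sectors} of a small disk around $p$, each sector belonging either to a nation or to a hole. No two hole-sectors can be consecutive, since a shared boundary of two holes would lie on no nation boundary and the two holes would in fact be a single region; hence around $p$ the hole-sectors are at most as many as the nation-sectors, that is, at most $h$. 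Consequently the dummy nations newly passing through $p$ number at most $h$, so $p$ becomes an $h'$-point with $h' \le h + h = 2h \le 2k$, and $\mathcal M^*$ is a \map{2k}.

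The hardest part will be the bookkeeping needed to keep $\mathcal M^*$ \emph{well-formed} when a hole boundary is not a simple cycle: if a nation $m$ bounds a single hole $f$ along two disjoint arcs, the single dummy $N_f$ would touch $m$ along two segments. I would handle this by first subdividing each such hole with an auxiliary dummy curve whose two endpoints are chosen at smooth (i.e.\ $2$-point) boundary locations, so that the split raises the multiplicity only at these low-degree points and thus stays within the $2k$ budget; once every hole has a simple boundary, each dummy nation meets every neighbor in a single segment, as required by well-formedness. Finally, the biconnectivity of the \skeleton and the bound on its face degrees are not reproved from scratch: they are precisely what \cref{lem:wellformed} delivers once $\mathcal M^*$ is recognized as a well-formed \hfmap{2k}, so the argument closes by invoking that lemma with $k$ replaced by $2k$.
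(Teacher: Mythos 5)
Your overall reduction is sound in outline (produce a well-formed \hfmap{2k} of a supergraph $H \supseteq G$, invoke \cref{lem:wellformed}, and conclude that $G$ is \pframed{2k}), and your sector-counting argument at an $h$-point---hole-sectors alternate with nation-sectors, so at most $h$ holes touch $p$ and the multiplicity at most doubles---is exactly the paper's bound $h+\chi \le 2h \le 2k$. But your very first step contains a genuine gap: you claim that local retractions make $\mathcal M$ well-formed while ``destroying no edge of $G$''. This fails precisely when a redundant contact between nations $m$ and $m'$ is an $h$-point $p$ with $h>2$: retracting $m$ (or $m'$) near $p$ removes that nation from $p$ entirely, and if $p$ is the \emph{unique} contact between $m$ and some third nation $m''$ meeting there, the edge $(m,m'')$ of $G$ is destroyed. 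The paper is careful on exactly this point: it retracts only at $2$-points and along curve segments, \emph{excluding} $h$-points with $h>2$, and therefore obtains only a \emph{nearly} well-formed map in which multiple adjacencies through high-degree points persist. Nothing in your later construction removes these---after hole-filling, two original nations may still meet at two distinct $3$-points---so the map you hand to \cref{lem:wellformed} need not satisfy its well-formedness hypothesis, and the lemma cannot be used as a black box. The paper's way out is structurally different: it keeps the map nearly well-formed, caps each high-degree point that touches holes with a \emph{small local} dummy nation (rather than filling whole holes), deletes all $h$-points with $h>3$ as in the proof of \cref{lem:wellformed}, and then repairs the two properties that the surviving holes break---biconnectivity of the \skeleton and the face-degree bound---at the graph level, by stellating large faces with dummy vertices. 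Some repair of this kind appears unavoidable, and your purely map-level plan does not circumvent it.

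There are also secondary problems with filling each hole by a nation occupying its closure: a hole is merely the closure of a connected uncovered region, so it need not be homeomorphic to a closed disk (it can be an annulus, or touch a point with several sectors, in which case your $N_f$ is not a legal nation and would meet $p$ with two sectors). Your subdivision trick addresses the one instance you mention, but it would also be needed for non-simply-connected holes and for dummy-versus-nation multiple adjacencies (a sub-hole dummy can meet the same nation along a segment and, elsewhere on its boundary, at an isolated point). Moreover the endpoint bookkeeping fails at the boundary case $k=2$: a $2$-point endpoint receives the auxiliary curve plus the two sub-hole dummies and becomes a $5$-point, exceeding $2k=4$ (choosing endpoints interior to a single nation's boundary arc would fix this). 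These latter issues look repairable with more case analysis; the failure of well-formedness at $h$-points with $h>2$ is the structural gap.
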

\begin{proof}
\begin{figure}[tb!]
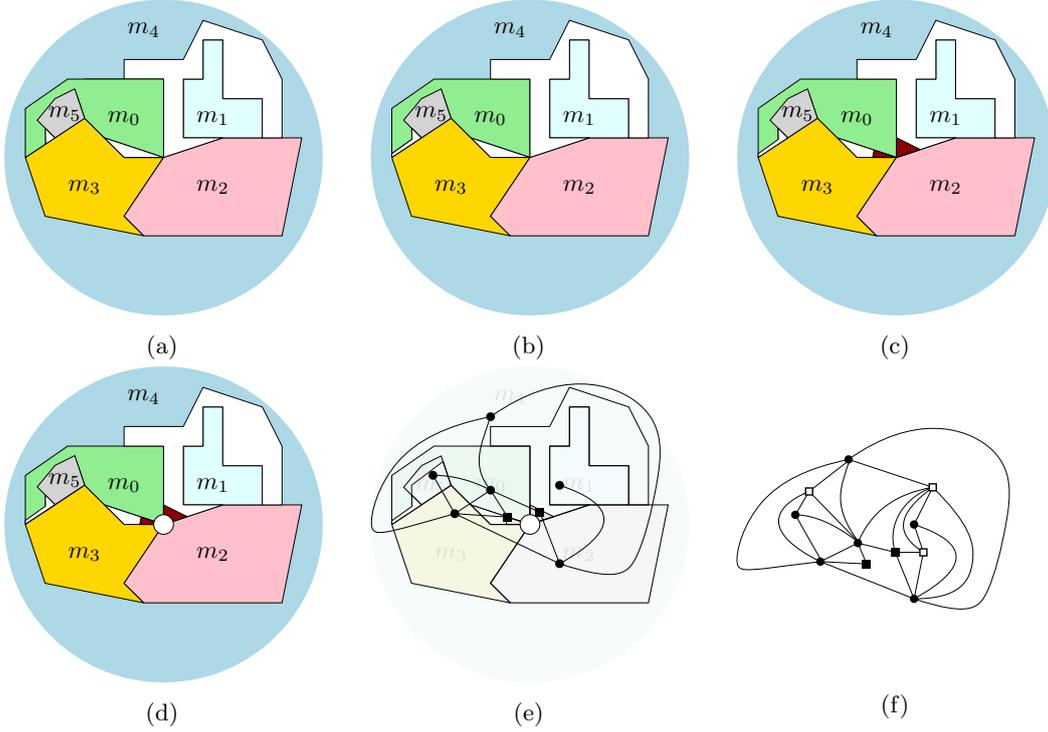

		\centering
		\begin{subfigure}{.32\textwidth}
			\centering
			\includegraphics[width=.9\textwidth,page=4]{figures/map}
			\subcaption{}
		\end{subfigure}
		\begin{subfigure}{.32\textwidth}
			\centering
			\includegraphics[width=.9\textwidth,page=5]{figures/map}
			\subcaption{}
		\end{subfigure}
		\begin{subfigure}{.32\textwidth}
			\centering
			\includegraphics[width=.9\textwidth,page=6]{figures/map}
			\subcaption{}
		\end{subfigure}
		\begin{subfigure}{.32\textwidth}
			\centering
			\includegraphics[width=.9\textwidth,page=7]{figures/map}
			\subcaption{}
		\end{subfigure}
		\begin{subfigure}{.32\textwidth}
			\centering
			\includegraphics[width=.9\textwidth,page=8]{figures/map}
			\subcaption{}
		\end{subfigure}
		\begin{subfigure}{.32\textwidth}
			\centering
			\includegraphics[width=.9\textwidth,page=9]{figures/map}
			\subcaption{}
		\end{subfigure}
		\caption{%
			Illustration for the proof of \cref{th:map}. (a) A \map{3} graph that is not well-formed, in particular $m_0$ and $m_3$ have two distinct adjacencies. (b) Removing multiple adjacencies by introducing holes. The resulting graph is a well-formed \map{3} graph $G$ with only one $h$-point with $h=3$ that intersects two holes. (c) The well-formed \map{5} graph $G'$ obtained by adding a dummy nation such that no $h$-point with $h>2$ intersects a hole. (d) Deleting $h$-points with $h>3$. (e) A drawing of the \skeleton $\Gamma'$; vertices representing dummy nations are black squares. (f) Augmenting the \skeleton to make it biconnected; the inserted dummy vertices are white squares. }
		\label{fig:map-2}
	\end{figure}
Refer to \cref{fig:map-2} for an illustration.
Let $\mathcal {M}_0$ be a \map{k} of a graph $G$. 
We first aim at turning $\mathcal {M}_0$ into a \emph{nearly well-formed} \map{k} $\mathcal M$ of $G$, i.e., a \map{k} in which multiple adjacencies occur only in presence of $h$-points with $h>2$
Recall that each intersection between two nations is either a single $h$-point ($h \ge 2)$ or a curve segment. 
If any two nations intersect at most once, then $\mathcal M = \mathcal {M}_0$. 
Else, let $m$ and $m'$ be two nations that intersect $r>1$ times, and consider any intersection that is a $2$-point or a curve segment, excluding from this segment possible $h$-points with $h>2$. We can remove each such intersection between $m$ and $m'$ by locally retracting $m$ (or $m'$). Such operation introduces a hole (which can possibly merge with some other holes) in place of the intersection between $m$ and $m'$ and does not destroy any other intersection because we avoided $h$-points with $h>2$. The resulting  \map{k} $\mathcal M$ may not be well-formed yet, however it is nearly well-formed.

Under this assumption, the proof of \cref{lem:wellformed} can be adjusted as follows.  
Observe that an $h$-point $p$ in $\mathcal M$ touches at most $h$ holes. 
For each $h$-point $p$ in $\mathcal M$ that touches  $\chi \le h$ holes and with $h>2$, we introduce a sufficiently small \emph{dummy} nation such that $p$ becomes an $(h+\chi)$-point that does not touch holes anymore (this operation does not introduce new $h$-points with $h>2$). After this preliminary operation, we let $\mathcal M^*$ be the resulting map and $G^* \supseteq G$ be the corresponding map graph. We remark that $G^*$ does not contain any new edge connecting two vertices of $G$. Moreover, $\mathcal M^*$ is a \map{2k}, which is still nearly well-formed. We then apply the procedure in the proof of \cref{lem:wellformed}. Namely, we first delete all $h$-points with $h>3$, which implies that the resulting map is now well-formed. We then compute a drawing $\Gamma'$ of the \skeleton of the resulting graph $G'$.  The proof of \cref{lem:wellformed} ended by showing how to reinsert the edges in $\overline{E} = G^* \setminus G'$ inside their corresponding faces of $\Gamma'$ so to create a \framed{2k} drawing $\Gamma$ of $G$. Before applying this last step, we observe that the absence of holes was used in the proof to guarantee that the size of each face of $\Gamma'$ is at most $2k$ and that $G'$ is biconnected. We show that, after a suitable augmentation of $G'$, both properties still hold. 

A face $f$ of $\Gamma'$ is \emph{large} if the size of $f$ is greater than $3$ and $f$ does not contain crossing edges in the final drawing $\Gamma$ of $G^*$ (i.e., $f$ is generated by a disk inserted to eliminate an $h$-point with $h>2$). Let $f$ be a large face; the \emph{stellation operation} of $f$ works as follows. We insert a vertex $v_f$ inside $f$ and connect it to all vertices on the boundary of $f$ by drawing the new edges inside $f$ without creating edge crossings; if a vertex of $f$ is a cut-vertex, we connect it to $v_f$ only once. The stellation operation removes $f$ and creates new faces of size strictly smaller than the size of $f$. By repeatedly applying this operation until there are no large faces we obtain a \skeleton $G''$ such that the boundary of each face is a simple cycle, which implies that $G''$ is biconnected. Also, the size of a face of $G''$ is at most $3$ if it does not contain crossing edges in $\Gamma$, and at most $2k$ otherwise. By finally reintroducing the crossing edges inside the faces of $G''$ of size (at most) $2k$, we obtain a \framed{2k} drawing of a \framed{2k} graph, which is a super graph of the input graph $G$. This proves that every \map{k} graph is \pframed{2k}. 
\end{proof}

We conclude this section by giving the following simple result, which implies that the book thickness of \framed{k} graphs (and hence of \pframed{k} graphs) is bounded by the book thickness of \map{k} graphs. 

\begin{theorem}
Every \framed{k} graph is a \map{k} graph, under the assumption that each face of the \skeleton induces a clique of size $k$.
\end{theorem}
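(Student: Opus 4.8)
The plan is to show that, under the clique assumption, the graph $G$ is precisely the \emph{shared-face graph} of its \skeleton $\sigma(G)$, and then to realize this relation by a map whose points of high multiplicity sit at the face centers of $\sigma(G)$.

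First I would establish the following equivalence for any two distinct vertices $u,v$ of a \framed{k} graph $G$ with drawing $\Gamma$: we have $(u,v)\in E(G)$ if and only if $u$ and $v$ lie on a common face of $\sigma(G)$. The forward direction uses only the \framed{k} structure: if $(u,v)$ is a crossing-free edge, then it lies on the boundary of the two faces of $\sigma(G)$ it separates; if it is a crossing edge, then by definition it is drawn inside some face $f$, whence both $u$ and $v$ lie on $\partial f$. The backward direction is exactly where the clique assumption enters: if $u$ and $v$ share a face $f$, then since the vertices of $\partial f$ induce a clique, $(u,v)\in E(G)$. Since $\sigma(G)$ is biconnected, every face is bounded by a simple cycle, so each vertex appears at most once on each face boundary and each face $f$ has $\deg(f)\le k$ incident vertices.

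Next I would build the map $\mathcal M$ on the sphere, viewing $\Gamma$ and $\sigma(G)$ there so that the unbounded face is treated like any other face. For each face $f$ of $\sigma(G)$ I pick an interior point $c_f$, and for each vertex $v$ I define the nation $\mathcal M(v)$ to be the union of a tiny closed disk $D_v$ centered at $v$ together with, for every face $f$ incident to $v$, a thin closed wedge $W_{v,f}\subseteq f$ joining $D_v$ to $c_f$. Arranging the wedges around $v$ in the cyclic order of the incident faces makes each $\mathcal M(v)$ a star-shaped topological disk. Choosing the disks small and the wedges thin, the interiors of distinct nations are disjoint, and the only points where two or more nations meet are the centers $c_f$: at $c_f$ exactly the $\deg(f)\le k$ nations of the vertices of $\partial f$ come together. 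Thus at most $k$ nations meet in any single point, so $\mathcal M$ is a \map{k}; moreover two nations touch if and only if their vertices share a face, which by the equivalence above is exactly the adjacency relation of $G$, while the leftover regions (the central parts of faces and the strips along edges) become the holes of $\mathcal M$.

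The steps that need the most care are topological rather than combinatorial. I must verify that each $\mathcal M(v)$ is genuinely homeomorphic to a closed disk (this is why the wedges are attached in the cyclic order of the incident faces and kept pairwise disjoint except at $c_f$), and, most importantly, that the construction creates no spurious adjacency: two nations may only meet at a common center $c_f$, never along a crossing-free edge nor in the interior of a face, so that the touching relation reproduces $E(G)$ exactly and no point is shared by more than $k$ nations. I expect this verification -- showing that thin wedges converging at each $c_f$ can be drawn disjointly and that no other contacts arise -- to be the main, though routine, obstacle; the combinatorial heart, namely that adjacency coincides with sharing a face, follows directly from the clique assumption. (If $G$ carries parallel crossing edges, the statement is understood for its underlying simple graph, since adjacency in a map is a binary relation.)
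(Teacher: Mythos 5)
Your proof is correct and follows the same overall strategy as the paper: construct the map geometrically from the \framed{k} drawing by replacing each vertex with a thin star-shaped nation, and realize each face clique at a single $h$-point with $h \le \deg(f) \le k$ (legitimate because $\sigma(G)$ is biconnected, so each face boundary is a simple cycle with exactly $\deg(f)$ distinct vertices). The difference lies in where the contacts are placed. The paper's nations extend along each incident crossing-free edge up to its midpoint, so the \skeleton is first turned into a (well-formed) \map{2} whose holes correspond exactly to the faces; the crossing edges are then realized by creating one $h$-point inside each hole. You instead concentrate \emph{all} contacts at one interior point $c_f$ per face: nations never touch along edges, and even the adjacencies given by skeleton edges are realized at the $h$-points of their shared faces. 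Both variants are valid, since the theorem asks only for a \map{k} and not a well-formed one. Your version buys a cleaner combinatorial core, as you isolate and prove the equivalence that $(u,v)\in E(G)$ if and only if $u$ and $v$ share a face of $\sigma(G)$ --- which the paper leaves implicit --- at the harmless cost of a map in which two adjacent nations may touch twice (at both face centers flanking a skeleton edge). The paper's two-stage version keeps the planar skeleton realized as a well-formed 2-map, which is slightly more structured, but it needs the same final $h$-point step and the same clique assumption, so the two arguments are of essentially equal strength.
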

\begin{proof}
Consider a \framed{k} drawing $\Gamma$ of a \framed{k} graph $G$. Let $\Gamma'$ be the \skeleton of $\Gamma$. As already said, we shall assume that each face of $\Gamma'$ induces a clique of size $k$ in $\Gamma$. We replace each vertex $v$ of $\Gamma$ with a sufficiently thin star-shaped nation that includes each curve representing an each edge $(u,v)$ up to the midpoint of such curve. Since $\Gamma'$ is planar, this operation transforms $\Gamma'$ into a \map{2} $\mathcal M'$. Observe that a face of size $h \le k$ in $\Gamma'$ corresponds to a hole in $\mathcal M'$. Thus, the crossing edges of $\Gamma$ can be easily reintroduced by creating an $h$-point inside each such hole.
\end{proof}

\section{Conclusions and open problems}\label{sec:open}
Our research generalizes a fundamental result by Yannakakis in the area of book embeddings. To achieve $O(k)$ pages for \pframed{k} graphs, we exploit the special structure of these graphs which allows us to model the conflicts of the crossing edges by means of a graph with bounded chromatic number (thus keeping the unavoidable relationship with $k$ low).

Even though our result only applies to a subclass of $h$-planar graphs, it provides useful insights towards a positive answer to the intriguing question of determining whether the book thickness of (general) $h$-planar graphs is bounded by a function of $h$ only. 

Another natural question that stems from our research is whether \map{k} graphs are \pframed{k}, and in particular, whether  \cref{th:map} can be improved. 

A third direction for extending our result is to drop the biconnectivity requirement of \pframed{k} graphs.

We conclude by mentioning that the time complexity of our algorithm is $O(k^2 n)$, assuming that a \framed{k} drawing of the considered graph is also provided.  It is of interest to investigate whether (partial) \framed{k} graphs can be recognized in polynomial time. 
The question remains valid even for the class of optimal $2$-planar graphs, which exhibit a quite regular structure. In relation to this question, Brandenburg~\cite{DBLP:journals/algorithmica/Brandenburg19} provided a corresponding linear-time recognition algorithm for the class of optimal $1$-planar graphs, while Da Lozzo et al.~\cite{DBLP:conf/isaac/LozzoJKR14} showed that the related question of determining whether a graph admits a planar embedding whose faces have all degree at most $k$ is polynomial-time solvable for $k \le 4$ and NP-complete for $k \ge 5$.

\paragraph{Acknowledgements.} This work began at the Dagstuhl Seminar 19092 ``Beyond-Planar Graphs: Combinatorics, Models and Algorithms'' (February 24 - March 1, 2019). We thank the organizers and the participants for useful discussions and feedback. We also thank the useful comments of the anonymous referees of this paper; in particular, we acknowledge one referee for suggesting us to explore the relationship between \framed{k} and $k$-map graphs.

\medskip\noindent Research by M. A. Bekos is partially supported by DFG grant KA812/18-1. Research by G. Da Lozzo is partially supported by MSCA-RISE project ``CONNECT'', N$^\circ$~734922, and by MIUR, grant 20174LF3T8 ``AHeAD: efficient Algorithms for HArnessing networked Data''. Research by F. Montecchiani is partially supported by MIUR, grant 20174LF3T8 ``AHeAD: efficient Algorithms for HArnessing networked Data'', and Dip. Ingegneria Univ. Perugia, grant RICBA19FM: ``Modelli, algoritmi e sistemi per la visualizzazione di grafi e reti''.

\bibliographystyle{plainurl} \bibliography{stacks}

\end{document}